\documentclass[11pt]{article}
\usepackage{hyperref}
\usepackage{mathpazo}
\usepackage{amssymb,amsmath,amsthm}
\usepackage{epsfig}
\usepackage{tcolorbox}
\usepackage{enumitem}
\usepackage{booktabs}

\tcbset{ 
  halign=justify, 
  center, 
  colback=gray!10, 
  colframe=black, 
  boxrule=0.5pt 
}

\newcommand{\SD}{\mathop{\mathrm{SD}}}

\newtheorem{theorem}{Theorem}[section]
\newtheorem{proposition}[theorem]{Proposition}
\newtheorem{definition}[theorem]{Definition}

\newtheorem{lemma}[theorem]{Lemma}

\newtheorem{remark}[theorem]{Remark}

\def\lsoft{{l\kern-0.035cm\char39\kern-0.03truecm}}

\newcommand{\qedsymb}{\hfill{\rule{2mm}{2mm}}}
\renewenvironment{proof}[1][]{\begin{trivlist}
\item[\hspace{\labelsep}{\bf\noindent Proof#1:\/}] }{\qedsymb\end{trivlist}}

\def\calA{{\cal A}}
\def\calB{{\cal B}}
\def\calC{{\cal C}}

\def\calF{{\cal F}}

\def\calP{{\cal P}}

\def\R{\mathbb{R}}

\newcommand{\eps}{\epsilon}
\renewcommand{\epsilon}{\varepsilon}

\begin{document}

\title{{\bf Fair Allocation under Conflict Constraints \\ via Strong Colorability}}

\author{
Ishay Haviv\thanks{The Academic College of Tel Aviv-Yaffo, Tel Aviv, Israel.}
}

\date{}

\maketitle

\begin{abstract}
In the fair allocation problem under conflict constraints, the goal is to partition the vertices of a graph among agents in a fair manner, such that no two adjacent vertices are assigned to the same agent. We study this problem for agents with common preferences through the lens of three fairness criteria: stochastic-dominance envy-freeness up to one item for preference orders (SD-EF1), envy-freeness up to one item for monotone additive valuations (EF1), and envy-freeness up to one item from each side for general additive valuations (EF[1,1]). To do so, we introduce a hierarchy of variants of the strong chromatic number, a graph quantity introduced independently by Alon and Fellows in the early nineties. Our results reveal a close connection between fair allocation under conflict constraints and the first two levels of this hierarchy, providing a unified route to both existential and algorithmic results.

For SD-EF1, we fully characterize the number of agents needed to guarantee a fair allocation of a given graph for every common preference order. For EF1 and EF[1,1], we provide analogous sufficient conditions, extending a result on path graphs due to Equbal, Gurjar, Igarashi, Kumar, Manurangsi, Nath, Saxena, Vaish, and Yoneda. We also show that, unlike in the SD-EF1 setting, the sufficient conditions for EF1 and EF[1,1] are not necessary in general. Our framework yields existential and algorithmic consequences in terms of the maximum degree. We obtain that every graph with maximum degree $\Delta$ admits SD-EF1, EF1, and EF[1,1] allocations for common preferences whenever the number of agents is at least $3\Delta-1$. We further provide, for any fixed $\eps>0$, deterministic polynomial-time algorithms that find such allocations whenever the number of agents is at least $(3+\eps)\Delta$. These guarantees strengthen earlier work by Barman and Viswanathan on equitable colorings.
\end{abstract}

\newpage

\section{Introduction}

The {\em fair allocation} problem concerns the task of partitioning a set of indivisible resources among several agents with diverse preferences. This problem has received sustained attention across a variety of research fields, including mathematics, economics, and computer science, and arises naturally in numerous real-world applications, such as allocating computing resources among users and assigning course seats to students. In the standard setting, we are given a set $M$ of resources, referred to as {\em items}, and a set of $\ell$ agents, each endowed with preferences over the subsets of $M$. An {\em allocation} of $M$ to the agents is a sequence $(A_1, \ldots, A_\ell)$ of pairwise disjoint sets whose union is $M$, where $A_i$ denotes the set of items assigned to agent $i$. Given the agents' preferences, the objective is to find an allocation that satisfies a prescribed fairness criterion.

One natural concept of fairness, introduced by Bogomolnaia and Moulin~\cite{BogomolnaiaM01} and further developed by Aziz, Gaspers, Mackenzie, and Walsh~\cite{AzizGMW15}, relies on the notion of stochastic dominance to compare assignments. In this setting, each agent $i \in [\ell]$ is associated with a weak order $\succeq_i$ on $M$, representing an ordinal ranking of the items. An allocation is called {\em stochastic-dominance envy-free} (SD-EF) if the assignment of each agent $i \in [\ell]$ stochastically dominates the assignment of any other agent under her preference order $\succeq_i$, meaning that for every item $\alpha \in M$, the agent's assignment includes at least as many items weakly preferred to $\alpha$ under $\succeq_i$ as any other agent's assignment does. However, this condition is often too demanding and is rarely achievable. This gives rise to the relaxation of this notion, called {\em stochastic-dominance envy-freeness up to one item} (SD-EF1), which allows a hypothetical removal of at most one item from the other agent's assignment before this comparison (see, e.g.,~\cite{BarmanELS25,CooksonE025}).

Another common way to model preferences in fair allocation is through valuation functions, where each agent $i \in [\ell]$ is equipped with a valuation $w_i : \calP(M) \to \R$ that assigns to every subset of $M$ a real value representing the extent to which she desires that subset. The {\em envy-freeness} requirement in this setting asks each agent to prefer her own assignment at least as much as that of any other agent, namely, for all $i, i' \in [\ell]$, we require that $w_i(A_i) \geq w_i(A_{i'})$. As before, this notion is naturally relaxed to that of {\em envy-freeness up to one item} (EF1), first proposed by Budish~\cite{Budish11}, which demands that no agent prefers the assignment of another agent over her own if we hypothetically exclude at most one item from the other agent's assignment, namely, for all $i,i' \in [\ell]$, there exists a set $S \subseteq A_{i'}$ with $|S| \leq 1$ such that $w_i(A_i) \geq w_i(A_{i'} \setminus S)$. When the valuations are non-decreasing, meaning that adding items to a set cannot decrease its value, such an allocation is known to exist and can be efficiently found~\cite{LiptonMMS04}. For additive valuations that may assign both positive and negative values to individual items, it is natural to consider the fairness notion of {\em envy-freeness up to one item from each side} (EF[1,1]), introduced by Shoshan, Hazon, and Segal-Halevi~\cite{ShoshanHS23}. Here, each agent is required to prefer her own assignment to that of any other agent, under a removal of at most one item from each assignment, namely, for all $i,i' \in [\ell]$, there exist sets $S_1 \subseteq A_i$ and $S_2 \subseteq A_{i'}$ with $|S_1| \leq 1$ and $|S_2| \leq 1$, such that $w_i(A_i \setminus S_1) \geq w_i(A_{i'} \setminus S_2)$.

In addition to fairness considerations, in various scenarios one has to take into account feasibility constraints that forbid certain allocations due to inherent limitations. Examples of such constraints include cardinality constraints~\cite{BiswasB18}, capacity constraints~\cite{ShoshanHS23}, matroid constraints~\cite{BiswasB19}, balanced assignment sizes~\cite{CaragiannisKMPS19}, and connectivity~\cite{Igarashi23} (see~\cite{Suksompong21} for a survey). In the present paper, we focus on {\em conflict constraints}, which forbid certain pairs of items from being assigned to the same agent. As an example, consider the task of allocating artworks to exhibition rooms, where certain pairs of artworks cannot be displayed in the same room because of curatorial or spatial constraints. A framework for studying such situations, proposed by Chiarelli, Krnc, Milani\v{c}, Pferschy, Piva\v{c}, and Schauer~\cite{ChiarelliKMPPS23} and by Hummel and Hetland~\cite{HummelH22}, represents the potential conflicts by a graph $G$ on the item set $M$, called the {\em conflict graph}, in which adjacent items cannot be assigned to the same agent. An agent's assignment is called {\em feasible} with respect to the conflict graph $G$ if it forms an independent set in $G$, and an allocation to $\ell$ agents is called {\em feasible} if each agent's assignment is feasible, or equivalently, if its sets are the color classes of a proper $\ell$-coloring of $G$. The question then becomes whether a given conflict graph $G$ admits a feasible allocation that satisfies a prescribed fairness property with respect to the agents' preferences.

This question was investigated by Hummel and Hetland~\cite{HummelH22} for EF1 allocations under non-decreasing additive valuations. They proved that if the number of agents is at least the size of a largest connected component in the conflict graph $G$, then there exists a feasible EF1 allocation of $G$ for every choice of non-decreasing additive valuations, and such an allocation can be found in polynomial time. On the other hand, they showed that if the number of agents is at most the maximum degree of $G$, then some common non-decreasing additive valuation for the agents precludes feasible EF1 allocations. The authors of~\cite{HummelH22} further showed that, for certain graphs, feasible EF1 allocations may fail to exist even when the number of agents exceeds the maximum degree and when the agents have a common non-decreasing additive valuation. They posed the problem of obtaining a finer characterization of the instances that admit feasible EF1 allocations.

The problem of fair allocation under conflict constraints was further explored by Equbal, Gurjar, Igarashi, Kumar, Manurangsi, Nath, Saxena, Vaish, and Yoneda~\cite{EGIKMNSVY26} (see also~\cite{KumarEGNV24,IMY25}). They considered the relaxed setting of {\em partial maximal} allocations of graphs, in which some items may remain unassigned as long as they are adjacent to an item in each agent's assignment. For two agents, they proved that every graph admits a feasible partial maximal EF1 allocation for any non-decreasing valuations, and that in the additive case, such an allocation can be found in polynomial time. They also showed that, for three or more agents, every path graph admits a (complete) feasible EF[1,1] allocation whenever the agents share a common additive valuation. Interestingly, their proof of this result relies on the celebrated `cycle plus triangles' theorem, conjectured by Erd\H{o}s~\cite{Erdos90} and proved by Fleischner and Stiebitz~\cite{FleischnerS92,FleischnerS97}. The theorem asserts that a graph on $3m$ vertices whose edge set is the union of a Hamilton cycle and $m$ pairwise vertex-disjoint triangles is $3$-colorable (see also~\cite{Sachs93}).\footnote{The formulation in~\cite{Erdos90,FleischnerS92} assumes that the Hamilton cycle and the triangles are {\em edge-disjoint}. This version implies the one stated here. Indeed, for each edge shared by the cycle and a triangle, replace it on the cycle with a path on six new internal vertices, add two triangles on alternating triples of these vertices, and keep the original edge only in its triangle. Applying the edge-disjoint version to the resulting graph gives a coloring that restricts to the desired coloring of the original graph.} More recently, Markakis and Samaris~\cite{MarkakisS26} studied feasible EF1 allocations under conflict constraints for non-decreasing additive valuations, focusing on regimes where the number of items is bounded in terms of the number of agents and the maximum degree of the conflict graph.
               
Before turning to our contribution, let us present the notion of the {\em strong chromatic number}, which was introduced in the early nineties independently by Alon~\cite{Alon92} and Fellows~\cite{Fellows90} and plays a central role in this work. The strong chromatic number of a graph $G$, denoted $\chi_s(G)$, is the least positive integer $\ell$ such that, after adding fewer than $\ell$ isolated vertices to $G$ so that the number of vertices becomes divisible by $\ell$, for every partition of the augmented vertex set into parts of size exactly $\ell$, the resulting graph admits a proper $\ell$-coloring that assigns distinct colors to the vertices within each part (see Definition~\ref{def:chi_s}). For example, the aforementioned `cycle plus triangles' theorem~\cite{FleischnerS92} can be formulated as $\chi_s(C_{3m}) \leq 3$ for every positive integer $m$, where, as usual, $C_{m}$ denotes the cycle graph on $m$ vertices. Another variant of the strong chromatic number was proposed by Gutner and Tarsi~\cite{Gutner92,GutnerT09} and has been employed in several subsequent works, e.g.,~\cite{Ohman,AharoniBZ07,GrafH20}. This variant, which we denote by $\chi_s^{\infty}(G)$, is defined similarly, except that instead of adding isolated vertices for divisibility, one considers the original graph $G$ and requires the coloring condition for every partition of its vertex set into parts of size at most $\ell$. A discussion of these related notions is provided by \"Ohman in~\cite[Section~6]{Ohman}.

A major theme in the study of the strong chromatic number concerns its relation to the maximum degree. Alon proved in~\cite{Alon92} that there exists a constant $c>0$, such that for every graph $G$ with maximum degree $\Delta \geq 1$, it holds that $\chi_s(G) \leq c \cdot \Delta$. This result was tightened by Haxell~\cite{Haxell04} to $\chi_s(G) \leq 3\Delta-1$. The `strong $2\Delta$-colorability' conjecture, explicitly stated by Aharoni, Berger, and Ziv~\cite{AharoniBZ07}, asserts that the bound can be replaced by $2\Delta$. If true, this bound would be tight, as witnessed by the complete bipartite graph with parts of size $\Delta$ (see, e.g.,~\cite{AxenovichM06}). On the algorithmic front, a result of Harris~\cite{Harris23}, building on a series of results~\cite{HarrisS17,GrafH20,GrafHH22}, states that for any fixed $\eps >0$, there exists a deterministic polynomial-time algorithm that, given a graph $G$ with maximum degree $\Delta$ and a partition of its vertex set into parts of size $\ell$, where $\ell$ is a positive integer satisfying $\ell \geq (3+\eps) \Delta$, finds a proper $\ell$-coloring of $G$ assigning distinct colors to the vertices of each part.

\subsection{Our Contribution}

The present paper studies fair allocation under conflict constraints for agents with common preferences, focusing on three fairness criteria: SD-EF1, EF1, and EF[1,1]. For a given conflict graph, we aim to determine the numbers of agents for which a feasible fair allocation is guaranteed to exist under every common preference. Our pivotal conceptual contribution lies in revealing intimate connections between these questions and variants of the strong chromatic number of the conflict graph. To this end, we introduce a hierarchy of strong chromatic numbers and show that two members of this hierarchy, which we refer to as the strong chromatic numbers of types~$1$ and~$2$, are closely tied to the fairness notions studied here. This viewpoint enables us to isolate the algorithmic core of the allocation problems and thereby obtain efficient allocation algorithms in several settings. We now turn to a detailed description of our contribution, summarized in Table~\ref{tab:summary_results}.

We begin by introducing our hierarchy of strong chromatic numbers. For a positive integer~$r$, the {\em strong chromatic number of type $r$} of a graph $G$, denoted $\chi^r_s(G)$, is the least positive integer $\ell$ such that for every partition of the vertex set of $G$ into parts of size at most $\ell$, with at most $r$ parts of size strictly smaller than $\ell$, there exists a proper $\ell$-coloring of $G$ that assigns distinct colors to the vertices of each part (see Definition~\ref{def:chi_s^r}). Note that when the limitation on the number of parts of size smaller than $\ell$ is dropped, we recover the quantity $\chi_s^\infty(G)$ studied in~\cite{Gutner92,GutnerT09}. Comparing these quantities with the strong chromatic number $\chi_s(G)$ of~\cite{Alon92,Fellows90}, it turns out that every graph $G$ satisfies
\begin{eqnarray}\label{eq:chain}
\chi_s^1(G) \leq \chi_s^2(G) \leq \chi_s(G) \leq \chi_s^\infty(G).
\end{eqnarray}
For a detailed study of the hierarchy, including a proof of this chain of inequalities, see Section~\ref{sec:chi_s^r}.

Our first main result concerns the fairness criterion of stochastic-dominance envy-freeness up to one item (SD-EF1). We provide a full characterization of the numbers of agents that guarantee a feasible SD-EF1 allocation with respect to every common weak order. The following theorem shows that the strong chromatic number of type $1$ of the conflict graph is the precise threshold.

\begin{theorem}\label{thm:intro_SD}
Let $G=(V,E)$ be a graph, and let $\ell$ be a positive integer.
\begin{enumerate}
  \item\label{itm:1} If $\ell \geq \chi_s^1(G)$, then for every weak order $\succeq$ on $V$, the graph $G$ admits a feasible SD-EF1 allocation to $\ell$ agents with respect to the common weak order $\succeq$.
  \item\label{itm:2} If $\ell < \chi_s^1(G)$, then there exists a weak order $\succeq$ on $V$, such that the graph $G$ admits no feasible SD-EF1 allocation to $\ell$ agents with respect to the common weak order $\succeq$.
\end{enumerate}
\end{theorem}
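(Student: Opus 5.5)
The plan rests on a characterization of SD-EF1 for a common weak order $\succeq$ in terms of the equivalence classes of $\succeq$. List the classes from most to least preferred as $B_1, B_2, \ldots, B_t$. For an allocation $(A_1,\ldots,A_\ell)$ and each $j$, set $P_j = B_1 \cup \cdots \cup B_j$. SD-EF1 requires, for all $i,i'$, a set $S \subseteq A_{i'}$ with $|S|\le 1$ such that $|A_i \cap P_j| \ge |(A_{i'}\setminus S)\cap P_j|$ for every $j$. The best choice of $S$ is the most preferred item of $A_{i'}$, since removing it decreases every prefix count in which it appears. Removing that item lowers $|A_{i'}\cap P_j|$ by one for exactly those $j$ at or beyond its class.

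I first establish the following reformulation, assuming for now that $B_1$ has at least $\ell$ items. Then SD-EF1 is equivalent to the prefix counts $|A_i \cap P_j|$ differing by at most one across agents, for every $j$. I would try to prove the general version directly: SD-EF1 holds iff for every $j$ and all $i,i'$ we have $|A_{i'}\cap P_j| \le |A_i\cap P_j| + 1$, except that the ``$+1$'' slack may be used only when $A_{i'}\cap P_j\neq\emptyset$. The exception is automatic when $|A_{i'} \cap P_j|\ge 1$, so the condition reads $\max_i |A_i\cap P_j| - \min_i |A_i \cap P_j|\le 1$ for every $j$. I need to check this carefully, because removing the top item helps only for $j$ at or beyond its class. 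It works out: for $j$ before that class we have $|A_{i'}\cap P_j|=0$, so nothing needs to hold. Hence \emph{SD-EF1 $\iff$ every prefix $P_j$ is split as evenly as possible among the agents}.

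\textbf{Item~\ref{itm:1}.} Given $\ell \ge \chi_s^1(G)$, note first that $\chi_s^1$ is monotone in $\ell$; the paper's Section~\ref{sec:chi_s^r} presumably records that the type-$1$ property for $\chi_s^1(G)$ passes to every larger $\ell$, and I would invoke that. Take any linear extension of $\succeq$, listing $V$ as $v_1,\ldots,v_n$. Cut it into consecutive blocks of size $\ell$, where only the last block may be smaller. This partition has parts of size at most $\ell$ with at most one smaller part, so by the definition of $\chi_s^1$ there is a proper $\ell$-coloring that is rainbow on each block. Its color classes form a feasible allocation. For any prefix $P_j$, which is a prefix of the linear order, the full blocks inside $P_j$ contribute exactly one item to each agent. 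The remaining partial block contributes at most one item to each agent. So the prefix counts differ by at most one, and SD-EF1 follows by the characterization.

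\textbf{Item~\ref{itm:2}.} If $\ell<\chi_s^1(G)$, there is a bad partition $Q_1,\ldots,Q_m$ of $V$ into parts of size at most $\ell$, with at most one part of size less than $\ell$, admitting no rainbow proper $\ell$-coloring. Order the parts with the small one (if any) last, and define the weak order by making each $Q_k$ an equivalence class, with $Q_1 \succ Q_2 \succ \cdots$. Suppose a feasible SD-EF1 allocation existed. By the characterization, the prefix $Q_1\cup\cdots\cup Q_k$ is split evenly for each $k$. For $k<m$ its size is $k\ell$, so each agent gets exactly $k$ items of it. Differencing consecutive prefixes, each agent receives exactly one item of $Q_k$ for $k<m$, and at most one item of $Q_m$: the prefix $Q_1\cup\cdots\cup Q_m$ has size at most $m\ell$, so the counts $m-1+x_i$ with $x_i\ge 0$ and $\max-\min\le1$ force $x_i\le1$. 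Thus the allocation is a rainbow proper coloring, a contradiction.

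The main obstacle is stating and verifying the even-split characterization precisely, including the edge case where the removed item lies in a later class. I expect this to go through since the prefix counts are monotone in $j$. The monotonicity of the type-$1$ property in $\ell$ is the other point I would need to confirm from Section~\ref{sec:chi_s^r}.
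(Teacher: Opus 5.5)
Your proposal is correct and follows the paper's proof. Your even-split characterization is exactly Lemma~\ref{lemma:SD<=1}, the monotonicity you invoke for Item~\ref{itm:1} is Lemma~\ref{lemma:monotone_chi_r}, and both items use the same ingredients as the paper: the consecutive-block partition for Item~\ref{itm:1} and the weak order whose classes are the parts of a bad partition for Item~\ref{itm:2}. The one difference is in Item~\ref{itm:2}, where you get ``one item per part'' directly by counting the prefixes $Q_1\cup\cdots\cup Q_k$ and differencing; the paper instead goes through Lemma~\ref{lemma:SD_Pj} and shows the resulting ordered partition must coincide with $(Q_1,\ldots,Q_t)$, so your version is a slight streamlining.
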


The characterization given in Theorem~\ref{thm:intro_SD} crucially relies on the assumption that all agents share a common weak order. For heterogeneous weak orders, the strong chromatic number of type $1$ no longer provides a sufficient condition for the existence of feasible SD-EF1 allocations. This already occurs for graphs with this quantity equal to two, as follows from constructions obtained in somewhat different settings by Barman, Ebadian, Latifian, and Shah~\cite{BarmanELS25} and by Cookson, Ebadian, and Shah~\cite{CooksonE025}. For details, see Proposition~\ref{prop:3K2+K1}.

We next consider the fairness notion of envy-freeness up to one item (EF1) for agents sharing a common monotone additive valuation, where by monotone we mean either non-decreasing or non-increasing. As a consequence of Item~\ref{itm:1} of Theorem~\ref{thm:intro_SD}, we obtain that the strong chromatic number of type $1$ again yields a sufficient condition for the existence of feasible fair allocations.

\begin{theorem}\label{thm:intro_EF1_upper}
Let $G=(V,E)$ be a graph, and let $\ell$ be an integer satisfying $\ell \geq \chi_s^1(G)$. Then, for every monotone additive valuation $w:\calP(V) \to \R$, the graph $G$ admits a feasible EF1 allocation to $\ell$ agents with respect to the common valuation $w$.
\end{theorem}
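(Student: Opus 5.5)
The plan is to derive the statement from Item~\ref{itm:1} of Theorem~\ref{thm:intro_SD}. We show that, for a common monotone additive valuation, an allocation that is SD-EF1 with respect to a suitable weak order induced by $w$ is automatically EF1 with respect to $w$.

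Write $w(\{v\})$ as $w(v)$, so that $w(S)=\sum_{v\in S} w(v)$ by additivity.

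\emph{Non-decreasing case.} Here $w(v)\ge 0$ for all $v$, and we define the weak order $u \succeq v$ iff $w(u)\ge w(v)$. Since $\ell \geq \chi_s^1(G)$, Theorem~\ref{thm:intro_SD} yields a feasible allocation $(A_1,\ldots,A_\ell)$ that is SD-EF1 with respect to $\succeq$. Fix $i,i'$. SD-EF1 gives a set $S\subseteq A_{i'}$ with $|S|\le 1$ such that, for every threshold item $\alpha$,
\[
|\{v\in A_i : v\succeq\alpha\}| \ge |\{v\in A_{i'}\setminus S : v\succeq \alpha\}|.
\]
This says that the multiset of values of $A_i$ stochastically dominates that of $A_{i'}\setminus S$ in the counting sense. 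We then write
\[
w(B)=\sum_{t} (t_j - t_{j+1})\cdot |\{v\in B: w(v)\ge t_j\}|,
\]
where $t_1>t_2>\cdots$ are the distinct values, and set $t_{k+1}=0$ below the smallest value. This is a layer-cake decomposition with nonnegative coefficients, so it gives $w(A_i)\ge w(A_{i'}\setminus S)$. The nonnegativity of the values is what makes the lowest layer coefficient nonnegative.

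\emph{Non-increasing case.} Here all $w(v)\le 0$, and the natural move is to reverse the order: $u \succeq v$ iff $w(u)\le w(v)$, so that the "most preferred" items are the most harmful chores. Applying SD-EF1 with this order compares the wrong direction, since we need agent $i$'s bundle to be not too bad, not the other bundle to be not too good. I therefore expect this case to be the main obstacle. The first thing I would try is to use SD-EF1 in both directions simultaneously: the guarantee holds for all ordered pairs, so it also bounds $A_i$ against $A_{i'}$ minus one item. Concretely, let $c(v)=-w(v)\ge 0$ be the cost and order items by $c$. SD-EF1 applied to the pair $(i',i)$ gives $T\subseteq A_i$ with $|T|\le1$ such that $A_{i'}$ dominates $A_i\setminus T$ in every upper level set of cost. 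The same layer-cake argument then gives $c(A_{i'})\ge c(A_i\setminus T)$, that is, $w(A_i\setminus T)\ge w(A_{i'})$.

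This is EF1 for chores in the standard sense, where the removed item is taken from one's own bundle. If the paper's definition instead requires removing from the other's bundle, I would check which convention it adopts. If needed, I would replace the argument by noting that the type-$1$ condition permits a coloring in which the parts are consecutive blocks of the sorted order with the single short part placed at the cheap end. This is the standard round-robin structure, and it yields EF1 in whichever direction is needed. I would verify this alternative directly via the block structure rather than through Theorem~\ref{thm:intro_SD}.
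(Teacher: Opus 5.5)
Your proposal is correct and follows the paper's proof. You apply Item~\ref{itm:1} of Theorem~\ref{thm:intro_SD} to the weak order induced by the valuation and convert SD-EF1 to EF1 by the layer-cake argument (the paper's Lemma~\ref{lemma:SD->EF1}), handling the non-increasing case by passing to $-w$; your explicit use of the swapped pair $(i',i)$ is exactly what justifies the paper's one-line reduction, and your conclusion $w(A_i\setminus T)\ge w(A_{i'})$ already matches the paper's convention for non-increasing valuations (EF[1,1] with $S_2=\emptyset$), so the fallback you sketch is unnecessary.
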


For arbitrary additive valuations, we consider the fairness notion of envy-freeness up to one item from each side (EF[1,1]). In this setting, the strong chromatic number of type $2$ provides a sufficient condition for the existence of feasible fair allocations.

\begin{theorem}\label{thm:intro_EF11_upper}
Let $G=(V,E)$ be a graph, and let $\ell$ be an integer satisfying $\ell \geq \chi_s^2(G)$. Then, for every additive valuation $w:\calP(V) \to \R$, the graph $G$ admits a feasible EF[1,1] allocation to $\ell$ agents with respect to the common valuation $w$.
\end{theorem}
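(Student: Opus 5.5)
We build the allocation from a partition of $V$ into consecutive blocks of a sorted order, and then invoke the definition of $\chi_s^2(G)$. Since $\chi_s^2(G)$ is monotone in the sense that a graph with $\chi_s^2(G)\le \ell$ satisfies the defining property for $\ell$ as well (this should follow from the study of the hierarchy in Section~\ref{sec:chi_s^r}; if not, we restrict to $\ell$ for which the property holds, which is what the definition gives at $\ell=\chi_s^2(G)$ and which we expect is shown for all larger $\ell$), it suffices to use the property for $\ell$ directly.

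Let $w$ be additive, with item values $w(v)$. Split $V$ into $P=\{v: w(v)\ge 0\}$ and $N=\{v:w(v)<0\}$.
\begin{itemize}
\item Sort $P$ in non-increasing order of value and cut it into consecutive blocks of size $\ell$, starting from the most valuable items, so that only the last block $B_P$ may be smaller than $\ell$.
\item Sort $N$ in non-decreasing order of value (most negative first) and cut it into consecutive blocks of size $\ell$, starting from the most negative items, so that only the last block $B_N$, containing the negatives closest to zero, may be smaller.
\end{itemize}
This partition of $V$ has parts of size at most $\ell$ with at most two parts of size less than $\ell$. By $\ell\ge\chi_s^2(G)$, there is a proper $\ell$-coloring giving distinct colors within each part. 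We let agent $i$ receive color class $i$; the allocation is feasible, and each agent receives at most one item from each block.

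For EF[1,1], fix agents $i,i'$. We handle the two signs separately.
\begin{itemize}
\item \emph{Positive items.} Write the positive blocks as $Q_1,\dots,Q_p$ in decreasing order. Agent $i'$'s item in $Q_j$ is worth at most the minimum of $Q_{j-1}$, which is at most $i$'s item in $Q_{j-1}$, provided $i$ has an item there. Every full block gives every agent an item, and all blocks before $Q_p$ are full. So after removing $i'$'s item in $Q_1$ (call it $S_2$), we get $w(A_{i'}\cap P\setminus S_2)\le w(A_i\cap P)$ by the standard round-robin shifting argument.
\item \emph{Negative items.} Write the negative blocks as $R_1,\dots,R_q$ with $R_1$ most negative. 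Symmetrically, $i$'s item in $R_j$ for $j\ge2$ is at least the max of $R_{j-1}$, which is at least $i'$'s item in $R_{j-1}$, and the blocks $R_1,\dots,R_{q-1}$ are full. Removing $i$'s item in $R_1$ (call it $S_1$) gives $w(A_i\cap N\setminus S_1)\ge w(A_{i'}\cap N)$.
\end{itemize}
Adding the two inequalities yields $w(A_i\setminus S_1)\ge w(A_{i'}\setminus S_2)$ with $|S_1|,|S_2|\le1$, as required.

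The main subtlety, and the reason two exceptional parts are needed, is placing the short blocks. In $P$ the short block must be the least valuable one, and in $N$ it must be the least negative one; only then do full blocks precede every comparison. The shifting comparisons need $i$ to hold an item in each block $Q_{j-1}$ and $i'$ to hold one in each block $R_{j-1}$, which holds since those blocks are full. The chaining inequalities within each sign class are then routine.
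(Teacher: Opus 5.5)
Your proof is correct and follows the paper's argument: you use the same goods/chores split into consecutive sorted blocks with the short block last on each side, apply strong $(\ell,2)$-colorability, and add the two one-sided EF1 inequalities (the paper's Lemma~\ref{lemma:EF1->EF11}). Your direct shifting argument only replaces the paper's route through SD-EF1 (Lemmas~\ref{lemma:SD_Pj} and~\ref{lemma:SD->EF1}). The monotonicity you hedge on is exactly Lemma~\ref{lemma:monotone_chi_r}, so the argument holds for every $\ell \geq \chi_s^2(G)$ with no restriction on $\ell$.
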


For monotone additive valuations, the notions of EF1 and EF[1,1] are equivalent. Therefore, Theorem~\ref{thm:intro_EF11_upper} also yields a sufficient condition for the existence of feasible EF1 allocations for such valuations. However, there exist graphs for which the strong chromatic number of type $1$ is strictly smaller than that of type $2$. For example, for every integer $m \geq 2$, the complete bipartite graph $K_{m,m}$ with $m$ vertices in each part satisfies $\chi_s^1(K_{m,m}) = \lceil 3m/2 \rceil$, whereas $\chi_s^2(K_{m,m}) = 2m$ (see Lemma~\ref{lemma:K_m,m}). For such graphs, Theorem~\ref{thm:intro_EF1_upper} provides a tighter condition in the monotone setting. On the other hand, for general additive valuations, the dependence on the strong chromatic number of type $2$ in Theorem~\ref{thm:intro_EF11_upper} cannot be weakened to the strong chromatic number of type $1$. This is again demonstrated by the family of complete bipartite graphs (see Proposition~\ref{prop:Km,m}).

Theorem~\ref{thm:intro_EF11_upper} extends and strengthens earlier results in several directions. First, it significantly extends the aforementioned result of~\cite{EGIKMNSVY26}, which establishes the corresponding guarantee for path graphs, a graph family with strong chromatic number of type $2$ bounded by three (see Lemma~\ref{lemma:chi_s^2_cycles}). It also strengthens, in the common valuation setting, the sufficient condition for feasible EF1 allocations given in~\cite{HummelH22}. The latter requires the number of agents to be at least the maximum size of a connected component in the graph, a quantity that forms an upper bound on the strong chromatic number (of any type) and may be arbitrarily larger (see Lemma~\ref{lemma:chi_s^inf<=C}, derived from~\cite{Ohman}). Moreover, the result of~\cite{HummelH22} applies to non-decreasing additive valuations, whereas Theorem~\ref{thm:intro_EF11_upper} applies to arbitrary additive valuations.

The sufficient conditions for feasible fair allocations, given in Theorems~\ref{thm:intro_SD},~\ref{thm:intro_EF1_upper}, and~\ref{thm:intro_EF11_upper}, imply such conditions in terms of the maximum degree of the conflict graph. Indeed, by combining them with Haxell's bound~\cite{Haxell04} on the strong chromatic number and with the inequalities in~\eqref{eq:chain}, it follows that every graph $G$ with maximum degree $\Delta$ admits feasible SD-EF1, EF1, and EF[1,1] allocations to $\ell$ agents with respect to common preferences of the relevant kind, provided that $\ell \geq 3\Delta-1$. Moreover, the `strong $2\Delta$-colorability' conjecture~\cite{AharoniBZ07}, if true, would yield the sharper sufficient condition $\ell \geq 2\Delta$. These bounds are within a constant multiplicative factor of what is necessary in general, since for $\ell \leq \Delta$, there is an appropriate common preference for which $G$ admits no feasible fair allocation to $\ell$ agents. For SD-EF1, this follows from Item~\ref{itm:2} of Theorem~\ref{thm:intro_SD}, combined with the fact that $\chi_s^1(G) \geq \Delta+1$ (see Lemma~\ref{lemma:chi_s_1-D}). For EF1 and EF[1,1], this follows from the non-existence result of~\cite{HummelH22}.

In light of Theorems~\ref{thm:intro_EF1_upper} and~\ref{thm:intro_EF11_upper}, it is natural to ask whether the stated sufficient conditions on the number of agents, in terms of the strong chromatic numbers of types $1$ and $2$, respectively, are necessary for guaranteeing feasible fair allocations. For graphs whose relevant strong chromatic number is at most three, we show that the answer is positive, in the sense that fewer agents do not suffice for a feasible EF1 allocation with respect to a certain common non-decreasing additive valuation, and therefore do not suffice for EF[1,1] either (see Proposition~\ref{prop:chi_s^1=3}). However, already at value four, we show that the sufficient conditions are not necessary in general. This highlights a significant difference from the SD-EF1 setting.

\begin{theorem}\label{thm:intro_EF1_lower}
There exists a graph $G=(V,E)$ with $\chi_s^1(G) = 4$, such that for every monotone additive valuation $w:\calP(V) \to \R$, the graph $G$ admits a feasible EF1 allocation to three agents with respect to the common valuation $w$.
\end{theorem}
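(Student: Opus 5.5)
The plan is to exhibit an explicit small graph $G$ of maximum degree at most $3$, prove $\chi_s^1(G)=4$ directly, and then give a case analysis showing that for every monotone additive $w$ some proper $3$-coloring of $G$ is an EF1 allocation. The intuition is as follows. Item~\ref{itm:1} of Theorem~\ref{thm:intro_SD}, specialized to EF1, in effect requires the ``round-robin'' partition of the items sorted by value (consecutive triples, with one possibly shorter last part) to admit a rainbow $3$-coloring. EF1 only compares sums, however, so a bad triple can often be repaired by a different coloring tailored to the actual values.

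First, I would reduce to non-decreasing valuations. If $w$ is non-increasing, then $-w$ is non-decreasing, and the EF1 condition for chores is a mirror of the goods case. So I would either handle both directions with the same case analysis run on the reversed order, or choose $G$ so that the argument is symmetric under reversing the sorted order. Next, I would prove $\chi_s^1(G)\ge 4$ by exhibiting one partition of $V$ into triples, with at most one smaller part, that admits no rainbow proper $3$-coloring. The natural gadget is a diamond $K_4-e$: its two nonadjacent vertices are forced to share a color in every $3$-coloring, so a triple containing them is bad. For the upper bound $\chi_s^1(G)\le 4$, I would use $\Delta\le 3$ together with a direct argument, or Haxell-type reasoning; for small graphs, exhaustive checking suffices.

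The main step is the existence of EF1 allocations for three agents. I would sort the items $v_1\succeq v_2\succeq\dots$ by $w$. Whenever the round-robin triples $\{v_1,v_2,v_3\},\{v_4,v_5,v_6\},\dots$ admit a rainbow coloring, the argument behind Theorem~\ref{thm:intro_EF1_upper} gives EF1. Otherwise a forced pair (such as the two nonadjacent diamond vertices) lies inside one triple, and I would construct an alternative coloring by hand. The envy caused by giving both vertices of the forced pair to one agent must be compensated by giving the other two agents the next-highest items. The graph must therefore contain enough freely assignable vertices near each forced pair, and I would verify EF1 by comparing prefix sums.

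The main obstacle is choosing the graph so that this compensation is always possible. A bare diamond, or a diamond plus isolated vertices, fails: if the forced pair carries the two largest values and everything else is tiny, no compensation exists. So the forced-same-color pair must not be rigid across all colorings. Rather than a single diamond, I would design $G$ so that every $3$-coloring identifies some pair, but different pairs in different colorings. An example would be two or more diamonds sharing vertices, or a structure like the triangular prism or a $3$-regular graph on $6$--$8$ vertices. Then for any value ordering one can choose a coloring whose forced pairs avoid the top items. I would first search small cubic $3$-colorable graphs for this property. The final verification would be a finite case analysis over the relative order of the few ``dangerous'' vertices.
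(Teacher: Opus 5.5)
\textbf{There is a genuine gap: no graph is ever exhibited.} The theorem is a pure existence claim, so its whole content is a specific graph plus a verification of $\chi_s^1(G)=4$ and of the EF1 property. Your proposal never fixes a graph, and it defers the decisive step to a search over small cubic graphs followed by an unspecified finite case analysis.

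The search space you suggest is also not promising. The triangular prism, one of your named candidates, fails. Take triangles $a_1a_2a_3$ and $b_1b_2b_3$ with rungs $a_ib_i$. Up to relabeling agents, its only feasible allocations to three agents are $\{a_1,b_3\},\{a_2,b_1\},\{a_3,b_2\}$ and $\{a_1,b_2\},\{a_2,b_3\},\{a_3,b_1\}$. Give value $1$ to $a_1,a_3,b_2$ and $0$ to the rest. Then each allocation has a bundle of value $0$ next to a bundle containing two items of value $1$, so neither is EF1.

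Your planned upper bound is not supported either. For $\Delta=3$, Haxell's theorem only gives $\chi_s(G)\le 8$, and cubic graphs can have $\chi_s^1(G)>4$; for instance $\chi_s^1(K_{3,3})=5$ by Lemma~\ref{lemma:K_m,m}. So $\chi_s^1(G)\le 4$ would need a separate check for whatever graph you pick. Conversely, for cubic graphs the lower bound $\chi_s^1(G)\ge 4$ is already given by Lemma~\ref{lemma:chi_s_1-D}, so the diamond gadget is not needed there.

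\textbf{The missing idea.} Your intuition that the merged pair must not be rigid across colorings is exactly right, but it is realized by a much smaller object. The paper takes $G$ to be $C_4$ on $v_1,v_2,v_3,v_4$ plus an isolated vertex $z$.
\begin{itemize}
\item Upper bound: $\chi_s^1(G)\le 4$ because every component has at most four vertices (Lemma~\ref{lemma:chi_s^inf<=C}).
\item Lower bound: $\chi_s^1(G)\ge 4$ because the partition $\{v_1,v_3,z\},\{v_2,v_4\}$ forces all four cycle vertices to receive distinct colors. No diamond or degree argument is needed.
\item Reduction: it suffices to treat non-decreasing $w$. Passing to $-w$ is immediate, since EF1 for a common valuation quantifies symmetrically over all ordered pairs of agents.
\item Allocation: assume $v_1$ has minimum value on the cycle and $w(v_2)\le w(v_4)$. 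Allocate $\{v_1,v_3\},\{v_2\},\{v_4\}$, and add $z$ to the cheaper of the first two bundles.
\item Verification: set $T=\min(w(v_1)+w(v_3),w(v_2))$. Every bundle is worth at least $T$, and every bundle drops to at most $T$ after removing one item, which gives EF1.
\end{itemize}
The freedom to choose which diagonal of $C_4$ is merged, with $z$ as a balancing item, is precisely what your construction lacks.
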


\begin{theorem}\label{thm:intro_EF11_lower}
There exists a graph $G=(V,E)$ with $\chi_s^2(G) = 4$, such that for every additive valuation $w:\calP(V) \to \R$, the graph $G$ admits a feasible EF[1,1] allocation to three agents with respect to the common valuation $w$.
\end{theorem}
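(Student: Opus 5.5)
We need a graph $G$ with $\chi_s^2(G)=4$ such that every common additive valuation admits a feasible EF[1,1] allocation to three agents. The most natural candidate, in the spirit of the $K_{m,m}$ examples, is a small graph where type-$2$ strong $3$-colorability fails only because of a partition whose two short parts are badly placed, while the valuation can never exploit that failure. Concretely, I would try $G=K_{2,2}=C_4$ (or a small variant such as $C_4$ plus isolated vertices). By Lemma~\ref{lemma:K_m,m}, $\chi_s^2(K_{2,2})=4$, so only the allocation claim remains. If $C_4$ turned out to fail, I would pad it with a few isolated vertices. These give extra flexibility to the allocations, and I would recheck $\chi_s^2=4$ by exhibiting a bad partition into parts of size at most $3$ with at most two short parts.

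The allocation argument for $C_4$ with vertices $a_1,a_2$ on one side and $b_1,b_2$ on the other, and three agents, goes as follows. A feasible $3$-coloring puts $\{a_1,a_2\}$ or subsets of it together, and likewise for the $b$'s; no $a$ may share a bag with a $b$. Any feasible allocation to three agents therefore has one of the following forms:
\begin{itemize}
\item $\{a_1,a_2\},\{b_1,b_2\},\emptyset$;
\item $\{a_1,a_2\},\{b_i\},\{b_j\}$;
\item $\{a_i\},\{a_j\},\{b_1,b_2\}$.
\end{itemize}
Bags have size at most $2$, so EF[1,1] only needs checking between a bag $A$ and a bag $B$. Removing one item from each side leaves at most one item per side.

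The case analysis runs as follows. If a bag has a single item, it can drop it and compare $0$ against the other bag minus its best item. So the real constraints come from pairs of two-element bags, and from a two-element bag envying a singleton or the empty bag. I would choose among the three forms according to the signs and sizes of $w(a_1),w(a_2),w(b_1),w(b_2)$. Write $A$-side values $x_1\ge x_2$ and $B$-side values $y_1\ge y_2$, and assume by symmetry $x_2\ge y_2$.

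The plan is to use the form $\{a_1,a_2\},\{b_1\},\{b_2\}$. In it, agent $\{a_1,a_2\}$ can drop $a_2$ and compare $x_1$ against $0$ after the singleton drops its item. That requires $x_1\ge 0$; if $x_1<0$, I would instead compare $x_1$ against the rival's value. The key inequality each time is $\max$(own bag minus one item) $\ge \min$(other bag minus one item). Singletons comparing with the pair need $0\ge \min(x_1,x_2)$ or $w(b_i)\ge$ the remaining item.

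The main obstacle is exactly this sign bookkeeping. Since $\chi_s^2=4$ is witnessed by $K_{2,2}$ itself, I must verify that no valuation breaks all three forms simultaneously. I expect the check to reduce to a few inequalities among four reals, handled by choosing which side gets split into singletons: split the side whose smaller value is smaller. If $C_4$ fails, I would move to a larger example, for instance $C_4$ plus two isolated vertices, where the isolated items can balance bags.
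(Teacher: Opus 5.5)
\textbf{Gap: the allocation claim is false for $C_4$ and for every graph you propose.}

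Take $G=C_4=K_{2,2}$ with sides $\{a_1,a_2\}$, $\{b_1,b_2\}$ and the valuation $w(a_i)=1$, $w(b_i)=-1$. This is exactly the valuation of Proposition~\ref{prop:Km,m} with $m=2$.

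\begin{itemize}
\item Every feasible allocation to three agents has a bag containing $\{a_1,a_2\}$ or a bag containing $\{b_1,b_2\}$.
\item In the first case, that bag is worth at least $1$ after removing any single item. Every other bag contains no $a$'s, so it is worth at most $0$ even after its owner removes one item.
\item In the second case, the owner of $\{b_1,b_2\}$ can reach at most $-1$. Every other bag contains no $b$'s, so it is worth at least $0$ after removing any item.
\end{itemize}

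So EF[1,1] fails in all three of your forms. For instance, in your planned form $\{a_1,a_2\},\{b_1\},\{b_2\}$, the condition $0\geq\min(x_1,x_2)$ that you isolated is simply false.

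Your heuristic that ``the valuation can never exploit'' the bad partition is exactly backwards here. The partition $\{a_1,a_2\},\{b_1,b_2\}$ that witnesses $\chi_s^2(K_{2,2})=4$ is precisely the goods/chores split into two short blocks.

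Padding with isolated vertices does not help: the adversary gives them value $0$ and the same argument goes through. More generally, suppose $G$ contains a $4$-cycle $a_1b_1a_2b_2$ as a subgraph. In any proper $3$-coloring, the color sets of $\{a_1,a_2\}$ and $\{b_1,b_2\}$ are disjoint, so one of the two pairs is monochromatic. Then the $\pm1$ valuation (with $0$ elsewhere) defeats every feasible allocation to three agents. The ``sign bookkeeping'' you defer is therefore not a technicality.

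\textbf{What is missing: a $C_4$-free graph, and an argument that uses the values.}

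You need a $C_4$-free graph on which strong $(3,2)$-colorability fails, but where the failure can be circumvented by exploiting the actual values. The paper uses $C_7$, where $\chi_s^2(C_7)=4$ follows from $\chi_s(C_{3m+1})=4$ (Lemma~\ref{lemma:chi_s^2_cycles}). The allocation argument splits into two cases.

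\begin{itemize}
\item If $|V^+|\not\equiv 2\pmod 3$, the blocks of goods and chores have only one short part. Then $\chi_s^1(C_7)=3$ together with Lemma~\ref{lemma:EF11_P_Q} gives the allocation (Theorem~\ref{thm:C_3m+1_t}).
\item Otherwise, after negating $w$ if needed, $|V^+|=5$. Either some proper $3$-coloring separates $Q^+_1,Q^+_2,V^-$ (and Lemma~\ref{lemma:EF11_P_Q} applies), or these three sets are independent. In the latter case, up to symmetry they equal $\{v_1,v_3,v_5\},\{v_4,v_6\},\{v_2,v_7\}$. This configuration is settled by two explicit allocations, chosen according to whether $w(v_5)+\max(w(v_2),w(v_7))\geq\max(w(v_4),w(v_6))$.
\end{itemize}
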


To pave the way for our algorithmic results, we now describe the key connection between fair allocation and strong colorability, which underlies all our existential results and is inspired by an argument of~\cite{EGIKMNSVY26} for path graphs. We first explain this connection for the SD-EF1 fairness notion. Given a conflict graph $G$ and $\ell$ agents equipped with a common weak order, we arrange the vertices of $G$ according to this weak order, from most preferred to least preferred, and split the resulting list into consecutive blocks of size $\ell$, except possibly for the last block which may be smaller. It can be shown that every allocation that assigns one vertex from each full block to each agent, and assigns the vertices of the last block to distinct agents, is SD-EF1 (see Lemma~\ref{lemma:SD_Pj}). We therefore seek such an allocation that is also feasible, meaning that each agent receives an independent set of $G$. Such an allocation naturally corresponds to a proper $\ell$-coloring of $G$ whose color classes intersect each block in at most one vertex. The existence of such a coloring is guaranteed by the assumption $\ell \geq \chi_s^1(G)$, together with an appropriate monotonicity property of strong colorability (see Lemma~\ref{lemma:monotone_chi_r}). This, in turn, yields a feasible SD-EF1 allocation of $G$ to $\ell$ agents with respect to the given common weak order. For EF1 fairness, the allocation is obtained similarly, with the vertices arranged according to the weak order induced by the given common monotone additive valuation (see Lemma~\ref{lemma:EF1_Pj}). For EF[1,1], when the additive valuation is not necessarily monotone, we apply the same idea separately to vertices with non-negative values and to vertices with non-positive values (see Lemma~\ref{lemma:EF11_P_Q}). This produces at most two blocks of size strictly smaller than $\ell$, which leads to the stronger condition $\ell \geq \chi_s^2(G)$ used for EF[1,1].

The existential arguments described above reduce the algorithmic allocation problems to the following coloring task. The input consists of a graph $G$, a positive integer $\ell$, and a partition of the vertex set of $G$ into parts of size at most $\ell$, with at most one or two parts of size strictly smaller than $\ell$, depending on the fairness notion, and the goal is to find a proper $\ell$-coloring of $G$ that assigns distinct colors to the vertices of each part. Any efficient algorithm for this coloring problem yields efficient algorithms for the associated fair allocation problems (see Theorem~\ref{thm:algo_gen} for a formal statement). Although such a coloring is guaranteed to exist whenever $\ell$ is at least the relevant strong chromatic number, no efficient algorithm is known for finding one in general. In fact, even the algorithmic counterpart of the `cycle plus triangles' theorem~\cite{FleischnerS92}, for cycles $C_{3m}$ and parts of size three, is a long-standing open problem (see, e.g.,~\cite{FleischnerS97,AlonChallenges02,Haviv22-FISC}). Nonetheless, by combining our approach with an algorithm of Harris~\cite{Harris23}, we obtain the following algorithmic results in terms of the maximum degree, presented separately for each fairness notion.

\begin{theorem}\label{thm:intro_algos}
For any fixed $\eps >0$, there exists a deterministic polynomial-time algorithm for each of the following problems.
\begin{enumerate}
  \item\label{itm:algSD} Given a graph $G$ with maximum degree $\Delta$, a positive integer $\ell \geq (3+\eps)\Delta$, and a weak order $\succeq$ on the vertex set of $G$, find a feasible SD-EF1 allocation of $G$ to $\ell$ agents with respect to the common weak order $\succeq$.
  \item\label{itm:algEF1} Given a graph $G$ with maximum degree $\Delta$, a positive integer $\ell \geq (3+\eps)\Delta$, and a monotone additive valuation $w$ on the vertex set of $G$, find a feasible EF1 allocation of $G$ to $\ell$ agents with respect to the common valuation $w$.
  \item\label{itm:algEF11} Given a graph $G$ with maximum degree $\Delta$, a positive integer $\ell \geq (3+\eps)\Delta$, and an additive valuation $w$ on the vertex set of $G$, find a feasible EF[1,1] allocation of $G$ to $\ell$ agents with respect to the common valuation $w$.
\end{enumerate}
\end{theorem}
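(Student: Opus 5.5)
The plan is to reduce all three items to a single coloring task and then solve that task with Harris's algorithm~\cite{Harris23}. The task is: given $G$ with maximum degree $\Delta$, an integer $\ell \geq (3+\eps)\Delta$, and a partition of $V$ into parts of size at most $\ell$ with at most two parts of size strictly smaller than $\ell$, find a proper $\ell$-coloring of $G$ that assigns distinct colors within each part. The outer reduction is the one described before the statement, which the paper formalizes as Theorem~\ref{thm:algo_gen}. The input is handled according to which item we are in:
\begin{itemize}[nosep]
\item For Item~\ref{itm:algSD}, sort $V$ by the weak order $\succeq$, breaking ties arbitrarily, and cut the list into consecutive blocks of size $\ell$, the last possibly smaller.
\item For Item~\ref{itm:algEF1}, do the same with the weak order induced by the values $w(\{v\})$, sorted so that the monotone direction matches Lemma~\ref{lemma:EF1_Pj}.
\item For Item~\ref{itm:algEF11}, split $V$ into vertices of non-negative value and vertices of negative value, and block each side separately. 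This gives at most two short blocks.
\end{itemize}
A coloring of the required type, with color classes given to agents, is then SD-EF1, EF1, or EF[1,1] by Lemmas~\ref{lemma:SD_Pj}, \ref{lemma:EF1_Pj}, and~\ref{lemma:EF11_P_Q} respectively. It is feasible because the coloring is proper. Sorting and blocking take polynomial time.

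The core step is solving the coloring task. Harris's algorithm requires every part to have size exactly $\ell$, so we pad. For each short part of size $s<\ell$, add $\ell-s$ new isolated vertices and put them into that part. The augmented graph still has maximum degree $\Delta$ (or $0$; if $\Delta=0$, any coloring that is rainbow on parts works trivially). All its parts now have size exactly $\ell$, and we have added at most $2\ell$ vertices, so the input size stays polynomial. Harris's deterministic algorithm, applicable since $\ell \geq (3+\eps)\Delta$, returns a proper $\ell$-coloring of the augmented graph that is rainbow on every part. We then restrict it to $V$. The restriction stays proper, and it stays rainbow on each original part, since each original part is a subset of its padded part.

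The main obstacle is checking that the padding interacts correctly with the fairness lemmas. In particular, Lemmas~\ref{lemma:SD_Pj}--\ref{lemma:EF11_P_Q} should need only that each color class meets each block in at most one vertex. They should not need every agent to receive a vertex from the last block, since padding vertices may take some colors there. I expect this to hold, because the existential statements also tolerate short blocks. Two smaller points need care. In Item~\ref{itm:algEF11}, zero-valued vertices can go on either side, as in Lemma~\ref{lemma:EF11_P_Q}. Also, $\Delta$ is computed from $G$, not supplied as an input.
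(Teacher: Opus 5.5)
Your proposal is correct and follows the paper's proof: the paper also pads each short part with isolated vertices to size exactly $\ell$, calls Harris's algorithm (Theorem~\ref{thm:3Delta_algo}), restricts the coloring to $V$, and feeds the resulting $(\calF,1)$- and $(\calF,2)$-strong coloring routines into the reduction of Theorem~\ref{thm:algo_gen}, which is the reduction you spell out by hand. The concern you flag is not a real obstacle: a full block of size $\ell$ is untouched by padding and rainbow under $\ell$ colors, so it meets every color class exactly once, and Lemmas~\ref{lemma:SD_Pj}, \ref{lemma:EF1_Pj}, and~\ref{lemma:EF11_P_Q} ask only for at most one vertex per color class in the short blocks.
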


For particular graph families, our framework can sometimes yield efficient allocation algorithms with improved thresholds on the number of agents. We demonstrate this for the family of path graphs, addressed in~\cite{EGIKMNSVY26}, where the existential results guarantee feasible fair allocations already when the number of agents is at least three. Borrowing an algorithmic result of Fleischner and Stiebitz~\cite{FleischnerS97}, we derive that such allocations can be found efficiently when the number of agents is at least four, one agent above the existential threshold (see Theorem~\ref{thm:paths_algo}). This result generalizes an algorithm of~\cite{KumarEGNV24} that applies to path graphs with agents sharing a common two-valued monotone additive valuation.

We finally mention that our EF[1,1] results can also be formulated in terms of {\em equitable colorings} of vertex-weighted graphs. In this context, given a graph and real weights assigned to its vertices, one seeks a proper coloring whose color classes have balanced total weights, up to the removal of at most one vertex from each of the two compared color classes. This is precisely a feasible EF[1,1] allocation for a common additive valuation, upon identifying color classes with agents' assignments. Barman and Viswanathan~\cite{BarmanV26} studied this problem for graphs with non-negative vertex weights, also considering an approximate notion of equitability, and obtained existential and algorithmic guarantees in terms of the maximum degree. They asked whether exact equitability up to one vertex is achievable whenever the number of colors is at least a constant multiple of the maximum degree. Our results settle this question affirmatively, showing that every graph with maximum degree $\Delta$ admits a proper $\ell$-coloring satisfying the required balance property for every positive integer $\ell \geq 3\Delta-1$. Moreover, for any fixed $\eps > 0$, such a coloring can be found in deterministic polynomial time whenever $\ell \geq (3+\eps)\Delta$.

\begin{table}[t]
\centering
\renewcommand{\arraystretch}{1.25}
\begin{tabular}{@{}p{2.0cm}@{\hspace{0.45cm}} p{3.55cm} p{3.65cm} p{3.45cm}@{}}
\toprule
Fairness & Sufficiency & Necessity & Efficient algorithm \\
\midrule

SD-EF1
& $\ell \geq \chi_s^1(G)$ \newline
Theorem~\ref{thm:intro_SD}, Item~\ref{itm:1}
& Yes \newline
Theorem~\ref{thm:intro_SD}, Item~\ref{itm:2}
& $\ell \geq (3+\eps)\Delta$ \newline
Theorem~\ref{thm:intro_algos}, Item~\ref{itm:algSD} \\

EF1
& $\ell \geq \chi_s^1(G)$ \newline
Theorem~\ref{thm:intro_EF1_upper}
& Not in general \newline
Theorem~\ref{thm:intro_EF1_lower}
& $\ell \geq (3+\eps)\Delta$ \newline
Theorem~\ref{thm:intro_algos}, Item~\ref{itm:algEF1} \\

EF[1,1]
& $\ell \geq \chi_s^2(G)$ \newline
Theorem~\ref{thm:intro_EF11_upper}
& Not in general \newline
Theorem~\ref{thm:intro_EF11_lower}
& $\ell \geq (3+\eps)\Delta$ \newline
Theorem~\ref{thm:intro_algos}, Item~\ref{itm:algEF11} \\

\bottomrule
\end{tabular}
\caption{Summary of sufficiency, necessity, and algorithmic results for feasible fair allocations of a conflict graph $G$ with maximum degree $\Delta$ to $\ell$ agents with common preferences.}
\label{tab:summary_results}
\end{table}

\subsection{Outline}
The rest of this paper is organized as follows. In Section~\ref{sec:preliminaries}, we gather definitions and facts on fair allocation. In Section~\ref{sec:chi_s^r}, we introduce our hierarchy of strong chromatic numbers and establish some useful properties. In Section~\ref{sec:SD}, we study feasible SD-EF1 allocations for common weak orders and establish the characterization stated in Theorem~\ref{thm:intro_SD}. In Section~\ref{sec:EF}, we study feasible EF1 and EF[1,1] allocations for common additive valuations and establish Theorems~\ref{thm:intro_EF1_upper},~\ref{thm:intro_EF11_upper},~\ref{thm:intro_EF1_lower}, and~\ref{thm:intro_EF11_lower}. In Section~\ref{sec:algorithms}, we present our algorithmic applications, including Theorem~\ref{thm:intro_algos}. We end with Section~\ref{sec:remarks}, where we present concluding remarks and discuss avenues for further research.

\section{Preliminaries}\label{sec:preliminaries}

\subsection{Notation}

For a positive integer $n$, let $[n]$ denote the set of positive integers up to $n$. For a set $M$, let $\calP(M)$ denote the power set of $M$. A {\em partition} of $M$ is a collection of non-empty pairwise disjoint subsets of $M$, whose union is $M$.

\subsection{Fair Allocation}

Let $M$ be a finite set of indivisible items, and let $\ell$ be a positive integer. An {\em allocation of $M$ to $\ell$ agents} is an ordered sequence $(A_1, \ldots, A_\ell)$ of $\ell$ pairwise disjoint subsets of $M$, possibly empty, whose union is $M$. For each $i \in [\ell]$, $A_i$ is the set of items allocated to agent $i$.

We use two preference models. In the first, each agent is associated with a weak order on $M$, that is, a complete and transitive binary relation on $M$, possibly with ties. The corresponding fairness notion is stochastic-dominance envy-freeness up to one item, abbreviated SD-EF1. It is defined as follows.

\begin{definition}
Let $M$ be a finite set of items, and let $\succeq$ be a weak order on $M$. For two sets $A,B \subseteq M$, we say that $A$ {\em stochastically dominates} $B$ with respect to $\succeq$, and write $A \succeq^{\SD} B$, if for every $\alpha \in M$, the set $A$ includes at least as many items weakly preferred to $\alpha$ under $\succeq$ as $B$ does, namely,
\[ | A \cap S_\alpha| \geq |B \cap S_\alpha|,\]
where $S_\alpha = \{ x \in M \mid x \succeq \alpha\}$.
For a positive integer $\ell$, an allocation $(A_1, \ldots, A_\ell)$ of $M$ to $\ell$ agents is called {\em stochastic-dominance envy-free up to one item} (SD-EF1) with respect to weak orders $\succeq_1,\ldots, \succeq_\ell$ on $M$ if for all $i,i' \in [\ell]$, there exists a set $S \subseteq A_{i'}$ with $|S| \leq 1$, such that $A_i \succeq_i^{\SD} A_{i'} \setminus S$. If, in addition, the $\ell$ weak orders are all equal to a single weak order $\succeq$, we say that the allocation is SD-EF1 with respect to the common weak order $\succeq$.
\end{definition}

In the second preference model, each agent is associated with a valuation function on the subsets of $M$. We consider the fairness notion of envy-freeness up to one item from each side, abbreviated EF[1,1], defined as follows.

\begin{definition}\label{def:EF11}
Let $M$ be a finite set of items.
For a positive integer $\ell$, an allocation $(A_1, \ldots, A_\ell)$ of $M$ to $\ell$ agents is called {\em envy-free up to one item from each side} (EF[1,1]) with respect to valuations $w_1,\ldots, w_\ell: \calP(M) \to \R$ if for all $i,i' \in [\ell]$, there exist sets $S_1 \subseteq A_i$ and $S_2 \subseteq A_{i'}$ with $|S_1| \leq 1$ and $|S_2| \leq 1$, such that $w_i(A_i \setminus S_1) \geq w_i(A_{i'} \setminus S_2)$.
If, in addition, the $\ell$ valuations are all equal to a single valuation $w$, we say that the allocation is EF[1,1] with respect to the common valuation $w$.
\end{definition}

A valuation $w: \calP(M) \to \R$ is called {\em non-decreasing} if $w(A) \leq w(B)$ whenever $A \subseteq B \subseteq M$ and {\em non-increasing} if $w(A) \geq w(B)$ whenever $A \subseteq B \subseteq M$. It is called {\em monotone} if it is either non-decreasing or non-increasing. In both monotone cases, we refer to the corresponding special case of EF[1,1] from Definition~\ref{def:EF11} as EF1. In the former, the set $S_1$ in the definition can always be taken to be empty, and in the latter, the set $S_2$ can always be taken to be empty. We will be particularly interested in {\em additive} valuations $w: \calP(M) \to \R$, for which $w(A) = \sum_{x \in A}{w(x)}$ holds for every set $A \subseteq M$, where, by abuse of notation, $w(x) = w(\{x\})$. In particular, $w(\emptyset)=0$. In the additive case, an item $x \in M$ with $w(x) \geq 0$ is called a {\em good}, and an item $x \in M$ with $w(x) \leq 0$ is called a {\em chore}. Thus, an item with value $0$ may be treated as either a good or a chore. Note that an additive valuation is non-decreasing if all items are goods, and is non-increasing if all items are chores.

We will use the following standard relation between SD-EF1 and EF1.
Here and throughout, for an additive valuation $w: \calP(M) \to \R$, the {\em weak order $\succeq$ induced by $w$} is defined by setting, for all $x,y \in M$, that $x \succeq y$ if and only if $w(x) \geq w(y)$.

\begin{lemma}\label{lemma:SD->EF1}
Let $M$ be a finite set of items, let $\ell$ be a positive integer, let $w_1, \ldots, w_\ell: \calP(M) \to \R$ be non-decreasing additive valuations, and let $\succeq_1, \ldots, \succeq_\ell$ be the weak orders induced by $w_1, \ldots, w_\ell$, respectively. If an allocation of $M$ to $\ell$ agents is SD-EF1 with respect to $\succeq_1, \ldots, \succeq_\ell$, then it is also EF1 with respect to $w_1, \ldots, w_\ell$.
\end{lemma}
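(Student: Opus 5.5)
The plan is to fix agents $i,i'$, take the set $S\subseteq A_{i'}$ with $|S|\le 1$ supplied by SD-EF1, for which $A_i \succeq_i^{\SD} B$ where $B = A_{i'}\setminus S$, and show that $w_i(A_i) \ge w_i(B)$. Since $w_i$ is non-decreasing, the set $S_1$ in Definition~\ref{def:EF11} may be taken empty, so this inequality is exactly EF1 for this pair. Because $\succeq_i$ is induced by $w_i$, the set $S_\alpha=\{x : w_i(x)\ge w_i(\alpha)\}$ is an upper level set of $w_i$. The goal is therefore the standard fact that stochastic dominance implies a larger additive value for non-negative item values.

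First, list the distinct values of $w_i$ on $M$ as $v_1 > v_2 > \cdots > v_k \ge 0$; they are non-negative because each singleton satisfies $w_i(x) \ge w_i(\emptyset)=0$ by monotonicity. Set $v_{k+1}=0$. For a set $X$, let $T_j = \{x : w_i(x)\ge v_j\}$ and $c_j(X) = |X\cap T_j|$. Each $T_j$ equals $S_\alpha$ for any $\alpha$ with $w_i(\alpha)=v_j$, so the SD condition gives $c_j(A_i)\ge c_j(B)$ for every $j$.

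Next, apply the layer-cake (Abel summation) identity
\[ w_i(X) = \sum_{x\in X} w_i(x) = \sum_{j=1}^{k} (v_j - v_{j+1})\, c_j(X), \]
which holds because an item of value $v_m$ is counted in $c_j$ exactly for $j\ge m$, and these terms telescope to $v_m - v_{k+1} = v_m$. All coefficients $v_j - v_{j+1}$ are non-negative, since $v_{k+1}=0 \le v_k$. Comparing term by term then gives $w_i(A_i)\ge w_i(B)$, as needed.

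The main point needing care is the last coefficient $v_k - v_{k+1} = v_k$. Its non-negativity relies on the valuation being non-decreasing, which is what forces every item value to be non-negative. Without this, items of negative value would break the implication. Everything else is bookkeeping.
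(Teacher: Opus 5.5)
Your proposal is correct and follows essentially the same route as the paper: reduce to showing $A \succeq^{\SD} B \Rightarrow w(A)\ge w(B)$ for a non-decreasing additive $w$, then use the layer-cake decomposition over the distinct value levels, whose coefficients are non-negative because $\lambda_{r+1}=0$ and all values are non-negative. The only cosmetic difference is that you take the levels over all of $M$ rather than over $A\cup B$, which changes nothing.
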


\begin{proof}
It is sufficient to prove that for every non-decreasing additive valuation $w: \calP(M) \to \R$ and for the weak order $\succeq$ it induces, if two sets $A,B \subseteq M$ satisfy $A \succeq^{\SD} B$, then $w(A) \geq w(B)$.
Suppose that $A \succeq^{\SD} B$, and let $\lambda_1 > \cdots > \lambda_r$ denote the distinct values under $w$ of the items in $A \cup B$. Since $w$ is non-decreasing and additive, every item has a non-negative value, hence $\lambda_r \geq 0$. Set $\lambda_{r+1}=0$.
For every $j \in [r]$, let $A_j = \{x \in A \mid w(x) \geq \lambda_j\}$ and $B_j = \{x \in B \mid w(x) \geq \lambda_j\}$. Since $A \succeq^{\SD} B$, it follows that $|A_j| \geq |B_j|$ for all $j \in [r]$. Observe that $w(A) = \sum_{j=1}^{r}{(\lambda_j-\lambda_{j+1})|A_j|}$ and $w(B) = \sum_{j=1}^{r}{(\lambda_j-\lambda_{j+1})|B_j|}$. Since $\lambda_j-\lambda_{j+1} \geq 0$ for all $j \in [r]$, it follows that $w(A) \geq w(B)$. This completes the proof.
\end{proof}

The following lemma shows that separate EF1 allocations for goods and chores can be combined to an EF[1,1] allocation.

\begin{lemma}\label{lemma:EF1->EF11}
Let $M$ be a finite set of items, let $w: \calP(M) \to \R$ be an additive valuation, and consider a partition of $M$ into two sets $M^+$ and $M^-$, consisting of goods and chores, respectively, under $w$. For a positive integer $\ell$, let $\calB = (B_1, \ldots, B_\ell)$ be an EF1 allocation of $M^+$ to $\ell$ agents with respect to the restriction of $w$ to $M^+$, and let $\calC = (C_1, \ldots, C_\ell)$ be an EF1 allocation of $M^-$ to $\ell$ agents with respect to the restriction of $w$ to $M^-$. Then $(B_1 \cup C_1, \ldots, B_\ell \cup C_\ell)$ is an EF[1,1] allocation of $M$ to $\ell$ agents with respect to the common valuation $w$.
\end{lemma}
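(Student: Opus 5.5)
Set $A_i = B_i \cup C_i$ for each $i \in [\ell]$. I would first record that this is an allocation of $M$: the $B_i$ partition $M^+$, the $C_i$ partition $M^-$, and $M^+ \cap M^- = \emptyset$, so the sets $A_i$ are pairwise disjoint with union $M$. Next, fix $i,i' \in [\ell]$ and extract the two EF1 guarantees.

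On the goods side, $w$ restricted to $M^+$ is non-decreasing. By EF1 there is $S_2 \subseteq B_{i'}$ with $|S_2| \leq 1$ and
\[ w(B_i) \geq w(B_{i'} \setminus S_2). \]
On the chores side, $w$ restricted to $M^-$ is non-increasing. As noted after Definition~\ref{def:EF11}, the removed set can then be taken from the envious agent's own bundle, so there is $S_1 \subseteq C_i$ with $|S_1| \leq 1$ and
\[ w(C_i \setminus S_1) \geq w(C_{i'}). \]
The one point needing care is justifying this normalization from the EF[1,1]-style definition. In the non-increasing case, removing an item from $C_{i'}$ can only increase $w(C_{i'} \setminus \cdot)$, and removing an item from $C_i$ can only increase $w(C_i \setminus \cdot)$. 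Hence any witness pair $(S_1, S_2')$ for EF1 can be replaced by $(S_1, \emptyset)$. The goods case is symmetric: a pair $(S_1', S_2)$ can be replaced by $(\emptyset, S_2)$, because $w(B_i \setminus S_1') \leq w(B_i)$.

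Finally, I would combine the two inequalities using additivity and disjointness. Since $S_1 \subseteq C_i \subseteq A_i$ and $S_2 \subseteq B_{i'} \subseteq A_{i'}$, we get
\[ w(A_i \setminus S_1) = w(B_i) + w(C_i \setminus S_1) \geq w(B_{i'} \setminus S_2) + w(C_{i'}) = w(A_{i'} \setminus S_2). \]
This exhibits sets $S_1 \subseteq A_i$ and $S_2 \subseteq A_{i'}$ of size at most one, as Definition~\ref{def:EF11} requires. The argument is short, and its only subtle step is ensuring that the chore-side removal lands in $A_i$ while the good-side removal lands in $A_{i'}$, which is exactly what the monotonicity normalization provides.
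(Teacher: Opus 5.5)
Your proof is correct and follows the paper's argument essentially verbatim: you take the goods-side witness $S_2 \subseteq B_{i'}$ and the chores-side witness $S_1 \subseteq C_i$, then combine them via additivity to obtain $w((B_i \cup C_i)\setminus S_1) \geq w((B_{i'} \cup C_{i'})\setminus S_2)$. Your explicit justification that monotonicity lets each removal be placed on the required side is the same normalization the paper records right after Definition~\ref{def:EF11}.
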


\begin{proof}
Let $i,i' \in [\ell]$. Since $\calB$ is EF1 with respect to the non-decreasing restriction of $w$ to $M^+$, there exists a set $S^+ \subseteq B_{i'}$ with $|S^+| \leq 1$, such that $w(B_i) \geq w(B_{i'} \setminus S^+)$. Since $\calC$ is EF1 with respect to the non-increasing restriction of $w$ to $M^-$, there exists a set $S^- \subseteq C_{i}$ with $|S^-| \leq 1$, such that $w(C_i \setminus S^-) \geq w(C_{i'})$. By the additivity of $w$, it follows that
\[w( (B_i \cup C_i) \setminus S^-) = w(B_i)+ w(C_i \setminus S^-) \geq w(B_{i'} \setminus S^+) + w(C_{i'}) = w( (B_{i'} \cup C_{i'}) \setminus S^+),\]
hence the EF[1,1] property is satisfied.
\end{proof}

\subsection{Conflict Constraints}

We recall some standard terminology from graph theory. Throughout the paper, all graphs are finite and simple. A set of vertices in a graph $G$ is called {\em independent} if no two of its vertices are adjacent. For a positive integer $\ell$, an {\em $\ell$-coloring of $G$} is an assignment of one of $\ell$ colors to each vertex in $G$. The coloring is called {\em proper} if the endpoints of each edge in $G$ are assigned distinct colors. The set of vertices assigned a given color is called a {\em color class}. A graph $G$ is called {\em $\ell$-colorable} if it admits a proper $\ell$-coloring, and the {\em chromatic number} of $G$, denoted $\chi(G)$, is the least positive integer $\ell$ for which $G$ is $\ell$-colorable.

The fair allocation problem under conflict constraints, proposed in~\cite{ChiarelliKMPPS23,HummelH22}, is defined as follows. Let $G=(V,E)$ be a graph, whose vertices represent items. We refer to $G$ as a {\em conflict graph}. For a positive integer $\ell$, an allocation of $G$ to $\ell$ agents is an allocation of its vertex set $V$ to $\ell$ agents, namely, a sequence $(A_1, \ldots, A_\ell)$ of pairwise disjoint sets whose union is $V$. The allocation is called {\em feasible} if $A_i$ is an independent set in $G$ for every $i \in [\ell]$, equivalently, the sets $A_1, \ldots, A_\ell$ are the color classes of a proper $\ell$-coloring of $G$. We stress that all allocations in this paper are required to cover the entire vertex set, unlike the partial allocations considered, e.g., in~\cite{ChiarelliKMPPS23,EGIKMNSVY26}. An instance of the fair allocation problem under conflict constraints consists of a conflict graph $G$, a positive integer $\ell$, and preferences of the $\ell$ agents, given either by weak orders or by valuations. The question is whether such an instance admits a feasible allocation satisfying a desired fairness property, such as SD-EF1, EF1, or EF[1,1].

\section{A Hierarchy of Strong Chromatic Numbers}\label{sec:chi_s^r}

In this section, we introduce a hierarchy of strong chromatic numbers and establish several fundamental properties that will be used later. We first recall the strong chromatic number due to Alon~\cite{Alon92} and Fellows~\cite{Fellows90}.

\begin{definition}\label{def:chi_s}
Let $G$ be a graph on $n$ vertices. For a positive integer $\ell$ that divides $n$, the graph $G$ is called {\em strongly $\ell$-colorable} if for every partition of the vertex set into parts of size exactly $\ell$, there exists a proper $\ell$-coloring of $G$ such that each color class intersects each part in exactly one vertex. For a positive integer $\ell$ that does not divide $n$, the graph $G$ is called strongly $\ell$-colorable if the graph obtained from $G$ by adding to it $\ell \cdot \lceil n/\ell \rceil - n$ isolated vertices is strongly $\ell$-colorable in the sense defined above. The {\em strong chromatic number} of $G$, denoted $\chi_s(G)$, is the least positive integer $\ell$ for which $G$ is strongly $\ell$-colorable.
\end{definition}

We now introduce the hierarchy of strong chromatic numbers.

\begin{definition}\label{def:chi_s^r}
Let $G$ be a graph. For positive integers $\ell$ and $r$, the graph $G$ is called {\em strongly $(\ell,r)$-colorable} if for every partition of the vertex set into parts of size at most $\ell$, with at most $r$ parts of size strictly smaller than $\ell$, there exists a proper $\ell$-coloring of $G$ such that each color class intersects each part in at most one vertex. The {\em strong chromatic number of type $r$} of $G$, denoted $\chi^r_s(G)$, is the least positive integer $\ell$ for which $G$ is strongly $(\ell,r)$-colorable.
\end{definition}
\noindent
By abuse of notation, we also allow taking $r = \infty$ in Definition~\ref{def:chi_s^r}. In this case, the number of parts of size smaller than $\ell$ is unbounded, and the resulting quantity $\chi_s^\infty(G)$ coincides with the one studied in~\cite{Gutner92,GutnerT09}. Note that, for every graph $G$, the value of $\chi_s^r(G)$ is non-decreasing in $r$.

We remark that the notions of strong colorability above differ from each other in the collection of partitions for which one requires a suitable proper coloring. For strong $(\ell,r)$-colorability, this collection consists of all partitions of the vertex set into parts of size at most $\ell$ with at most $r$ parts of size strictly smaller than $\ell$. The notion of strong $\ell$-colorability from Definition~\ref{def:chi_s} can also be phrased in this language by considering all partitions of the vertex set into $\lceil n/\ell \rceil$ parts of size at most $\ell$, where $n$ denotes the number of vertices.

The following lemma provides a chain of inequalities involving some of the quantities discussed above.
\begin{lemma}\label{lemma:compare_chi_s}
For every graph $G$, it holds that
\[ \chi(G) \leq \chi_s^1 (G) \leq \chi_s^2 (G) \leq \chi_s (G) \leq \chi_s^\infty (G).\]
\end{lemma}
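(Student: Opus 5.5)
The plan is to prove the four inequalities one at a time. In each case I take $\ell$ to be the larger quantity and show that $G$ has the colorability property defining the smaller one. The main tool is to convert a partition allowed in one definition into a partition allowed in the other, either by merging small parts or by padding them with isolated vertices. This works because a coloring that is distinct on a larger part is distinct on each of its subsets. Throughout, $n$ denotes the number of vertices of $G$.

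\emph{Step 1: $\chi(G) \leq \chi_s^1(G)$.} Let $\ell = \chi_s^1(G)$. Partition $V$ arbitrarily into $\lfloor n/\ell \rfloor$ parts of size $\ell$ together with at most one part of size $n - \ell\lfloor n/\ell\rfloor < \ell$. This partition is admissible for strong $(\ell,1)$-colorability, so it yields a proper $\ell$-coloring of $G$, and hence $\chi(G) \leq \ell$.

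\emph{Step 2: $\chi_s^1(G) \leq \chi_s^2(G)$.} This is the monotonicity in $r$ already noted after Definition~\ref{def:chi_s^r}. The partitions required for type $1$ form a subset of those required for type $2$.

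\emph{Step 3: $\chi_s^2(G) \leq \chi_s(G)$.} I expect this to be the step requiring the most care. Let $\ell = \chi_s(G)$ and fix a partition of $V$ into $k$ parts of size $\ell$ and at most two smaller parts of sizes $a,b$, where $0 \le a,b < \ell$ and a size of $0$ means the part is absent. Then $n = k\ell + a + b$, and the number $p = \ell\lceil n/\ell\rceil - n$ of isolated vertices added in Definition~\ref{def:chi_s} is determined by $a+b$. I split into two cases.
\begin{itemize}
\item If $0 < a+b \leq \ell$, then $p = \ell - a - b$. Merge the two small parts together with the $p$ new vertices into a single part of size exactly $\ell$.
\item If $a + b > \ell$, then $p = 2\ell - a - b$. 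Add $\ell - a$ new vertices to the first small part and $\ell - b$ to the second.
\end{itemize}
The case $a+b=0$ is trivial. In every case we obtain a partition of the augmented vertex set into parts of size exactly $\ell$, and each original part is contained in one of them. Strong $\ell$-colorability then gives a proper coloring of the augmented graph that is rainbow on each new part. Its restriction to $V$ is proper and assigns distinct colors within each original part, as required.

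\emph{Step 4: $\chi_s(G) \leq \chi_s^\infty(G)$.} Let $\ell = \chi_s^\infty(G)$, and take any partition of the augmented graph (with $p$ isolated vertices added) into parts of size exactly $\ell$. Delete the added vertices and discard empty parts. This leaves a partition of $V$ into parts of size at most $\ell$, which is admissible for $r=\infty$. We therefore get a proper $\ell$-coloring of $G$ that is injective on each part. To extend it, in each original part the real vertices use distinct colors, and the number of colors not used there equals the number of added vertices in that part. Assign these missing colors bijectively to the added vertices. They are isolated, so the coloring stays proper, and each color class now meets each part in exactly one vertex. Hence $G$ is strongly $\ell$-colorable.
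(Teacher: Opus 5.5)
Your proposal is correct and follows essentially the same route as the paper. For $\chi_s^2(G) \leq \chi_s(G)$ you pad with the added isolated vertices, either merging the two short parts when $a+b \leq \ell$ or completing each one separately otherwise; the other three inequalities, which the paper dismisses as immediate from the definitions, you verify explicitly and correctly, including the extension of the coloring to the padding vertices in Step~4.
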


\begin{proof}
All inequalities except the third follow directly from the definitions.
For the third, it suffices to show that if a graph is strongly $\ell$-colorable, then it is also strongly $(\ell,2)$-colorable. Assume that a graph $G=(V,E)$ is strongly $\ell$-colorable. Consider a partition of $V$ into $t$ parts $Q_1, \ldots, Q_t$ of size at most $\ell$, with at most two parts of size strictly smaller than $\ell$. Let $G'=(V',E')$ be the graph obtained from $G$ by adding to it $\ell \cdot \lceil |V|/\ell \rceil -|V|$ isolated vertices. We define below a partition of $V'$ into parts of size exactly $\ell$, such that each of the original parts $Q_1, \ldots, Q_t$ is contained in a part of this new partition. Then, since $G$ is strongly $\ell$-colorable, the graph $G'$ admits a proper $\ell$-coloring such that each color class intersects each part of the new partition in exactly one vertex. Restricting this coloring to $V$ gives a proper $\ell$-coloring of $G$ such that each color class intersects each of the parts $Q_1, \ldots, Q_t$ in at most one vertex, as required.

To obtain the required partition of $V'$, we use the added isolated vertices to complete the deficient parts to size $\ell$, as described next. If no part among $Q_1, \ldots, Q_t$ has size smaller than $\ell$, then $V=V'$, so the given partition satisfies the desired property. If exactly one of the parts has size smaller than $\ell$, say $Q_t$, then the partition is obtained from $Q_1, \ldots, Q_t$ by adding all vertices of $V' \setminus V$ to $Q_t$, which completes it to size $\ell$. Finally, if two of the parts have size strictly smaller than $\ell$, say $Q_{t-1}$ and $Q_t$, then we consider two cases. If $|Q_{t-1}|+|Q_t| \leq \ell$, then $Q_{t-1} \cup Q_t \cup (V' \setminus V)$ has size exactly $\ell$, and we replace the two parts by this single part. Otherwise, the number of vertices in $V' \setminus V$ is exactly $(\ell-|Q_{t-1}|)+(\ell-|Q_t|)$, so we can split them between $Q_{t-1}$ and $Q_t$ to complete both parts to size exactly $\ell$. This completes the proof.
\end{proof}

A useful property of strong colorability is monotonicity, saying that if a graph is strongly colorable with $\ell$ colors, under one of the variants discussed above, it remains so with $\ell+1$ colors. The monotonicity of strong $\ell$-colorability and of strong $(\ell,\infty)$-colorability was proved in~\cite{Ohman} and in~\cite{Gutner92,GutnerT09}, respectively. We prove below the monotonicity of strong $(\ell,r)$-colorability, for every positive integer $r$, using a similar argument.

\begin{lemma}\label{lemma:monotone_chi_r}
Let $G$ be a graph, and let $\ell$ and $r$ be positive integers.
If $G$ is strongly $(\ell,r)$-colorable, then $G$ is also strongly $(\ell+1,r)$-colorable.
\end{lemma}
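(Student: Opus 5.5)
The plan is to reduce to the hypothesis with $\ell$ colors by peeling off a new color class. Fix a partition $\calP$ of $V$ into parts of size at most $\ell+1$, with at most $r$ parts of size strictly smaller than $\ell+1$. Call a part \emph{full} if it has size exactly $\ell+1$, and let $Q_1,\ldots,Q_t$ be the full parts.

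The goal is an independent set $I$ in $G$ that meets every full part in exactly one vertex and every other part in at most one vertex. Suppose such an $I$ is found, and suppose $G-I$ has a proper $\ell$-coloring that is rainbow on each part $P\setminus I$, $P\in\calP$. Giving all of $I$ the new color $\ell+1$ then yields the required proper $(\ell+1)$-coloring of $G$, since $I$ meets each part at most once.

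After deleting $I$, each full part becomes a part of size exactly $\ell$. A non-full part of size $s\le \ell$ either keeps size $s$ or drops to $s-1<\ell$, or disappears. So the number of parts of size $<\ell$ is still at most $r$, and the induced partition of $V\setminus I$ is an admissible partition for strong $(\ell,r)$-colorability.

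The first difficulty is that the hypothesis speaks about partitions of all of $V$, not of $V\setminus I$. To handle it, I would produce $I$ and the coloring of $G-I$ simultaneously from a single application of the hypothesis. The idea is to build an auxiliary partition $\calP'$ of $V$ with parts of size at most $\ell$ and at most $r$ parts of size below $\ell$, and apply strong $(\ell,r)$-colorability to get a coloring $c$. One color class of $c$, say color $1$, should then play the role of $I$, and the remaining $\ell-1$ colors should give the coloring of the rest, after one more correction step.

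Concretely, I would follow the strategy of~\cite{Ohman,GutnerT09}. Choose one representative $x_j\in Q_j$ from each full part and set $X=\{x_1,\ldots,x_t\}$. Form $\calP'$ from the parts $Q_j\setminus\{x_j\}$ (size $\ell$), the non-full parts of $\calP$, and a grouping of $X$. Then take $I$ to be a color class of the resulting coloring, which is automatically independent and meets each $Q_j\setminus\{x_j\}$ exactly once.

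The main obstacle is the accounting. Grouping $X$ into blocks of size $\ell$ may create one additional small part, exceeding the budget $r$. The color class must also genuinely hit each original full part in exactly one vertex, and the remaining vertices must be recolorable with $\ell$ colors rather than only $\ell-1$.

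My first attempt at this would be an extremal argument. Among all choices of $X$ (one representative per full part), pick one minimizing the number of edges of $G[X]$. Suppose an edge $x_jx_k$ remains. Apply the hypothesis to the partition of $V$ consisting of the sets $Q_j\setminus\{x_j\}$ together with the non-full parts, where the vertices of $X$ are absorbed into non-full parts whenever room permits. Then use the color class of $x_j$ inside $Q_j$ to swap $x_j$ for another representative, decreasing the edge count. This yields an independent transversal $X$ of the full parts.

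With $X$ independent, the final step is a second application of the hypothesis to a partition of $V$ into the parts $P\setminus X$ ($P \in \calP$) together with $X$ distributed so as to stay within the $r$ budget. It remains to check that the $\ell$-coloring so obtained, restricted to $V\setminus X$, is rainbow on each $P\setminus X$. The delicate point I expect to need care is arranging the regrouping of $X$ so that no extra small part is created; this is the step I would work out first.
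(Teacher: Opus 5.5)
\textbf{Verdict: genuine gap.} Your overall reduction is correct and is the same as the paper's. You find an independent set $I$ meeting every part of size $\ell+1$ exactly once and every other part at most once. You check, correctly, that the sets $P\setminus I$ form an admissible partition for strong $(\ell,r)$-colorability of $G-I$. You then give $I$ a new color.

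The gap is that you never resolve the step you yourself identify as the crux. The hypothesis concerns partitions of all of $V$, and you leave open how to use it on $V\setminus I$, and how to produce $I$, without creating more than $r$ parts of size below $\ell$; the proposal ends with this still to be ``worked out first.'' The missing idea is a short heredity lemma: if $G$ is strongly $(\ell,r)$-colorable, then so is every induced subgraph $G[V']$. To prove it, take an admissible partition of $V'$ and add the vertices of $V\setminus V'$ one at a time. Put each into an existing part of size $<\ell$ when one exists, and open a new part only when every part already has size $\ell$. A new part is opened only when there are no small parts, so at every stage the number of small parts is at most $\max(r,1)=r$; this is where $r\ge 1$ is used. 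Applying the hypothesis to the extended partition and restricting to $V'$ gives the required coloring of $G[V']$. Your phrase ``absorbed into non-full parts whenever room permits'' is the germ of this, but without stating and checking it the argument is not a proof.

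Once heredity is available, your extremal argument is unnecessary, and as written it is not justified. Replacing $x_j$ by the vertex of $Q_j\setminus\{x_j\}$ that shares its color need not reduce the number of edges of $G[X]$, since the new representative may be adjacent to other representatives. The paper instead removes one arbitrary vertex from each part of size $\ell+1$. It then applies strong $(\ell,r)$-colorability, via heredity, to the subgraph induced on the remaining vertices with the truncated parts, and takes \emph{any} color class $C$ as $I$. This $C$ is independent, meets each truncated part of size $\ell$ (hence each original part of size $\ell+1$) exactly once, and meets every other part at most once. In your set-up, with $X$ absorbed greedily, this amounts to taking $C\setminus X$ for an arbitrary color class $C$. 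That also disposes of your worry that a class might hit a full part twice. To finish, apply heredity a second time to $G-C$ with the parts $P\setminus C$, and color $C$ with the extra color.
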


\begin{proof}
Let $G=(V,E)$ be a strongly $(\ell,r)$-colorable graph. We first observe that every induced subgraph of $G$ is also strongly $(\ell,r)$-colorable. To see this, let $G'=(V',E')$ be an induced subgraph of $G$, and consider a partition of $V'$ into parts of size at most $\ell$, with at most $r$ parts of size strictly smaller than $\ell$. Extend this partition to a partition of $V$ by adding the vertices of $V \setminus V'$, one by one, as follows. For each such vertex, if there currently exists a part of size smaller than $\ell$, then add the vertex to such a part, and otherwise, initiate a new part consisting of this vertex. At the end of this process, we obtain a partition of $V$ into parts of size at most $\ell$, with at most $r$ parts of size strictly smaller than $\ell$. In fact, if at some point a new part is initiated, then there is at most one part of size smaller than $\ell$. Since $G$ is strongly $(\ell,r)$-colorable, there exists a proper $\ell$-coloring of $G$ such that each color class intersects each part in at most one vertex. Since each part of the original partition of $V'$ is contained in some part of the produced partition of $V$, and since $G'$ is an induced subgraph of $G$, the restriction of this $\ell$-coloring to $V'$ satisfies the desired property with respect to $G'$ and the given partition of $V'$.

We now show that $G$ is strongly $(\ell+1,r)$-colorable. Consider a partition of $V$ into $t$ parts $Q_1, \ldots, Q_t$ of size at most $\ell+1$, with at most $r$ parts of size strictly smaller than $\ell+1$. For each $j \in [t]$, let $Q'_j$ be an arbitrary subset of $Q_j$ of size $\ell$ if $|Q_j|=\ell+1$, and let $Q'_j = Q_j$ otherwise. Note that $|Q'_j| \leq \ell$ for all $j \in [t]$, and that at most $r$ of the sets $Q'_1, \ldots, Q'_t$ have size strictly smaller than $\ell$. Let $G'$ be the subgraph of $G$ induced by $\cup_{j \in [t]}{Q'_j}$. By the observation above, since $G$ is strongly $(\ell,r)$-colorable, so is $G'$. Therefore, there exists a proper $\ell$-coloring of $G'$ such that each color class intersects each of $Q'_1, \ldots, Q'_t$ in at most one vertex. Let $C$ be an arbitrary color class of this coloring. Since $G'$ is an induced subgraph of $G$, the set $C$ is independent in $G$. For each $j \in [t]$, if $|Q'_j| = \ell$, then $Q'_j$ includes exactly one vertex of each color. Thus, the color class $C$ intersects each such $Q'_j$, and hence each $Q_j$ of size $\ell+1$, in exactly one vertex. Consider the induced subgraph of $G$ on $V \setminus C$ and the partition of its vertex set into the parts $Q_1 \setminus C, \ldots, Q_t \setminus C$, excluding empty sets if any. All these parts have size at most $\ell$, and at most $r$ of them have size strictly smaller than $\ell$. Therefore, by applying the observation above again, this subgraph admits a proper $\ell$-coloring such that each color class intersects each of $Q_1 \setminus C, \ldots, Q_t \setminus C$ in at most one vertex. Extending this coloring to $G$ by assigning one additional color to all vertices of $C$, we obtain a proper $(\ell+1)$-coloring of $G$ such that each color class intersects each of $Q_1, \ldots, Q_t$ in at most one vertex, as desired.
\end{proof}

The following lemma, derived from~\cite{Ohman}, relates the strong chromatic number of type infinity to the sizes of the connected components.

\begin{lemma}[{\cite{Ohman}}]\label{lemma:chi_s^inf<=C}
For a graph $G$, let $\ell$ denote the largest number of vertices in a connected component of $G$. Then $\chi_s^\infty(G) \leq \ell$.
\end{lemma}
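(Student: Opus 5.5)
The plan is to construct the coloring component by component, using the fact that distinct components impose no constraints on each other. Let $\ell$ be the largest size of a connected component of $G$, and fix a partition of $V$ into parts $Q_1,\ldots,Q_t$, each of size at most $\ell$. We need a proper $\ell$-coloring of $G$ in which each part is rainbow, meaning no two vertices of the same part share a color.

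First, build an auxiliary graph $H$ on $V$. Its edges are those of $G$ together with all pairs of distinct vertices lying in a common part $Q_j$. A proper $\ell$-coloring of $H$ is exactly what we want. Since each part is a clique in $H$ of size at most $\ell$, the difficulty is entirely in the interaction between $G$-edges and part-cliques. The goal is therefore to show $\chi(H)\le \ell$.

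The approach is greedy along a carefully chosen structure. The cleanest formulation I would try is a \emph{list/matching} argument: color the connected components $K_1,\ldots,K_m$ of $G$ one at a time. When coloring component $K_i$, each vertex $v\in K_i$ must avoid the colors already used in its part by vertices of earlier components. Within $K_i$ itself, all vertices will receive distinct colors. This is allowed, since $|K_i|\le \ell$, and it automatically handles both the $G$-edges inside $K_i$ and the part-conflicts inside $K_i$.

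So the task becomes finding an injective map from $K_i$ to $[\ell]$ such that each $v$ avoids a forbidden set $F_v$, namely the colors used in $v$'s part by earlier components. This is a system of distinct representatives, and I would verify Hall's condition. Fix $S\subseteq K_i$. The forbidden colors must be analyzed per part: for $v\in Q_j$, the set $F_v$ has size at most $|Q_j|-|Q_j\cap K_i|\le \ell-|Q_j\cap K_i|$. Naive Hall fails for vertices in different parts, because their forbidden sets differ and their union could cover everything.

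This is the \emph{main obstacle}, and I expect a smarter order or recoloring to be needed. What I would try first is to process instead by a bipartite-matching formulation over all components simultaneously. For each component $K$ and each part $Q_j$, at most one color can be shared, and we need a Latin-type assignment. Concretely, view the incidence between components and parts as a bipartite multigraph $B$: one side is the components, the other side is the parts, with one edge per vertex of $G$. Each vertex of $B$ has degree at most $\ell$, since components have at most $\ell$ vertices and parts have at most $\ell$ vertices. By König's edge-coloring theorem, a bipartite multigraph of maximum degree at most $\ell$ has a proper edge-coloring with $\ell$ colors. Assign each vertex of $G$ the color of its edge in $B$.

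Edges of $B$ sharing a component endpoint get distinct colors, so each component is rainbow and the coloring is proper on $G$. Edges sharing a part endpoint get distinct colors, so each part is rainbow. This bypasses the Hall obstacle entirely and would be the final argument.
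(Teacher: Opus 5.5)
Your final argument is correct and complete. The bipartite multigraph $B$ has components on one side, parts on the other, and one edge per vertex of $G$, so its maximum degree is at most $\ell$, and K\"onig's edge-coloring theorem applies to bipartite multigraphs. Coloring each vertex by the color of its edge makes every component rainbow. The coloring is therefore proper, since every edge of $G$ lies inside a component, and it also makes every part rainbow. This is exactly strong $(\ell,\infty)$-colorability, so $\chi_s^\infty(G)\le \ell$ follows directly from the definition, with no monotonicity needed.

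The paper gives no proof of this lemma; it imports it from \"Ohman, so there is no in-paper argument to compare against, and your proof serves as a self-contained justification.

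Two cleanups. First, you can delete the Hall-based component-by-component attempt. You were right to distrust it: with earlier colors frozen, a later component can get stuck, even though a valid coloring exists. Take $\ell=2$, components the edges $ab$, $cd$, $ef$, and parts $\{a,e\}$, $\{c,f\}$, $\{b\}$, $\{d\}$. Greedily setting $a=1$, $b=2$, $c=1$, $d=2$ forces $e=f=2$. Second, your auxiliary graph $H$ is a spanning subgraph of the line graph $L(B)$. The final step is then simply $\chi(H)\le\chi(L(B))=\chi'(B)=\Delta(B)\le\ell$.
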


The maximum degree provides useful bounds for the hierarchy of strong chromatic numbers. The following simple lower bound already holds for the strong chromatic number of type $1$.

\begin{lemma}\label{lemma:chi_s_1-D}
For every graph $G$ with maximum degree $\Delta$, it holds that $\chi_s^1(G) \geq \Delta+1$.
\end{lemma}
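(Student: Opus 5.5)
We need to show that for every $\ell \leq \Delta$, the graph $G$ is not strongly $(\ell,1)$-colorable; it suffices to do this for $\ell = \Delta$, since by Lemma~\ref{lemma:monotone_chi_r} failure at $\ell=\Delta$ implies failure at every smaller $\ell$. Alternatively, we can simply construct, for every $\ell \leq \Delta$, a bad partition directly. We first dispose of the trivial case $\Delta = 0$, where the claim $\chi_s^1(G)\geq 1$ holds by definition. So assume $\Delta \geq 1$.

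Let $v$ be a vertex of degree $\Delta$, and fix $\ell \leq \Delta$. Choose $\ell$ neighbors $u_1,\ldots,u_\ell$ of $v$. The plan is to produce a partition of $V$ into parts of size at most $\ell$, with at most one part of size strictly smaller than $\ell$, in which $\{u_1,\ldots,u_\ell\}$ is a part. In any proper $\ell$-coloring assigning distinct colors within each part, the vertices $u_1,\ldots,u_\ell$ then receive all $\ell$ colors. Since $v$ is adjacent to all of them, no color remains for $v$, a contradiction.

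To build the partition, I take $Q_1=\{u_1,\ldots,u_\ell\}$ and split the remaining $|V|-\ell$ vertices arbitrarily into consecutive blocks of size $\ell$, with the last block possibly smaller. This partition has at most one part of size strictly smaller than $\ell$, so it is admissible for strong $(\ell,1)$-colorability. Hence $G$ is not strongly $(\ell,1)$-colorable for any $\ell\leq\Delta$, which gives $\chi_s^1(G)\geq \Delta+1$.

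There is essentially no obstacle here. The only point to check is that the arbitrary chunking of the leftover vertices respects the ``at most one short part'' condition, and it does by construction.
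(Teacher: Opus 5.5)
Your proposal is correct and follows essentially the same argument as the paper. The paper fixes $\ell=\Delta$, uses the neighborhood of a maximum-degree vertex as one part, and applies Lemma~\ref{lemma:monotone_chi_r} to handle smaller $\ell$. Your direct construction for every $\ell \leq \Delta$ works equally well and makes the appeal to monotonicity unnecessary.
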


\begin{proof}
For $\Delta = 0$, the claim is trivial. Assuming that $\Delta \geq 1$, by Lemma~\ref{lemma:monotone_chi_r}, it suffices to show that $G$ is not strongly $(\Delta,1)$-colorable.
To this end, consider a partition of the vertex set of $G$ with one part defined as the neighborhood of a vertex of degree $\Delta$ and with the remaining vertices partitioned arbitrarily into parts of size exactly $\Delta$, possibly except one part of smaller size. Observe that $G$ has no proper $\Delta$-coloring in which each color class intersects each part in at most one vertex. Indeed, such a coloring would have to assign distinct colors to the $\Delta$ vertices of the first part and a different color to their common neighbor. This completes the proof.
\end{proof}

For an upper bound on the strong chromatic number in terms of the maximum degree, we state a result of Haxell~\cite{Haxell04}.
\begin{theorem}[\cite{Haxell04}]\label{thm:3Delta-1}
For every graph $G$ with maximum degree $\Delta \geq 1$, it holds that $\chi_s(G) \leq 3\Delta-1$.
\end{theorem}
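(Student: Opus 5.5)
This theorem is quoted from Haxell~\cite{Haxell04}, and the paper uses it as a black box. If I had to reconstruct a proof, I would follow the strategy of Haxell's argument: reduce strong colorability to an augmentation step for partial colorings, and prove that step with her alternating-tree technique for independent transversals. I have not checked the counting below in detail.

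\textbf{Set-up.} Fix $\ell = 3\Delta-1$. Pad $G$ with isolated vertices so that $\ell$ divides $|V|$, and fix a partition $V = V_1 \cup \cdots \cup V_t$ into parts of size exactly $\ell$. Call a proper partial $\ell$-coloring \emph{admissible} if each color class meets each part in at most one vertex. Equivalently, each color class is a partial independent transversal of the partition. Among admissible partial colorings, choose one that colors the maximum number of vertices. Suppose for contradiction that some vertex $v \in V_1$ is uncolored. Since $|V_1| = \ell$ and only $\ell$ colors exist, some color $c$ is absent from $V_1$, so the class $C_c$ misses the part $V_1$.

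\textbf{Augmentation step.} The goal is to recolor some vertices so that the number of colored vertices strictly increases, contradicting maximality. I would work with the single class $C_c$ as a partial independent transversal, and try to extend it to also cover $V_1$. Every vertex that is moved into $C_c$ gives up its old color, so a chain of exchanges is needed.

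\textbf{Alternating tree.} I would build this chain as in Haxell's proof of the $2\Delta$ independent transversal theorem.
\begin{itemize}[leftmost=*]
\item Start from $V_1$ and greedily pick vertices $x_1, x_2, \ldots$, one from each newly reached part, that are not adjacent to the current version of $C_c$.
\item Whenever an added vertex $x$ is blocked by a neighbor $y \in C_c$, grow the tree into the part containing $y$, so that $y$ can be evicted from $C_c$.
\item Any vertex that leaves its old class, or is evicted, is then recolored with a color that is free in its part and has no neighbor there. This is exactly what the admissibility constraints require.
\end{itemize}
The counting is where $3\Delta-1$ should enter. A part reached in the tree has $\ell$ vertices. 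At most $\Delta$ of them lie in the neighborhood of a fixed chosen vertex, and roughly a further $\Delta$ are dominated by previously selected tree vertices. This leaves about $\ell - 2\Delta \geq \Delta - 1 + 1$ candidates, which is the usual $2\Delta$-type slack plus an extra margin for the recoloring constraint. If the tree never closes, count its edges: each chosen vertex dominates at most $\Delta$ vertices, while each reached part needs more than $2\Delta$ dominated vertices to block growth. This double count yields the contradiction.

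\textbf{Expected obstacle.} The main difficulty is making this augmentation rigorous. Each swap must preserve properness and admissibility of \emph{every} color class, not just $C_c$. Termination must also be guaranteed, for instance by tracking a lexicographic potential on the tree vector as Haxell does. I would first try to encode the recoloring options for evicted vertices by adding auxiliary vertices, each of degree at most $\Delta - 1$, to every part. With this encoding, a single application of the strengthened independent transversal lemma (parts of size at least $2\Delta$, where one prescribed vertex can be forced into the transversal) should produce the augmentation, and the $3\Delta - 1$ bound would come from that lemma's $2\Delta$ plus the $\Delta - 1$ auxiliary slack.
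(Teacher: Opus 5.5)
\paragraph{Overall.} The paper gives no proof of this statement: it is quoted from Haxell and used as a black box. Judged on its own, your sketch has a genuine gap. It does not reach $3\Delta-1$, and it fails exactly in the counting you flagged as unchecked.

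\paragraph{Where the count fails.} With $\ell=3\Delta-1$ one has $\ell-2\Delta=\Delta-1$, not $\Delta$. More substantively, in your exchange an evicted vertex $y\in C_c$ can take the colour $d$ of its replacement in its own part only if $y$ has no neighbour in $C_d$. Since $y$ may have $\Delta$ coloured neighbours in $\Delta$ distinct classes, up to $\Delta$ vertices of $y$'s part are unusable. That leaves only $2\Delta-1$ candidates, not ``more than $2\Delta$''. Now take a blocked tree on $k+1$ parts, dominated by $k$ chosen vertices and $k$ vertices of $C_c$. It is consistent with $(k+1)(2\Delta-1)\le 2k\Delta$ as soon as $k\ge 2\Delta-1$, so the double count gives no contradiction. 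Your scheme closes precisely when $\ell-\Delta\ge 2\Delta$. Combined with your forced-vertex $2\Delta$ lemma, it proves $\chi_s(G)\le 3\Delta$, not $3\Delta-1$.

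\paragraph{What is missing.} The missing unit is the entire content of the $-1$: you need every evicted $y$ to block at most $\Delta-1$ colours. A natural source of this saving is that the vertex $x$ which forces $y$ out of $C_c$ is itself a neighbour of $y$. On entering $C_c$, $x$ is either uncoloured or vacates its own class, so it cannot obstruct $y$'s new colour. Exploiting this requires that $x$ really belongs to the final new class whenever $y$ is recoloured. So the recolouring must be coordinated with the final augmenting structure. Intermediate swaps in a potential-function argument cannot rely on the saving, since $x$ still sits in its old class at that moment.

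\paragraph{The auxiliary-vertex reduction.} Which colours are blocked for $y$ depends on which of its neighbours end up in the new class, so it is not clear how to encode this statically. Your auxiliary-vertex reduction is left unspecified: what are these vertices adjacent to, and how does selecting one translate into a legal recolouring? Until that is supplied, $3\Delta-1=2\Delta+(\Delta-1)$ is numerology rather than a proof.
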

\noindent
As mentioned earlier, it is conjectured that the bound above can be replaced by $2\Delta$ (see~\cite{AharoniBZ07}).

The following lemma determines the strong chromatic numbers of types $1$ and $2$ of the complete bipartite graph $K_{m,m}$ with $m$ vertices in each part. It illustrates that the first two levels of the hierarchy may differ.

\begin{lemma}\label{lemma:K_m,m}
For every positive integer $m$, it holds that $\chi_s^1(K_{m,m}) = \lceil 3m/2 \rceil$ and $\chi_s^2(K_{m,m}) = 2m$.
\end{lemma}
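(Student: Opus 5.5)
The plan is to reduce strong colorability of $K_{m,m}$ to a simple numerical condition on the parts, and then read off both values. Write $X$ and $Y$ for the two sides, each of size $m$.

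\textbf{Step 1: the criterion.} Every color class of a proper coloring of $K_{m,m}$ is independent, so it lies entirely in $X$ or entirely in $Y$. Hence a proper $\ell$-coloring amounts to choosing disjoint color sets $C_X$ and $C_Y$ with $|C_X|+|C_Y|\le \ell$, and coloring $X$ from $C_X$ and $Y$ from $C_Y$. For a partition into parts $Q_1,\ldots,Q_t$, put $a_j=|Q_j\cap X|$ and $b_j=|Q_j\cap Y|$. A coloring giving distinct colors within each part exists if and only if $|C_X|\ge \max_j a_j$ and $|C_Y|\ge \max_j b_j$. Sufficiency holds because the parts impose no constraints on one another, so colors may be reused freely across parts. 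Therefore such a coloring exists if and only if
\[\max_j a_j+\max_j b_j\le \ell.\]

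\textbf{Step 2: type $2$.} For the upper bound, $\max_j a_j+\max_j b_j\le m+m=2m$ always holds, so $K_{m,m}$ is strongly $(2m,2)$-colorable. For the lower bound with $m\ge 2$, take $\ell=2m-1$ and the partition $\{X,Y\}$. Both parts have size $m<\ell$, so there are exactly two small parts, which is allowed. This partition gives $m+m>\ell$, so $K_{m,m}$ is not strongly $(2m-1,2)$-colorable. By Lemma~\ref{lemma:monotone_chi_r}, $K_{m,m}$ is then not strongly $(\ell',2)$-colorable for any $\ell'<2m$, and so $\chi_s^2(K_{m,m})=2m$. The case $m=1$ follows from $\chi(K_{1,1})=2$ and Lemma~\ref{lemma:compare_chi_s}.

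\textbf{Step 3: type $1$, upper bound.} This is the main step. Let $\ell\ge\lceil 3m/2\rceil$ and take a partition with at most one part of size less than $\ell$. Let $Q_1$ attain $a=\max_j a_j$ and $Q_2$ attain $b=\max_j b_j$.
\begin{itemize}
\item If $Q_1=Q_2$, then $a+b=|Q_1|\le\ell$.
\item Otherwise one of the two parts is full; by symmetry say $|Q_1|=\ell$. Then $Q_1$ contains $\ell-a$ vertices of $Y$, all outside $Q_2$. Hence $b\le m-(\ell-a)$, and using $a\le m$ we get $a+b\le 2a+m-\ell\le 3m-\ell\le \ell$, where the last inequality uses $\ell\ge 3m/2$.
\end{itemize}
The symmetric case, where $Q_2$ is the full part, is handled the same way with the roles of $X$ and $Y$ exchanged. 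In all cases Step 1 gives the required coloring.

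\textbf{Step 4: type $1$, lower bound.} For $m\ge2$, set $\ell=\lceil 3m/2\rceil-1$, so that $m\le\ell<3m/2$. Let $Q_1$ consist of $X$ together with $\ell-m$ vertices of $Y$, so $|Q_1|=\ell$. Let $Q_2$ consist of the remaining $2m-\ell\le\ell$ vertices of $Y$; this is the only possibly small part. This partition gives $m+(2m-\ell)=3m-\ell>\ell$, so $K_{m,m}$ is not strongly $(\ell,1)$-colorable. Lemma~\ref{lemma:monotone_chi_r} then rules out all smaller values, so $\chi_s^1(K_{m,m})=\lceil 3m/2\rceil$. For $m=1$, the bound $\chi\le\chi_s^1$ from Lemma~\ref{lemma:compare_chi_s} gives $\chi_s^1(K_{1,1})\ge 2=\lceil 3/2\rceil$, and the upper bound of Step 3 gives equality.

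The only delicate point is the case analysis in Step 3. One must use that the full part $Q_1$ containing many $X$-vertices is forced to absorb $Y$-vertices, which limits how many $Y$-vertices can sit in any other part. The rest is bookkeeping.
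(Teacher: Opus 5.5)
Your proof is correct and takes essentially the paper's route. The lower bounds use exactly the paper's partitions: $X$ together with $\ell-m$ vertices of $Y$ against the rest for type $1$, and $\{X,Y\}$ for type $2$, combined with Lemma~\ref{lemma:monotone_chi_r} in the same way. For the type-$1$ upper bound, your criterion $\max_j a_j+\max_j b_j\le\ell$ and the counting step repackage the paper's argument more uniformly. The paper instead notes that for $\ell=\lceil 3m/2\rceil$ the partition has exactly two parts, of sizes $\ell$ and $\lfloor m/2\rfloor$, and then directly reuses the colors of $Q_1\cap V_i$ on $Q_2\cap V_i$.
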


\begin{proof}
For a positive integer $m$, consider the complete bipartite graph $K_{m,m}$, and let $V_1$ and $V_2$ denote the parts of its bipartition. The case $m=1$ is immediate, so assume that $m \geq 2$. We first prove that $\chi_s^1(K_{m,m}) = \lceil 3m/2 \rceil$.

For the upper bound, we show that $K_{m,m}$ is strongly $(\ell,1)$-colorable for $\ell = \lceil 3m/2 \rceil$. Consider an arbitrary partition of the vertex set of $K_{m,m}$ into parts of size at most $\ell$, with at most one part of size smaller than $\ell$. For $m \geq 2$, it holds that $m < \ell < 2m$, hence such a partition consists of exactly two parts, which we denote by $Q_1$ and $Q_2$, where $|Q_1| = \ell$ and $|Q_2| = 2m-\ell = \lfloor m/2 \rfloor$. Define a coloring of $K_{m,m}$ as follows. First, assign $\ell$ distinct colors to the vertices of $Q_1$. Then, assign to the vertices of $Q_2 \cap V_1$ distinct colors that appear on the vertices of $Q_1 \cap V_1$, and assign to the vertices of $Q_2 \cap V_2$ distinct colors that appear on the vertices of $Q_1 \cap V_2$. This is possible because $|Q_2| = \lfloor m/2 \rfloor$, whereas $Q_1$ intersects each of $V_1$ and $V_2$ in at least $\ell - m = \lceil m/2 \rceil$ vertices. We obtain a proper $\ell$-coloring of $K_{m,m}$ that assigns distinct colors to the vertices of each of $Q_1$ and $Q_2$, as required.

For the lower bound, we show that $K_{m,m}$ is not strongly $(\ell,1)$-colorable for $\ell = \lceil 3m/2 \rceil - 1$. Note that $\ell \geq m$, and consider a partition of the vertex set of $K_{m,m}$ into a part $Q_1$ of size $\ell$ that contains $V_1$, and a part $Q_2$ that consists of the remaining vertices. Notice that the size of the latter satisfies $|Q_2| = 2m-\ell = \lfloor m/2 \rfloor+1 \leq \ell$. Any proper coloring of $K_{m,m}$ that assigns distinct colors to the vertices of each of $Q_1$ and $Q_2$ must also assign distinct colors to the vertices of $V_1 \cup Q_2$, since $V_1 \subseteq Q_1$ and every vertex of $Q_2$ is adjacent to every vertex of $V_1$. Therefore, the number of colors used by such a coloring is at least $|V_1|+|Q_2| = m+\lfloor m/2 \rfloor+1=\lfloor 3m/2 \rfloor +1>\ell$. This implies that there is no proper $\ell$-coloring of $K_{m,m}$ that assigns distinct colors to the vertices of each of $Q_1$ and $Q_2$, hence $K_{m,m}$ is not strongly $(\ell,1)$-colorable. By Lemma~\ref{lemma:monotone_chi_r}, we deduce that $\chi_s^1(K_{m,m}) \geq \ell+1 = \lceil 3m/2 \rceil$.

Finally, we observe that $\chi_s^2(K_{m,m}) = 2m$. The upper bound is trivial. For the lower bound, set $\ell=2m-1$, and consider the partition of the vertex set of $K_{m,m}$ into the parts $V_1$ and $V_2$, each of size $m \leq \ell$. Every proper coloring that assigns distinct colors to the vertices of each part must assign a different color to every vertex of $K_{m,m}$ and thus use more than $\ell$ colors. This shows that $K_{m,m}$ is not strongly $(\ell,2)$-colorable, which by Lemma~\ref{lemma:monotone_chi_r} implies that $\chi_s^2(K_{m,m}) \geq \ell+1 = 2m$.
\end{proof}

We end this section with the strong chromatic numbers of cycles and paths.
For a positive integer $m$, let $C_m$ and $P_m$ denote the cycle and path graphs on $m$ vertices, respectively. 
The following statement summarizes their strong chromatic numbers.

\begin{theorem}[\cite{FleischnerS92,Sachs93,Ohman}]\label{thm:chi_s_cycle_path}
For every positive integer $m$, it holds that
\[ \chi_s(C_{3m}) = \chi_s(C_{3m+2}) = 3 \mbox{~~~and~~~} \chi_s(C_{3m+1}) = 4,\]
and for every integer $m \geq 3$, it holds that $\chi_s(P_m) = 3$.
\end{theorem}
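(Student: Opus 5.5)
Since this statement is assembled from~\cite{FleischnerS92,Sachs93,Ohman}, my plan is to derive as much as possible from the `cycle plus triangles' theorem, that is, from $\chi_s(C_{3m}) \leq 3$, using two closure properties of strong $\ell$-colorability, and to fall back on the cited works only for the genuinely cycle-specific parts. Throughout I would use that strong $\ell$-colorability is monotone in $\ell$~\cite{Ohman}, so each value only needs an upper bound at $\ell$ and a single bad partition at $\ell-1$.

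\emph{Closure properties.} The first is that deleting edges preserves strong $\ell$-colorability, which is immediate. The second is that an induced subgraph $H$ on $h$ vertices of a strongly $\ell$-colorable graph $G$ on $n$ vertices, with $\ell \mid n$, is strongly $\ell$-colorable. To prove it, I would take a partition of $H$ padded by $\ell\lceil h/\ell\rceil - h$ isolated vertices into $\ell$-sets. I replace these padding vertices by distinct vertices of $V(G)\setminus V(H)$. The remaining deleted vertices number $n-\ell\lceil h/\ell\rceil$, which is divisible by $\ell$, so I group them into $\ell$-sets. Restricting a good coloring of $G$ then gives the required coloring of $H$, exactly as in Lemma~\ref{lemma:monotone_chi_r}.

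\emph{Paths.} For $m \geq 3$, the path $P_m$ is an induced subgraph of $C_{3k}$ for any $k$ with $3k \geq m+1$. Hence $\chi_s(P_m) \leq 3$ by the closure property. For the lower bound I would partition $V(P_3)\subseteq V(P_m)$ so that the two endpoints $v_1,v_3$ of a subpath $v_1v_2v_3$ form one part, and pair the remaining vertices arbitrarily, padding if needed. Any proper $2$-coloring of a path gives $v_1$ and $v_3$ the same color, so $P_m$ is not strongly $2$-colorable. The same parity argument, with $v_1,v_3$ consecutive-but-one on the cycle, gives $\chi_s(C_n) \geq 3$ for every $n \geq 4$; it also handles odd cycles outright, since they are not even $2$-colorable. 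This settles $\chi_s(C_{3m})=3$, using~\cite{FleischnerS92} for the upper bound.

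\emph{$C_{3m+2}$, the main obstacle.} Here one isolated vertex $z$ is padded, and it lies in some triple $\{z,a,b\}$. My first attempt is to subdivide a cycle edge by $z$, obtaining $C_{3m+3}$, and apply the cycle-plus-triangles theorem. If $a$ and $b$ are adjacent, I subdivide the edge $ab$. Then $z$ differs from both, $a$ and $b$ differ because they share a triple, and deleting $z$ returns a proper coloring of $C_{3m+2}$. If $a$ and $b$ are not adjacent, this breaks: after subdividing an edge $uv$, the coloring no longer forces $u$ and $v$ to receive distinct colors. I would try either a local re-partitioning, swapping $b$ with a neighbor of $a$ in another triple, or invoking the version of the theorem that allows triangle edges to coincide with cycle edges (the footnote's reduction). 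If neither works cleanly, I would cite Sachs's argument~\cite{Sachs93} for this case.

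\emph{$C_{3m+1}$.} Two isolated vertices are padded here. For the upper bound $\chi_s(C_{3m+1}) \leq 4$, I would cite \"Ohman~\cite{Ohman}, since Haxell's Theorem~\ref{thm:3Delta-1} only gives $5$. For the lower bound $\chi_s(C_{3m+1}) > 3$, I need an explicit partition into triples, with the two padded vertices in one or two triples, that admits no valid $3$-coloring. My plan is to start from $C_4$, where the triples $\{v_1,v_3,z_1\}$ and $\{v_2,v_4,z_2\}$ force four distinct colors. I would then extend to general $m$ by inserting runs of three consecutive vertices and grouping them so that the $3$-coloring is forced to propagate the same color pattern around the cycle, reproducing the $C_4$ contradiction; this is the kind of rigid structure exhibited in~\cite{FleischnerS92}.
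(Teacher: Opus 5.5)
The paper gives no proof of this theorem; it quotes it from \cite{FleischnerS92,Sachs93,Ohman}. So the question is whether the parts you do not cite actually go through.

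The parts you carry out are correct:
\begin{itemize}
\item the induced-subgraph closure for $\ell \mid n$;
\item $\chi_s(P_m)\le 3$, via $P_m$ being induced in some $C_{3k}$;
\item the parity lower bounds;
\item $\chi_s(C_{3m})=3$.
\end{itemize}

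The upper bound $\chi_s(C_{3m+2})\le 3$ is not established. Your subdivision trick is stronger than you use it. It works whenever \emph{any} triple, not only the one containing $z$, contains a cycle edge $uv$: subdivide $uv$ by $z$, and the lost edge is repaired because $u,v$ share a triple. The residual case, where every triple is an independent set of the cycle, is exactly where both of your fallbacks fail.
\begin{itemize}
\item Swapping $b$ with a neighbour $v$ of $a$ lying in a triple $T$ yields a coloring that is rainbow on the \emph{modified} triples $\{z,a,v\}$ and $(T\setminus\{v\})\cup\{b\}$. This says nothing about the original triples $\{z,a,b\}$ and $T$.
\item The footnote's reduction deals with triangle edges that lie on the cycle. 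It cannot restore a deleted cycle edge whose endpoints lie in different triples.
\end{itemize}
So this case is unproved and has to be taken from the literature, as the paper does.

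For $\chi_s(C_{3m+1})\ge 4$ you only have a plan, and a literal reading of it fails. Suppose a run $x,y,w$ of consecutive vertices is inserted into the edge $pq$ and forms its own triple. This imposes no constraint between $p$ and $q$: if $c(p)=c(q)$, give $y$ that color; otherwise set $c(x)=c(q)$ and $c(w)=c(p)$. Such insertions therefore destroy rigidity, and the triples must interleave with the cycle. Also, $z_1$ and $z_2$ must lie in different triples, since otherwise the argument of Lemma~\ref{lemma:chi_s^1_cycles} colors the instance.

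A partition that works for every $m$ consists of $\{v_{3i+1},v_{3i+3},v_{3i+5}\}$ for $0\le i\le m-2$, together with $\{v_{3m-2},v_{3m},z_1\}$ and $\{v_2,v_{3m+1},z_2\}$. The coloring is forced as follows.
\begin{itemize}
\item If $v_{3i+1},v_{3i+3},v_{3i+5}$ receive colors $1,2,3$, then $v_{3i+2}$ is forced to $3$ and $v_{3i+4}$ to $1$.
\item $v_{3i+6}$ is adjacent to $v_{3i+5}$ and shares a triple with $v_{3i+4}$, so it is forced to $2$.
\item Inductively, $v_1,v_2,v_3$ receive distinct colors and $c(v_{j+3})=c(v_j)$ for $1\le j\le 3m-3$. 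The last step uses the triple containing $z_1$ and gives $c(v_{3m})=c(v_3)$.
\item Then $v_{3m+1}$ is adjacent to $v_{3m}$ and $v_1$ and shares a triple with $v_2$. These carry all three colors, which is a contradiction.
\end{itemize}
For $m=1$ this is your $C_4$ example, and for $m=2$ it is the partition used in the proof of Theorem~\ref{thm:C7}. With this construction and your citation of \cite{Ohman} for $\chi_s(C_{3m+1})\le 4$, the $C_{3m+1}$ case is complete. The $C_{3m+2}$ upper bound still rests entirely on the citation.
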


We next present the analogous statements for the strong chromatic numbers of types $1$ and $2$.

\begin{lemma}\label{lemma:chi_s^1_cycles}
For every integer $m \geq 3$, it holds that $\chi_s^1(C_m) = \chi_s^1(P_m) = 3$.
\end{lemma}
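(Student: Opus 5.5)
Both bounds are short once we use the earlier lemmas. For the lower bound, $C_m$ with $m \geq 3$ and $P_m$ with $m \geq 3$ both have maximum degree $\Delta = 2$. Lemma~\ref{lemma:chi_s_1-D} then gives $\chi_s^1(C_m) \geq 3$ and $\chi_s^1(P_m) \geq 3$.

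For the upper bound, $P_m$ is an induced subgraph of $C_{m+1}$ (for $m \geq 3$, so that $m+1 \geq 4$). The first part of the proof of Lemma~\ref{lemma:monotone_chi_r} shows that strong $(\ell,r)$-colorability passes to induced subgraphs. Hence it suffices to show that every cycle $C_n$ with $n \geq 3$ is strongly $(3,1)$-colorable, and apply this with $n=m$ and $n=m+1$.

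We cannot just quote $\chi_s(C_n) \leq 3$ from Theorem~\ref{thm:chi_s_cycle_path}, since it fails for $n \equiv 1 \pmod 3$. The plan is to go through Lemma~\ref{lemma:compare_chi_s}: $\chi_s^1 \leq \chi_s^2 \leq \chi_s$. Take a partition of $V(C_n)$ into parts of size at most $3$, with at most one part of size less than $3$. There are three cases by $n \bmod 3$.
\begin{itemize}
\item If $n \equiv 0 \pmod 3$, all parts are triangles' vertex sets. Then $\chi_s(C_{3k})=3$ (the cycle-plus-triangles theorem) gives the coloring directly.
\item If $n \equiv 2 \pmod 3$, then $\chi_s(C_n) = 3$ by Theorem~\ref{thm:chi_s_cycle_path}, and Lemma~\ref{lemma:compare_chi_s} gives strong $(3,2)$- and hence $(3,1)$-colorability.
\item If $n \equiv 1 \pmod 3$, there is exactly one singleton part $\{v\}$. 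This is the only case needing a new argument.
\end{itemize}

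For the case $n \equiv 1 \pmod 3$, where the partition is triples plus one singleton $\{v\}$, contract $v$ away. Let $u,w$ be the two neighbors of $v$ on the cycle. Form the cycle $C_{n-1}$ on $V \setminus \{v\}$ by adding the edge $uw$; note $n-1 \equiv 0 \pmod 3$ and $n-1 \geq 3$. The triples partition $V \setminus \{v\}$, so the cycle-plus-triangles theorem ($\chi_s(C_{n-1}) = 3$) gives a proper $3$-coloring of $C_{n-1}$ that is rainbow on each triple. In it $u$ and $w$ get distinct colors, so give $v$ the third color. This is proper on $C_n$, since the only edges at $v$ go to $u$ and $w$, and the singleton part is trivially rainbow.

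Two degenerate situations need care. If $n=4$, then $C_3$ after contraction is fine. If the added edge $uw$ already exists, that only happens for $n=3$, which is excluded since $3 \not\equiv 1 \pmod 3$. If the theorem is applied in its edge-disjoint form, an edge of the new cycle may coincide with a triangle edge; the paper's footnote after the cycle-plus-triangles discussion says the non-edge-disjoint version holds, so this is covered.

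The main obstacle is recognizing that $C_{3k+1}$ has $\chi_s = 4$ yet $\chi_s^1 = 3$. The contraction trick above handles it, because the type-$1$ condition forces the short part to be a single vertex.
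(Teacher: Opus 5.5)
Your proof is correct and follows essentially the same route as the paper: the lower bound from Lemma~\ref{lemma:chi_s_1-D}, Lemma~\ref{lemma:compare_chi_s} together with Theorem~\ref{thm:chi_s_cycle_path} for every case except $C_{3m+1}$, and for $C_{3m+1}$ the observation that the type-$1$ condition forces the short part to be a singleton $\{v\}$, which is set aside and then colored last. The differences are cosmetic. The paper deletes $v$, uses $\chi_s(P_{3m})=3$, and gives $v$ a color avoiding both neighbors, while you contract $v$ into the edge $uw$ and use $\chi_s(C_{3m})=3$. Also, you obtain the path case from cycles via heredity to induced subgraphs, rather than quoting $\chi_s(P_m)=3$ directly.
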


\begin{proof}
Since cycle and path graphs on at least three vertices have maximum degree $2$, Lemma~\ref{lemma:chi_s_1-D} implies that for every integer $m \geq 3$, it holds that $\chi_s^1(C_m) \geq 3$ and $\chi_s^1(P_m) \geq 3$. Combining Lemma~\ref{lemma:compare_chi_s} with Theorem~\ref{thm:chi_s_cycle_path} gives the matching upper bounds, except for cycles whose number of vertices is congruent to $1$ modulo $3$. To complete this remaining case, let $m$ be a positive integer, and let us show that $C_{3m+1}$ is strongly $(3,1)$-colorable. Consider a partition of the vertex set of $C_{3m+1}$ into $m$ parts of size exactly three and a singleton. By removing the vertex of the singleton from the cycle, we obtain a path on $3m$ vertices. By Theorem~\ref{thm:chi_s_cycle_path}, it holds that $\chi_s(P_{3m})=3$, hence the path admits a proper $3$-coloring that assigns distinct colors to the vertices of each part of size three in the given partition. This coloring can be extended to a proper $3$-coloring of $C_{3m+1}$ by assigning to the remaining vertex a color that differs from the colors of its two neighbors. Each color class of the extended coloring intersects each part of the given partition in at most one vertex, so we are done.
\end{proof}

\begin{lemma}\label{lemma:chi_s^2_cycles}
For every positive integer $m$, it holds that 
\[\chi_s^2(C_{3m}) = \chi_s^2(C_{3m+2}) = 3 \mbox{~~~and~~~} \chi_s^2(C_{3m+1}) = 4,\]
and for every integer $m \geq 3$, it holds that $\chi_s^2(P_m) = 3$.
\end{lemma}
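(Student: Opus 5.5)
The plan is to get the lower bounds from the maximum degree and the upper bounds from Theorem~\ref{thm:chi_s_cycle_path}, treating the one case where the upper bound exceeds three separately.

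For the lower bounds, cycles and paths on at least three vertices have maximum degree $2$. Lemma~\ref{lemma:chi_s_1-D} together with Lemma~\ref{lemma:compare_chi_s} then gives $\chi_s^2 \geq \chi_s^1 \geq 3$ for all the graphs in the statement.

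For the upper bounds, Lemma~\ref{lemma:compare_chi_s} gives $\chi_s^2(G) \leq \chi_s(G)$. Combined with Theorem~\ref{thm:chi_s_cycle_path}, this yields $\chi_s^2(C_{3m}),\chi_s^2(C_{3m+2}) \leq 3$, $\chi_s^2(P_m) \leq 3$ for $m\geq 3$, and $\chi_s^2(C_{3m+1}) \leq 4$. This settles every case except the lower bound $\chi_s^2(C_{3m+1}) \geq 4$.

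For that remaining bound, it suffices to show that $C_{3m+1}$ is not strongly $(3,2)$-colorable. Since $\chi_s^2 \geq 3$ is already known, this gives $\chi_s^2(C_{3m+1}) \geq 4$ directly, and by Lemma~\ref{lemma:monotone_chi_r} it also rules out smaller values. I will extract the bad partition from the fact that $\chi_s(C_{3m+1}) = 4$, that is, that $C_{3m+1}$ is not strongly $3$-colorable.

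By Definition~\ref{def:chi_s}, there is a partition of the vertex set of $C_{3m+1}$ plus two added isolated vertices $u,v$ into triples, admitting no proper $3$-coloring that is rainbow on each triple. Consider where $u$ and $v$ lie.
\begin{itemize}
\item If $u$ and $v$ lie in the same triple, then restricting to the cycle gives a partition into $m$ triples and a singleton. Such a partition is colorable by the proof of Lemma~\ref{lemma:chi_s^1_cycles}, since $C_{3m+1}$ is strongly $(3,1)$-colorable. A good coloring of the cycle extends to $u,v$ by giving them the two colors missing from their triple. This contradicts the choice of the partition.
\item Hence $u$ and $v$ lie in different triples. Removing them yields a partition of $V(C_{3m+1})$ into $m-1$ triples and two pairs, which is a partition with at most two parts of size smaller than $3$.
\end{itemize}
If this partition admitted a proper $3$-coloring with each color class meeting each part at most once, I would color $u$ and $v$ with the color missing from their respective pairs. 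Since $u$ and $v$ are isolated, this gives a strong $3$-coloring of the augmented graph for the original partition, a contradiction. So this partition witnesses that $C_{3m+1}$ is not strongly $(3,2)$-colorable.

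The main obstacle is exactly this case analysis on the location of the two padding vertices; it rests on the type-$1$ result for $C_{3m+1}$ proved in Lemma~\ref{lemma:chi_s^1_cycles}. The case $m=1$, the $4$-cycle, fits the same argument, and can also be checked directly with the partition into the two pairs of opposite vertices.
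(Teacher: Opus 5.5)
Your proof is correct and follows the paper's route: the upper bounds come from Lemma~\ref{lemma:compare_chi_s} and Theorem~\ref{thm:chi_s_cycle_path}, the lower bounds from the type-$1$ (degree) bound, and for $C_{3m+1}$ the bad partition comes from deleting the two padding vertices from a witness to $\chi_s(C_{3m+1})=4$. The only difference is that your case split is unnecessary: if the two padding vertices share a triple, the restricted partition has a single deficient part, your extension argument applies verbatim, and so the paper handles both cases at once without invoking Lemma~\ref{lemma:chi_s^1_cycles}.
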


\begin{proof}
By Lemma~\ref{lemma:compare_chi_s} and Theorem~\ref{thm:chi_s_cycle_path}, all the quantities in the lemma are upper bounded by the asserted values. The matching lower bounds follow from Lemma~\ref{lemma:compare_chi_s} together with Lemma~\ref{lemma:chi_s^1_cycles}, except for cycles whose number of vertices is congruent to $1$ modulo $3$. To complete this remaining case, let $m$ be a positive integer, and let us show that $C_{3m+1}$ is not strongly $(3,2)$-colorable. Indeed, by Theorem~\ref{thm:chi_s_cycle_path}, it holds that $\chi_s(C_{3m+1})=4$, implying that $C_{3m+1}$ is not strongly $3$-colorable. Therefore, after adding to this graph two isolated vertices, there exists a partition of the resulting vertex set into parts of size exactly three, such that no proper $3$-coloring assigns distinct colors to the vertices of each part. By removing the two added isolated vertices from the partition, we obtain a partition of the vertex set of $C_{3m+1}$ into parts of size at most three, with at most two parts of size strictly smaller than three, such that no proper $3$-coloring assigns distinct colors to the vertices of each part. This completes the proof.
\end{proof}

\section{Stochastic-Dominance Envy-Freeness under Conflict Constraints}\label{sec:SD}

In this section, we study feasible SD-EF1 allocations under conflict constraints. Our objective is to determine, for a given graph $G$, the positive integers $\ell$ for which $G$ admits a feasible SD-EF1 allocation to $\ell$ agents with respect to any weak order shared by all agents. We provide a precise characterization of these integers in terms of the strong chromatic number of type $1$ of $G$, as stated in Theorem~\ref{thm:intro_SD}. We also observe that this characterization fails to hold when the agents have heterogeneous weak orders.

\subsection{Reformulations of SD-EF1}

We begin with a couple of lemmas, providing alternative formulations of SD-EF1 allocations to agents with a common weak order.

\begin{lemma}\label{lemma:SD<=1}
Let $M$ be a finite set of items, let $\succeq$ be a weak order on $M$, let $\ell$ be a positive integer, and let $\calA = (A_1, \ldots, A_\ell)$ be an allocation of $M$ to $\ell$ agents.
The allocation $\calA$ is SD-EF1 with respect to the common weak order $\succeq$ if and only if for all $i,i' \in [\ell]$ and $\alpha \in M$, it holds that
\begin{eqnarray}\label{eq:SD<=1}
\Big | | A_i \cap S_\alpha | - |A_{i'} \cap S_\alpha | \Big | \leq 1,
\end{eqnarray}
where $S_\alpha = \{ x \in M \mid x \succeq \alpha\}$.
\end{lemma}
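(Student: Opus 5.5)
We prove the two directions separately, fixing $i,i' \in [\ell]$ and writing $a_\alpha = |A_i \cap S_\alpha|$ and $b_\alpha = |A_{i'} \cap S_\alpha|$ for $\alpha \in M$.

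For the ``if'' direction, assume~\eqref{eq:SD<=1} holds for all pairs and all $\alpha$. We must find $S \subseteq A_{i'}$ with $|S| \leq 1$ such that $a_\alpha \geq |(A_{i'} \setminus S) \cap S_\alpha|$ for every $\alpha$. If $a_\alpha \geq b_\alpha$ for all $\alpha$, take $S = \emptyset$. Otherwise, among the items of $A_{i'}$, let $x$ be a most preferred one under $\succeq$ (ties broken arbitrarily), and set $S = \{x\}$. Take any $\alpha$ with $b_\alpha \geq 1$. Since $A_{i'} \cap S_\alpha$ is nonempty and $S_\alpha$ is an upper set of $\succeq$, the item $x$ belongs to $S_\alpha$. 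Hence $|(A_{i'} \setminus S) \cap S_\alpha| = b_\alpha - 1 \leq a_\alpha$, where the inequality is~\eqref{eq:SD<=1}. If instead $b_\alpha = 0$, the required inequality is trivial. This direction is routine; the only point needing care is that $S_\alpha$ is upward closed, so a top item of $A_{i'}$ lies in every nonempty $A_{i'} \cap S_\alpha$.

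For the ``only if'' direction, assume $\calA$ is SD-EF1. Applying the definition to the ordered pair $(i,i')$ gives $S \subseteq A_{i'}$ with $|S| \leq 1$ and $a_\alpha \geq |(A_{i'} \setminus S) \cap S_\alpha| \geq b_\alpha - 1$ for all $\alpha$. Symmetrically, applying it to the pair $(i',i)$ gives $b_\alpha \geq a_\alpha - 1$. Together these two inequalities give $|a_\alpha - b_\alpha| \leq 1$, which is~\eqref{eq:SD<=1}.

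The main obstacle is conceptual rather than technical. In the ``if'' direction the single removed item must work simultaneously for all thresholds $\alpha$, and choosing a top element of $A_{i'}$ is exactly what guarantees this.
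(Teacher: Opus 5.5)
Your proof is correct and follows essentially the same route as the paper. The forward direction applies the SD-EF1 condition to both ordered pairs $(i,i')$ and $(i',i)$. The converse removes a most-preferred item of $A_{i'}$, using that $S_\alpha$ is upward closed. Your preliminary case with $S=\emptyset$ is harmless but unnecessary; the paper only separates out the case $A_{i'}=\emptyset$.
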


\begin{proof}
Suppose first that $\calA$ is SD-EF1 with respect to the common weak order $\succeq$, and let $i,i' \in [\ell]$ and $\alpha \in M$. By the SD-EF1 property, there exists a set $S \subseteq A_{i'}$ with $|S| \leq 1$, such that $A_i \succeq^{\SD} A_{i'} \setminus S$, implying that
\[|A_i \cap S_\alpha | \geq |(A_{i'} \setminus S) \cap S_\alpha| \geq |A_{i'} \cap S_\alpha |-1.\]
By symmetry, it further follows that $|A_{i'} \cap S_\alpha | \geq |A_{i} \cap S_\alpha |-1$. Combining these inequalities, we obtain the inequality in~\eqref{eq:SD<=1}, as required.

Conversely, suppose that $\calA$ satisfies the inequalities in~\eqref{eq:SD<=1}. To show that $\calA$ is SD-EF1 with respect to the common weak order $\succeq$, let $i,i' \in [\ell]$. If $A_{i'} = \emptyset$, then it clearly holds that $A_i \succeq^{\SD} A_{i'}$. Otherwise, let $\beta$ denote a most-preferred item in $A_{i'}$ with respect to $\succeq$, that is, an item $\beta \in A_{i'}$ satisfying $\beta \succeq x$ for all $x \in A_{i'}$. Fix $\alpha \in M$. If $\beta \in S_\alpha$, then
\[ |A_i \cap S_\alpha| \geq |A_{i'} \cap S_\alpha|-1 = |(A_{i'} \setminus \{\beta\}) \cap S_\alpha|,\]
where the inequality follows from~\eqref{eq:SD<=1}.
If $\beta \notin S_\alpha$, then since $\beta$ is most preferred in $A_{i'}$, no item of $A_{i'}$ belongs to $S_\alpha$, hence
\[ |A_i \cap S_\alpha| \geq 0 = |(A_{i'} \setminus \{\beta\}) \cap S_\alpha|.\]
This shows that $A_i \succeq^{\SD} A_{i'} \setminus \{\beta\}$, and we are done.
\end{proof}

The next reformulation of SD-EF1 allocations to agents with a common weak order resembles that of~\cite[Lemma~1]{BarmanELS25} and uses the following definition.

\begin{definition}\label{def:non-inc-order-part}
Let $M$ be a finite set, and let $t$ be a positive integer.
An {\em ordered partition} of $M$ into $t$ parts is a sequence $(Q_1, \ldots, Q_t)$ of subsets of $M$ that form a partition of $M$. It is called {\em non-increasing with respect to a weak order $\succeq$} if for all $j,j' \in [t]$ with $j < j'$ and for all $x \in Q_j$ and $y \in Q_{j'}$, it holds that $x \succeq y$.
\end{definition}

\begin{lemma}\label{lemma:SD_Pj}
Let $M$ be a finite set of items, let $\succeq$ be a weak order on $M$, let $\ell$ be a positive integer, and let $\calA = (A_1, \ldots, A_\ell)$ be an allocation of $M$ to $\ell$ agents. The allocation $\calA$ is SD-EF1 with respect to the common weak order $\succeq$ if and only if there exists an ordered partition $(Q_1, \ldots, Q_t)$ of $M$, non-increasing with respect to $\succeq$, into $t = \lceil |M|/\ell \rceil$ parts of size exactly $\ell$ except possibly $Q_t$, such that $|A_i \cap Q_j| = 1$ for all $i \in [\ell]$ and $j \in [t-1]$, and $|A_i \cap Q_t| \leq 1$ for all $i \in [\ell]$.
\end{lemma}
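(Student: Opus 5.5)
The plan is to prove the two directions separately, using Lemma~\ref{lemma:SD<=1} as the working characterization of SD-EF1 for a common weak order. Throughout, write $n=|M|$ and $t=\lceil n/\ell \rceil$.

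\emph{The ``if'' direction.} Assume the partition $(Q_1,\ldots,Q_t)$ exists. Fix $\alpha\in M$ and let $S_\alpha=\{x \mid x\succeq\alpha\}$. Let $j$ be the largest index with $Q_j\cap S_\alpha\neq\emptyset$, which exists because $\alpha\in S_\alpha$.

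The first step is to show that $Q_{j'}\subseteq S_\alpha$ for every $j'<j$. Take $y\in Q_j\cap S_\alpha$. Since the partition is non-increasing, every $x\in Q_{j'}$ satisfies $x\succeq y\succeq\alpha$, so $x\in S_\alpha$. The second step is to note that $Q_{j'}\cap S_\alpha=\emptyset$ for every $j'>j$, by the choice of $j$.

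It follows that
\[|A_i\cap S_\alpha| = \sum_{j'<j}|A_i\cap Q_{j'}| + |A_i\cap Q_j\cap S_\alpha|.\]
If $j<t$, the sum equals $j-1$ for every agent $i$, and the last term lies in $\{0,1\}$. If $j=t$, the same holds, since every $Q_{j'}$ with $j'<t$ meets $A_i$ in exactly one item and $|A_i\cap Q_t|\le 1$. Hence each $|A_i\cap S_\alpha|$ lies in $\{j-1,j\}$, so any two of these counts differ by at most one. Inequality~\eqref{eq:SD<=1} holds, and Lemma~\ref{lemma:SD<=1} gives that $\calA$ is SD-EF1.

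\emph{The ``only if'' direction.} This direction requires building the partition, and I expect it to be the main obstacle, because ties in $\succeq$ mean the blocks cannot simply be read off a fixed sorted order. Assume $\calA$ is SD-EF1, so~\eqref{eq:SD<=1} holds.

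Group $M$ into indifference classes $E_1,\ldots,E_s$, listed from most preferred to least preferred. The sets $S_\alpha$ are exactly the prefixes $P_k=E_1\cup\cdots\cup E_k$. Property~\eqref{eq:SD<=1} says that for each $k$, the counts $c_i(k)=|A_i\cap P_k|$ take at most two consecutive values $\{m_k,m_k+1\}$.

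The idea is to linearly order $M$ by refining $\succeq$ cleverly inside each class, then cut the resulting list into consecutive blocks of size $\ell$. Process the classes in order, maintaining the invariant that after finishing $P_{k-1}$, the set of items already listed splits into full blocks plus a current partial block containing at most one item per agent. In that partial block, the agents already served are exactly those with $c_i(k-1)=m_{k-1}+1$; this is forced, since all agents have received the same number of items from the completed blocks.

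Now append the items of $E_k$ by a round-robin rule. First list items of $E_k$ owned by agents not yet served in the current block, until the block is full. Then start new blocks, listing one item per agent.

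The step to check carefully is that this greedy procedure never gets stuck. Suppose a block still needs an agent who has no remaining item in $E_k$, while some other agent still has items left. Then after $P_k$, the agent with items left would end at least two items ahead of the stuck agent, which contradicts $|c_i(k)-c_{i'}(k)|\le1$.

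So I will phrase the argument via the counts. Let $a_i=|A_i\cap E_k|$. The constraint at $P_k$ forces the final counts into $\{m_k,m_k+1\}$. A direct counting argument then shows that the multiset of $a_i$ values fills the current partial block, then some number of full blocks, then a new partial block, with each agent appearing once per block. Finally, ordering all items within each block consistently with $\succeq$ makes the partition non-increasing, because each block uses items from the current class or from earlier classes, in order. Every block except the last has size exactly $\ell$, so $t=\lceil n/\ell\rceil$, and the stated intersection conditions hold by construction.
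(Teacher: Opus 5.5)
Your proof is correct. The ``if'' direction is essentially the paper's argument: the paper writes $S_\alpha = Q_1\cup\cdots\cup Q_{t'-1}\cup Q'_{t'}$ with $Q'_{t'}\subseteq Q_{t'}$, which is exactly your choice of the largest $j$ with $Q_j\cap S_\alpha\neq\emptyset$.

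For the ``only if'' direction you take a different and heavier route. The paper uses neither indifference classes nor a global sorted order.
\begin{itemize}
\item It takes $\alpha$ to be a least-preferred item in~\eqref{eq:SD<=1}, which gives $|A_i|\in\{t-1,t\}$.
\item It sorts each bundle $A_i$ separately in non-increasing order and lets $Q_j$ be the set of $j$th items of all bundles. The size and intersection conditions are then immediate.
\item Non-increasingness follows from a two-line contradiction. Suppose $x\in Q_j\cap A_i$ and $y\in Q_{j'}\cap A_{i'}$ with $j<j'$ and $x\not\succeq y$. Then $|A_i\cap S_y|\le j-1$, while $|A_{i'}\cap S_y|\ge j'\ge j+1$.
\end{itemize}
So the difficulty with ties that you anticipated disappears once one sorts within each bundle rather than globally.

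Your class-by-class round-robin instead builds a single linear refinement of $\succeq$ and cuts it into consecutive blocks. It uses~\eqref{eq:SD<=1} only on the prefixes $P_k$. This is more explicit and algorithmic, close to a picking sequence. The cost is bookkeeping that you only sketch. In particular, you need that once the current block is filled, the numbers of remaining items of $E_k$ held by the agents lie in some $\{p,p+1\}$. This is easy: all agents then have equally many listed items, and their final counts at $P_k$ differ by at most one.

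In your ``stuck'' step, say explicitly that the agent with items left must already be served in the current block. An unserved such agent could simply be added to the block. It is this that makes it finish at least two ahead of the stuck agent.
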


\begin{proof}
Suppose first that $\calA$ is SD-EF1 with respect to $\succeq$. By Lemma~\ref{lemma:SD<=1}, it follows that for all $i,i' \in [\ell]$ and $\alpha \in M$, it holds that $| | A_i \cap S_\alpha | - |A_{i'} \cap S_\alpha | | \leq 1$, where $S_\alpha = \{x \in M \mid x \succeq \alpha\}$.
In particular, by choosing $\alpha$ as a least-preferred item in $M$ with respect to $\succeq$, we derive that the sizes of the sets in $\calA$ differ by at most $1$ from each other. Since they form a partition of $M$, it follows that $|A_i| \in \{t-1,t\}$ for all $i \in [\ell]$.
For each $i \in [\ell]$, choose a permutation of the elements of $A_i$ that is non-increasing with respect to $\succeq$. For each $j \in [t]$, let $Q_j$ denote the set of all $j$th items in these permutations, over all agents that have $j$th item. We claim that the ordered partition $(Q_1, \ldots, Q_t)$ of $M$ satisfies the assertion of the lemma. First, using the sizes of the sets of $\calA$, it follows that $|Q_j| = \ell$ for all $j \in [t-1]$ and that $|Q_t| \leq \ell$. Furthermore, the construction guarantees that $|A_i \cap Q_j| = 1$ for all $i \in [\ell]$ and $j \in [t-1]$, and that $|A_i \cap Q_t| \leq 1$ for all $i \in [\ell]$. To verify that $(Q_1, \ldots, Q_t)$ is non-increasing with respect to $\succeq$, consider indices $j,j' \in [t]$ with $j < j'$ and items $x \in Q_j$ and $y \in Q_{j'}$. Let $i,i' \in [\ell]$ be the indices satisfying $x \in A_i$ and $y \in A_{i'}$. Suppose for contradiction that $x \succeq y$ does not hold, and thus $x \notin S_y$. Since $x$ is the $j$th item in the non-increasing permutation of $A_i$, it follows that $|A_i \cap S_y| \leq j-1$. On the other hand, since $y$ is the $j'$th item in the non-increasing permutation of $A_{i'}$, it follows that $|A_{i'} \cap S_y| \geq j' \geq j+1$. This implies that $| |A_i \cap S_y| - |A_{i'} \cap S_y| | \geq 2$, contradicting Lemma~\ref{lemma:SD<=1}.

For the other direction, suppose that there exists an ordered partition $(Q_1, \ldots, Q_t)$ of $M$ into $t$ parts as in the lemma. To show that $\calA$ is SD-EF1 with respect to $\succeq$, fix $i,i' \in [\ell]$ and $\alpha \in M$, and consider the set $S_\alpha$. Since the given ordered partition is non-increasing, there exist an index $t' \in [t]$ and a set $Q'_{t'} \subseteq Q_{t'}$ such that $S_\alpha = Q_1 \cup \cdots \cup Q_{t'-1} \cup Q'_{t'}$. The properties of the ordered partition imply that $|A_{i} \cap Q_j| = |A_{i'} \cap Q_j| = 1$ for all $j \in [t'-1]$, and that $|A_{i} \cap Q'_{t'}| \leq 1$ and $|A_{i'} \cap Q'_{t'}| \leq 1$. This implies that each of $A_i$ and $A_{i'}$ includes either $t'-1$ or $t'$ elements of $S_\alpha$, hence $| | A_i \cap S_\alpha | - |A_{i'} \cap S_\alpha | | \leq 1$. By Lemma~\ref{lemma:SD<=1}, it follows that $\calA$ is SD-EF1 with respect to the common weak order $\succeq$, as desired.
\end{proof}

\subsection{SD-EF1 under Conflict Constraints for Common Weak Orders}

Equipped with Lemma~\ref{lemma:SD_Pj}, we relate the existence of feasible SD-EF1 allocations for common weak orders to the strong chromatic number of type $1$. The following two statements establish the sufficient and necessary conditions asserted in Theorem~\ref{thm:intro_SD}.

\begin{theorem}\label{thm:SD_upper}
Let $G=(V,E)$ be a graph, and let $\ell$ be an integer satisfying $\ell \geq \chi_s^1(G)$.
Then, for every weak order $\succeq$ on $V$, the graph $G$ admits a feasible SD-EF1 allocation to $\ell$ agents with respect to the common weak order $\succeq$.
\end{theorem}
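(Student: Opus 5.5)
The plan is to combine the reformulation in Lemma~\ref{lemma:SD_Pj} with the monotonicity of strong colorability from Lemma~\ref{lemma:monotone_chi_r}. Fix a weak order $\succeq$ on $V$ and set $n=|V|$ and $t=\lceil n/\ell\rceil$. Since $\ell \geq \chi_s^1(G)$, the graph $G$ is strongly $(\chi_s^1(G),1)$-colorable by definition. Applying Lemma~\ref{lemma:monotone_chi_r} repeatedly then shows that $G$ is strongly $(\ell,1)$-colorable. If $V$ is empty, the allocation of empty sets works trivially, so assume $n\geq 1$.

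Next, I will build an ordered partition of $V$ that is non-increasing with respect to $\succeq$. List the vertices of $V$ as $v_1,\ldots,v_n$ so that $v_a \succeq v_b$ whenever $a<b$; such a list exists because $\succeq$ is complete and transitive. For $j\in[t-1]$, let $Q_j=\{v_{(j-1)\ell+1},\ldots,v_{j\ell}\}$, and let $Q_t$ consist of the remaining vertices, so that $1\le |Q_t|\le \ell$. Because each part is a consecutive block of the sorted list, the ordered partition $(Q_1,\ldots,Q_t)$ is non-increasing in the sense of Definition~\ref{def:non-inc-order-part}. All parts have size exactly $\ell$ except possibly $Q_t$, so at most one part is smaller than $\ell$.

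By strong $(\ell,1)$-colorability, there is a proper $\ell$-coloring $c:V\to[\ell]$ of $G$ in which each color class meets each $Q_j$ in at most one vertex. Define $A_i=c^{-1}(i)$ for $i\in[\ell]$. Then $(A_1,\ldots,A_\ell)$ is an allocation of $V$, and it is feasible because each $A_i$ is an independent set. For $j\in[t-1]$, the part $Q_j$ has exactly $\ell$ vertices, all with distinct colors among the $\ell$ available ones. Hence every color appears in $Q_j$ exactly once, that is, $|A_i\cap Q_j|=1$. For $Q_t$ we have $|A_i\cap Q_t|\le 1$ directly. Lemma~\ref{lemma:SD_Pj} then gives that the allocation is SD-EF1 with respect to the common weak order $\succeq$.

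There is no real obstacle here. The only points needing care are the monotonicity step, which is needed because $\ell$ may exceed $\chi_s^1(G)$, and the counting argument that upgrades ``at most one'' to ``exactly one'' on the full blocks.
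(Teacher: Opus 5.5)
Your proposal is correct and follows the paper's proof essentially step for step. Like the paper, you sort $V$ non-increasingly by $\succeq$, cut the list into consecutive blocks of size $\ell$, and use Lemma~\ref{lemma:monotone_chi_r} to pass to strong $(\ell,1)$-colorability. You then upgrade ``at most one'' to ``exactly one'' on the full blocks by counting, and conclude via Lemma~\ref{lemma:SD_Pj}.
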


\begin{proof}
Let $G=(V,E)$ be a graph, let $\ell \geq \chi_s^1(G)$, and let $\succeq$ be a weak order on $V$. Choose a permutation of the vertices of $V$ that is non-increasing with respect to $\succeq$. Let $t = \lceil |V|/\ell \rceil$, and consider the ordered partition $(Q_1, \ldots, Q_t)$ of $V$ into $t$ parts of consecutive vertices along the chosen permutation, each of size exactly $\ell$ except possibly $Q_t$. By construction, this ordered partition is non-increasing with respect to $\succeq$. By Lemma~\ref{lemma:monotone_chi_r}, the assumption $\ell \geq \chi_s^1(G)$ implies that the graph $G$ is strongly $(\ell,1)$-colorable. Therefore, $G$ admits a proper $\ell$-coloring such that each color class intersects each of $Q_1, \ldots, Q_t$ in at most one vertex. Since $|Q_j|=\ell$ for all $j \in [t-1]$, each color class intersects each of the parts $Q_1, \ldots, Q_{t-1}$ in exactly one vertex. Let $\calA = (A_1, \ldots, A_\ell)$ be a sequence of the color classes of this coloring, ordered arbitrarily. It holds that $|A_i \cap Q_j|=1$ for all $i \in [\ell]$ and $j \in [t-1]$, and $|A_i \cap Q_t| \leq 1$ for all $i \in [\ell]$. By Lemma~\ref{lemma:SD_Pj}, $\calA$ is an SD-EF1 allocation of $V$ with respect to the common weak order $\succeq$. Since the sets $A_1, \ldots, A_\ell$ are independent in $G$, this allocation is feasible, and the proof is complete.
\end{proof}

\begin{theorem}\label{thm:SD_lower}
Let $G=(V,E)$ be a graph, and let $\ell$ be a positive integer satisfying $\ell < \chi_s^1(G)$.
Then, there exists a weak order $\succeq$ on $V$, such that the graph $G$ admits no feasible SD-EF1 allocation to $\ell$ agents with respect to the common weak order $\succeq$.
\end{theorem}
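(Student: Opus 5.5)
Since $\ell < \chi_s^1(G)$, the definition of $\chi_s^1(G)$ as a least integer shows that $G$ is not strongly $(\ell,1)$-colorable. (Lemma~\ref{lemma:monotone_chi_r} tells us this failure persists for every smaller number of colors as well, although we only need it at $\ell$ itself.) So there is a partition $\{Q_1,\ldots,Q_t\}$ of $V$ into parts of size at most $\ell$, at most one of which has size strictly smaller than $\ell$, such that no proper $\ell$-coloring of $G$ gives each color class at most one vertex from each part. Relabel the parts so that every part of size smaller than $\ell$, if there is one, is $Q_t$. Then $|Q_j|=\ell$ for all $j\in[t-1]$ and $|Q_t|\le\ell$. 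Since $|V|=(t-1)\ell+|Q_t|$ with $1\le|Q_t|\le\ell$, we get $t=\lceil |V|/\ell\rceil$.

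We define the weak order $\succeq$ so that these parts become the indifference classes, ranked by index. For $x\in Q_j$ and $y\in Q_{j'}$, set $x\succeq y$ if and only if $j\le j'$. This relation is complete and transitive, so it is a weak order. Items in the same part are tied, and items in earlier parts are strictly preferred to items in later parts.

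Suppose, towards a contradiction, that $\calA=(A_1,\ldots,A_\ell)$ is a feasible SD-EF1 allocation with respect to $\succeq$. We extract the counts $|A_i\cap Q_j|$ from Lemma~\ref{lemma:SD<=1} rather than from Lemma~\ref{lemma:SD_Pj}, since the latter would produce some other non-increasing partition that need not coincide with ours. Fix $j\in[t]$ and pick any $\alpha\in Q_j$. Then $S_\alpha=Q_1\cup\cdots\cup Q_j$, so by Lemma~\ref{lemma:SD<=1}, for each $j$ the counts $|A_i\cap (Q_1\cup\cdots\cup Q_j)|$ over $i\in[\ell]$ differ pairwise by at most one.

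We now show by induction on $j$ that $|A_i\cap Q_j|=1$ for every $j\le t-1$ and every $i$. Assume this holds for all indices below $j$. Then each agent has exactly $j-1$ items in $Q_1\cup\cdots\cup Q_{j-1}$, so its count in $Q_1\cup\cdots\cup Q_j$ is $j-1+|A_i\cap Q_j|$. These $\ell$ counts lie within one of each other and sum to $(j-1)\ell+|Q_j|$. When $|Q_j|=\ell$ (in particular for every $j\le t-1$), the values $|A_i\cap Q_j|$ are nonnegative integers that pairwise differ by at most one and sum to $\ell$, so all of them equal $1$. For $j=t$, the same argument gives values in $\{0,1\}$, because they pairwise differ by at most one and sum to $|Q_t|\le\ell$. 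Hence $|A_i\cap Q_j|\le1$ for all $i,j$.

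Finally, since $\calA$ is feasible, the sets $A_1,\ldots,A_\ell$ are independent. They are therefore the color classes of a proper $\ell$-coloring of $G$ in which each color class meets each $Q_j$ in at most one vertex. This contradicts the choice of the partition. The only delicate points are matching $t$ to $\lceil|V|/\ell\rceil$, making sure the short part is placed last, and reading the per-part counts directly off Lemma~\ref{lemma:SD<=1}; none of these is a real obstacle.
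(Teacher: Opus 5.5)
Your proof is correct and follows the paper's argument. You use the same partition witnessing that $G$ is not strongly $(\ell,1)$-colorable, the same weak order whose ranked indifference classes are the parts, and the same contradiction with the choice of that partition. The only deviation is how you get $|A_i\cap Q_j|\le 1$: you derive it by a counting induction directly from Lemma~\ref{lemma:SD<=1}, whereas the paper invokes Lemma~\ref{lemma:SD_Pj} and notes that the resulting non-increasing partition $(Q'_1,\ldots,Q'_t)$ is forced to equal $(Q_1,\ldots,Q_t)$, because the $Q_j$ are exactly the indifference classes and $|Q'_j|=|Q_j|=\ell$ for $j<t$. So your worry that that partition ``need not coincide'' does not arise here, although your route is equally valid.
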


\begin{proof}
Let $G=(V,E)$ be a graph, and let $\ell$ be a positive integer satisfying $\ell < \chi_s^1(G)$. Since $G$ is not strongly $(\ell,1)$-colorable, there exists a partition of $V$ into $t = \lceil |V| / \ell \rceil$ parts $Q_1, \ldots, Q_t$, with $|Q_j|=\ell$ for all $j \in [t-1]$ and $|Q_t| \leq \ell$, such that no proper $\ell$-coloring of $G$ assigns distinct colors to the vertices of each part. Define a weak order $\succeq$ on $V$, such that for all $x,y \in V$, it holds that $x \succeq y$ if and only if the indices $j,j' \in [t]$ for which $x \in Q_j$ and $y \in Q_{j'}$ satisfy $j \leq j'$. Thus, all vertices in the same part $Q_j$ are tied under $\succeq$, while vertices in parts with smaller indices are strictly preferred under $\succeq$ to vertices in parts with larger indices.

We claim that $G$ admits no feasible SD-EF1 allocation to $\ell$ agents with respect to the common weak order $\succeq$. Suppose otherwise, and let $\calA = (A_1, \ldots, A_\ell)$ be such an allocation. Since $\calA$ is feasible, its sets are independent in $G$, and thus induce a proper $\ell$-coloring of $G$. Since $\calA$ is SD-EF1 with respect to $\succeq$, Lemma~\ref{lemma:SD_Pj} implies that there exists a non-increasing ordered partition $(Q'_1, \ldots, Q'_t)$ of $V$ into $t$ parts of size exactly $\ell$ except possibly $Q'_t$, such that $|A_i \cap Q'_j| = 1$ for all $i \in [\ell]$ and $j \in [t-1]$, and $|A_i \cap Q'_t| \leq 1$ for all $i \in [\ell]$. By the definition of $\succeq$, its equivalence classes, ordered from most preferred to least preferred, are precisely $Q_1, \ldots, Q_t$. Since $|Q_j|=|Q'_j|=\ell$ for all $j \in [t-1]$, the non-increasing property of $(Q'_1, \ldots, Q'_t)$ forces, successively, that $Q'_j = Q_j$ for all $j \in [t-1]$, and thus $Q'_t = Q_t$ as well. Consequently, $|A_i \cap Q_j| \leq 1$ for all $i \in [\ell]$ and $j \in [t]$. Therefore, the proper $\ell$-coloring of $G$ induced by $\calA$ assigns distinct colors to the vertices of each of $Q_1, \ldots, Q_t$, contradicting the choice of this partition of $V$. This completes the proof.
\end{proof}

\subsection{SD-EF1 under Conflict Constraints for Heterogeneous Weak Orders}

Theorem~\ref{thm:SD_upper} shows that a graph admits a feasible SD-EF1 allocation to agents with any common weak order as long as the number of agents is at least its strong chromatic number of type $1$. The assumption that the weak order is common to all agents is essential in this result. This follows from related constructions presented in~\cite[Theorem~6]{CooksonE025} and in~\cite[Theorem~1]{BarmanELS25}. In the former construction, the conflict graph is a perfect matching on eight vertices, and in the latter, it is the same graph with one vertex removed. For completeness, we present here the smaller construction in full detail.

\begin{proposition}\label{prop:3K2+K1}
There exist a graph $G=(V,E)$ with $\chi_s^1(G)=2$ and two weak orders $\succeq_1$ and $\succeq_2$ on $V$, such that $G$ admits no feasible SD-EF1 allocation to two agents associated with $\succeq_1$ and $\succeq_2$.
\end{proposition}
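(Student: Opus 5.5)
We take $G$ to be $3K_2+K_1$, namely three disjoint edges $a_1b_1$, $a_2b_2$, $a_3b_3$ and an isolated vertex $z$, so $|V|=7$. This is the graph obtained by deleting one vertex from the perfect matching on eight vertices used in~\cite{CooksonE025}, in line with the construction of~\cite[Theorem~1]{BarmanELS25}. The proof has two independent parts: computing $\chi_s^1(G)=2$, and exhibiting two weak orders for which no feasible allocation is SD-EF1.

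\emph{The value $\chi_s^1(G)=2$.} Since $\Delta(G)=1$, Lemma~\ref{lemma:chi_s_1-D} gives $\chi_s^1(G)\geq 2$. For the upper bound, consider any partition of $V$ into three pairs and one singleton, which is the only shape allowed for $(2,1)$. Form the auxiliary graph $H$ on $V$ whose edges are the edges of $G$ together with the three pairs. Every vertex of $H$ has degree at most $2$, and along any cycle of $H$ the $G$-edges and pair-edges alternate. Hence every cycle of $H$ is even, so $H$ is bipartite. A proper $2$-coloring of $H$ is then a proper $2$-coloring of $G$ that assigns distinct colors within each part, which proves that $G$ is strongly $(2,1)$-colorable.

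\emph{The obstruction.} A feasible allocation to two agents must split each edge $a_jb_j$, and $z$ can go to either agent. Up to swapping the agents, this leaves $2^3\cdot 2=16$ allocations. We will use the prefix reformulation that follows from the proof of Lemma~\ref{lemma:SD<=1} applied to a single agent: agent $i$ does not SD-envy agent $i'$ up to one item if and only if, for every prefix $S$ of $\succeq_i$, we have $|A_i\cap S|\geq |A_{i'}\cap S|-1$. The one-sided direction of that proof does not use that the order is common, so it applies here. We will choose strict orders $\succeq_1,\succeq_2$ whose top prefixes interleave the vertices $a_j$ and $b_j$ differently. The aim is that agent~$1$'s constraints force her to take the $a$-side of certain edges, while agent~$2$'s constraints, read on the complementary sides, force the opposite choices.

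The first candidate to try is $\succeq_1$: $a_1,a_2,b_3,z,\dots$ and $\succeq_2$: $b_1,a_3,b_2,z,\dots$, or a variant of it. The verification is a short case analysis on agent~$1$'s bundle. First, the prefix conditions at the top two or three positions of each order pin down which endpoints agent~$1$ may hold. Next, the remaining branches are closed off by the prefix conditions further down, including the position of $z$. Finally, the full-length prefix forces $|A_1|,|A_2|\in\{3,4\}$, which also constrains where $z$ goes.

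The main obstacle is choosing orders that actually kill all $16$ allocations, since symmetric choices of the two orders fail. For instance, if $\succeq_2$ is $\succeq_1$ with each $a_j$ replaced by $b_j$, the constraints coincide because agent~$2$ holds $b_j$ exactly when agent~$1$ holds $a_j$. If the first candidate fails, we will reproduce the explicit orders of~\cite[Theorem~1]{BarmanELS25}. We will then check them exhaustively against the $16$ allocations, organizing the check by agent~$1$'s number of $a$-vertices.
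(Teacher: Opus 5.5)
\textbf{There is a genuine gap in the obstruction part.}

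Your computation of $\chi_s^1(G)=2$ is fine. The auxiliary-graph argument is a valid self-contained substitute for the paper's appeal to Lemma~\ref{lemma:chi_s^inf<=C}: pair-edges and matching edges alternate, so $H$ is bipartite. Your one-sided prefix criterion is also correct; for two agents it says that each agent must receive at least $\lfloor k/2\rfloor$ of her top $k$ items.

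The gap is in the obstruction, which is the actual content of the proposition. You never fix two orders and prove that no feasible allocation is SD-EF1. Moreover, your candidate cannot be completed to a working example, whatever the ``$\dots$'' are. Agent~$1$'s top three items $\{a_1,a_2,b_3\}$ and agent~$2$'s top three items $\{b_1,a_3,b_2\}$ are exactly complementary endpoints of the three edges. Hence $A_1=\{a_1,a_2,b_3\}$, $A_2=\{b_1,b_2,a_3,z\}$ is feasible. Each agent holds her entire top three, so she holds at least $\min(k,3)\geq\lfloor k/2\rfloor$ of her top $k$ items for every $k\leq 7$. This allocation is therefore SD-EF1.

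Your design principle, with agent~$2$'s constraints ``read on the complementary sides'', is precisely what makes the two preferences compatible. It is the same failure you yourself identify for the order with $a_j$ and $b_j$ swapped, just restricted to the top three. The fallback of reproducing the orders of~\cite{BarmanELS25} is not a proof either. You do not state those orders, the paper notes that the construction comes from a somewhat different setting, and the exhaustive check is left undone.

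What is needed is for the agents to compete for the same items. The paper's orders, in your notation, are
\[\succeq_1:\ a_1,a_2,b_1,a_3,b_2,z,b_3 \qquad\text{and}\qquad \succeq_2:\ a_1,a_3,b_1,a_2,b_3,z,b_2.\]
Write $x_j=1$ if agent~$1$ gets $a_j$.
\begin{itemize}
\item The top-two prefixes give $x_1+x_2\geq 1$ and $x_1+x_3\leq 1$.
\item The common top-four set $\{a_1,b_1,a_2,a_3\}$ must be split evenly. Since agent~$1$ gets exactly one of $a_1,b_1$, this gives $x_2+x_3=1$.
\item Together these force $x_2=1$ and $x_3=0$. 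So agent~$1$ holds her last-ranked item $b_3$, and agent~$2$ holds her last-ranked item $b_2$.
\item Each agent then has only two items among her top six unless she also holds $z$. The top-six prefixes require three, so both agents need $z$, a contradiction.
\end{itemize}
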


\begin{proof}
Consider the graph $G$ defined as a perfect matching on six vertices together with an isolated vertex, namely, the vertex set is $V = \{v_1, \ldots, v_7\}$, and the edges are $\{v_{2i-1},v_{2i}\}$ for $i \in [3]$. We first observe that $\chi_s^1(G)=2$. Indeed, we have $2 \leq \chi_s^1(G) \leq \chi_s^\infty(G) \leq 2$, where the three inequalities follow, respectively, from Lemmas~\ref{lemma:chi_s_1-D},~\ref{lemma:compare_chi_s}, and~\ref{lemma:chi_s^inf<=C}. We let $\succeq_1$ and $\succeq_2$ be the two strict orders on $V$ defined as follows.
\[\succeq_1:~~~ v_1 \succ v_3 \succ v_2 \succ v_5 \succ v_4 \succ v_7 \succ v_6.\]
\[\succeq_2:~~~ v_1 \succ v_5 \succ v_2 \succ v_3 \succ v_6 \succ v_7 \succ v_4.\]

Any allocation of $G$ to two agents can be represented by a vector $x \in \{0,1\}^7$, where for each $i \in [7]$, $x_i=1$ if $v_i$ is assigned to agent $1$, and $x_i=0$ if $v_i$ is assigned to agent $2$. Further, a feasible allocation to two agents must assign one vertex from each edge to each agent, implying that
\begin{eqnarray}\label{eq:1}
x_1+x_2 = x_3+x_4 = x_5+x_6=1.
\end{eqnarray}

For the SD-EF1 property, we require that each agent receives, for every $j \in [3]$, at least $j$ of the top $2j$ items according to her order. Indeed, if she receives at most $j-1$ of these items, then the other agent receives at least $j+1$ of them, which remains at least $j$ even after the removal of one item. In particular, agent $1$ receives at least one of $\{v_1,v_3\}$, and agent $2$ receives at least one of $\{v_1,v_5\}$.
It thus follows that
\begin{eqnarray}\label{eq:2}
x_1+x_3 \geq 1 \mbox{ ~~~and~~~ } x_1+x_5 \leq 1.
\end{eqnarray}
Moreover, the two agents have the same set $\{v_1,v_2,v_3,v_5\}$ of top four items. Since each agent must receive at least two items from this set, it follows that $x_1+x_2+x_3+x_5=2$. Combining this with the equality $x_1+x_2=1$, given in~\eqref{eq:1}, yields that
\begin{eqnarray}\label{eq:3}
x_3+x_5=1.
\end{eqnarray}
Finally, considering the top six items of each agent, we obtain that $x_1+x_3+x_2+x_5+x_4+x_7 \geq 3$ for agent $1$, and that $x_1+x_5+x_2+x_3+x_6+x_7 \leq 3$ for agent $2$. Using $x_1+x_2+x_3+x_5=2$, it follows that
\begin{eqnarray}\label{eq:4}
x_4+x_7 \geq 1 \mbox{ ~~~and~~~ } x_6+x_7 \leq 1.
\end{eqnarray}

We claim that the above conditions are inconsistent. Indeed, if $x_1 = 1$, then it follows from~\eqref{eq:1} that $x_2=0$, and from~\eqref{eq:2} that $x_5=0$. It then follows from~\eqref{eq:3} that $x_3= 1$, and by applying~\eqref{eq:1} again, we obtain that $x_4=0$ and $x_6=1$. Similarly, if $x_1 = 0$, then it follows from~\eqref{eq:1} that $x_2=1$, and from~\eqref{eq:2} that $x_3=1$. It then follows from~\eqref{eq:3} that $x_5= 0$, and from~\eqref{eq:1} we obtain again that $x_4=0$ and $x_6=1$. However, these values of $x_4$ and $x_6$ leave no choice for $x_7$ satisfying~\eqref{eq:4}, so the described instance admits no feasible SD-EF1 allocation, completing the proof.
\end{proof}

\section{Envy-Freeness under Conflict Constraints for Additive Valuations}\label{sec:EF}

In this section, we study feasible EF1 and EF[1,1] allocations under conflict constraints for agents sharing common preferences. We obtain sufficient conditions on the number of agents for the existence of such allocations in terms of the strong chromatic numbers of types $1$ and $2$, respectively. We also show that these conditions are not necessary in general.

\subsection{EF1 and EF[1,1] from Ordered Partitions}

We begin with a couple of lemmas that offer sufficient conditions for EF1 and EF[1,1] fairness in the common valuation setting. To state them, we extend the notion of non-increasing ordered partitions, given in Definition~\ref{def:non-inc-order-part} for weak orders, to the context of additive valuations. An ordered partition of a set $M$ is called {\em non-increasing with respect to an additive valuation $w:\calP(M) \to \R$} if it is non-increasing with respect to the weak order on $M$ induced by $w$.

\begin{lemma}\label{lemma:EF1_Pj}
Let $M$ be a finite set of items, and let $w:\calP(M) \to \R$ be a non-decreasing additive valuation. 
For a positive integer $\ell$, let $(Q_1, \ldots, Q_t)$ be an ordered partition of $M$, non-increasing with respect to the valuation $w$, into $t = \lceil |M|/\ell \rceil$ parts of size exactly $\ell$ except possibly $Q_t$. Then every allocation $\calA = (A_1, \ldots, A_\ell)$ of $M$ to $\ell$ agents that satisfies $|A_i \cap Q_j| = 1$ for all $i \in [\ell]$ and $j \in [t-1]$, and $|A_i \cap Q_t| \leq 1$ for all $i \in [\ell]$, is EF1 with respect to the common valuation $w$.
\end{lemma}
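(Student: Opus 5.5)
The plan is to reduce to the SD-EF1 setting already handled in Section~\ref{sec:SD}. Let $\succeq$ denote the weak order on $M$ induced by $w$. By definition, the given ordered partition $(Q_1, \ldots, Q_t)$ being non-increasing with respect to $w$ means exactly that it is non-increasing with respect to $\succeq$ in the sense of Definition~\ref{def:non-inc-order-part}. Moreover, the allocation $\calA$ satisfies $|A_i \cap Q_j| = 1$ for $j \in [t-1]$ and $|A_i \cap Q_t| \leq 1$ for every agent $i$. These are precisely the hypotheses of the ``if'' direction of Lemma~\ref{lemma:SD_Pj}. Applying that lemma, $\calA$ is SD-EF1 with respect to the common weak order $\succeq$.

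Next I would invoke Lemma~\ref{lemma:SD->EF1}. We take $w_1 = \cdots = w_\ell = w$, which are non-decreasing additive valuations, so the induced weak orders are all equal to $\succeq$. Since $\calA$ is SD-EF1 with respect to $\succeq_1 = \cdots = \succeq_\ell = \succeq$, the lemma gives that $\calA$ is EF1 with respect to $w_1, \ldots, w_\ell$, that is, with respect to the common valuation $w$. This finishes the argument.

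There is essentially no obstacle. The only point needing care is that the notion of ``non-increasing with respect to $w$'' matches the weak-order notion used in Lemma~\ref{lemma:SD_Pj}, and this holds by the definition just given. One could also argue directly without that lemma. For any $i,i'$, remove from $A_{i'}$ its element in $Q_1$, if any. Then each remaining element of $A_{i'}$, lying in $Q_{j}$ with $j \geq 2$, is matched to the element of $A_i$ in $Q_{j-1}$, which has weakly larger value. This uses that $A_i$ meets every $Q_{j-1}$ with $j-1 \leq t-1$, and that all values are non-negative, so any unmatched element of $A_i$ only helps. I would keep this direct argument as a fallback, but the two-lemma reduction is cleaner.
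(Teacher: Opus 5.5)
Your proposal is correct and follows the paper's proof exactly: you apply the ``if'' direction of Lemma~\ref{lemma:SD_Pj} to the weak order induced by $w$ to get SD-EF1, and then Lemma~\ref{lemma:SD->EF1} with $w_1=\cdots=w_\ell=w$ to get EF1. Your fallback matching argument, which pairs each element of $A_{i'}$ in $Q_j$ with the element of $A_i$ in $Q_{j-1}$, is also valid, but it is not needed.
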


\begin{proof}
Let $\succeq$ denote the weak order on $M$ induced by the non-decreasing additive valuation $w$.
Since the ordered partition $(Q_1, \ldots, Q_t)$ of $M$ is non-increasing with respect to $w$, it is also non-increasing with respect to $\succeq$.
Let $\calA = (A_1, \ldots, A_\ell)$ be an allocation of $M$ to $\ell$ agents that satisfies $|A_i \cap Q_j| = 1$ for all $i \in [\ell]$ and $j \in [t-1]$, and $|A_i \cap Q_t| \leq 1$ for all $i \in [\ell]$. By Lemma~\ref{lemma:SD_Pj}, $\calA$ is SD-EF1 with respect to the common weak order $\succeq$. Since $w$ is non-decreasing, we can apply Lemma~\ref{lemma:SD->EF1} and derive that it is also EF1 with respect to the common valuation $w$, as required.
\end{proof}

\begin{lemma}\label{lemma:EF11_P_Q}
Let $M$ be a finite set of items, let $w:\calP(M) \to \R$ be an additive valuation, and consider a partition of $M$ into two sets $M^+$ and $M^-$, consisting of goods and chores, respectively, under $w$. For a positive integer $\ell$, let $(Q^+_1, \ldots, Q^+_t)$ be an ordered partition of $M^+$, non-increasing with respect to the valuation $w$ restricted to $M^+$, into $t = \lceil |M^+|/\ell \rceil$ parts of size exactly $\ell$ except possibly $Q^+_t$, and let $(Q^-_1, \ldots, Q^-_{t'})$ be an ordered partition of $M^-$, non-increasing with respect to the negated valuation $-w$ restricted to $M^-$, into $t' = \lceil |M^-|/\ell \rceil$ parts of size exactly $\ell$ except possibly $Q^-_{t'}$. Then every allocation $\calA = (A_1, \ldots, A_\ell)$ of $M$ to $\ell$ agents that satisfies
\begin{eqnarray}\label{eq:EF11_P_Q}
|A_i \cap Q^+_j| \leq 1 \mbox{~~~and~~~} |A_i \cap Q^-_{j'}| \leq 1 \mbox{~~~for all~~~} i \in [\ell],~j \in [t], \mbox{~and~} j' \in [t']
\end{eqnarray}
is EF[1,1] with respect to the common valuation $w$.
\end{lemma}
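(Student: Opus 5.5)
The plan is to split the allocation $\calA$ into its goods part and its chores part. We show that each part is EF1 by Lemma~\ref{lemma:EF1_Pj}, and then combine the two parts using Lemma~\ref{lemma:EF1->EF11}. For each $i \in [\ell]$, set $B_i = A_i \cap M^+$ and $C_i = A_i \cap M^-$. Then $\calB = (B_1,\ldots,B_\ell)$ is an allocation of $M^+$, $\calC = (C_1,\ldots,C_\ell)$ is an allocation of $M^-$, and $A_i = B_i \cup C_i$ for every $i$. By Lemma~\ref{lemma:EF1->EF11}, it therefore suffices to show two things: $\calB$ is EF1 with respect to $w$ restricted to $M^+$, and $\calC$ is EF1 with respect to $w$ restricted to $M^-$. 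If $M^+$ or $M^-$ is empty, the corresponding allocation consists of empty sets and is trivially EF1.

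First, the ``at most one'' condition~\eqref{eq:EF11_P_Q} upgrades to ``exactly one'' on the full parts. For $j \in [t-1]$ we have $|Q^+_j| = \ell$, and the sets $B_1 \cap Q^+_j, \ldots, B_\ell \cap Q^+_j$ partition $Q^+_j$. Hence $\sum_{i} |B_i \cap Q^+_j| = \ell$, and since each term is at most $1$, each term equals exactly $1$. The same argument applies to $Q^-_{j'}$ for $j' \in [t'-1]$. Also $|B_i \cap Q^+_t| \leq 1$ and $|C_i \cap Q^-_{t'}| \leq 1$ hold directly.

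For the goods, the restriction of $w$ to $M^+$ is additive and non-decreasing, since all items of $M^+$ are goods. The ordered partition $(Q^+_1,\ldots,Q^+_t)$ is non-increasing with respect to this restriction and has the required sizes. Lemma~\ref{lemma:EF1_Pj} then gives that $\calB$ is EF1.

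For the chores, we apply Lemma~\ref{lemma:EF1_Pj} to the valuation $-w$ restricted to $M^-$. This valuation is additive and non-decreasing, because $w(x) \leq 0$ for $x \in M^-$, and $(Q^-_1,\ldots,Q^-_{t'})$ is non-increasing with respect to it by assumption. The lemma yields that for all $i,i'$ there exists $S \subseteq C_{i'}$ with $|S| \leq 1$ and $-w(C_i) \geq -w(C_{i'} \setminus S)$, that is, $w(C_{i'} \setminus S) \geq w(C_i)$. Swapping the roles of $i$ and $i'$, for all $i,i'$ we get some $S \subseteq C_i$ with $|S| \leq 1$ and $w(C_i \setminus S) \geq w(C_{i'})$. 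This is exactly the EF1 condition for the non-increasing restriction of $w$ to $M^-$, with $S_2 = \emptyset$ in Definition~\ref{def:EF11}, which is the form used in the proof of Lemma~\ref{lemma:EF1->EF11}.

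Applying Lemma~\ref{lemma:EF1->EF11} to $\calB$ and $\calC$ shows that $\calA$ is EF[1,1] with respect to $w$. The only delicate point is this sign translation for chores, where we need the removed item to come from the agent's own bundle. Everything else is a direct invocation of the earlier lemmas.
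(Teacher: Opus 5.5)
Your proposal is correct and follows essentially the same route as the paper: you upgrade the inequalities to equalities on the full parts, apply Lemma~\ref{lemma:EF1_Pj} to $w$ on $M^+$ and to $-w$ on $M^-$, and combine the two via Lemma~\ref{lemma:EF1->EF11}. Your explicit sign translation for the chores, swapping $i$ and $i'$ so that the removed item comes from the agent's own bundle, spells out a step the paper covers in a single phrase.
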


\begin{proof}
Let $\calA = (A_1, \ldots, A_\ell)$ be an allocation of $M$ to $\ell$ agents satisfying~\eqref{eq:EF11_P_Q}. Since $\calA$ is an allocation of $M$, and since all parts except possibly $Q^+_t$ and $Q^-_{t'}$ have size exactly $\ell$, the inequalities in~\eqref{eq:EF11_P_Q} are in fact equalities for all $i \in [\ell]$, $j \in [t-1]$, and $j' \in [t'-1]$. Let $\calB = (B_1, \ldots, B_\ell)$ and $\calC = (C_1, \ldots, C_\ell)$ be the allocations to $\ell$ agents of $M^+$ and $M^-$, respectively, defined by $B_i = A_i \cap M^+$ and $C_i = A_i \cap M^-$ for all $i \in [\ell]$. By Lemma~\ref{lemma:EF1_Pj}, applied to the non-decreasing restriction of $w$ to $M^+$ with the ordered partition $(Q^+_1, \ldots, Q^+_t)$, the allocation $\calB$ is EF1 with respect to $w$ restricted to $M^+$. Similarly, by Lemma~\ref{lemma:EF1_Pj}, applied to the non-decreasing restriction of $-w$ to $M^-$ with the ordered partition $(Q^-_1, \ldots, Q^-_{t'})$, the allocation $\calC$ is EF1 with respect to $-w$ restricted to $M^-$, and hence also with respect to $w$ restricted to $M^-$. Therefore, by Lemma~\ref{lemma:EF1->EF11}, the allocation $\calA$ is EF[1,1] with respect to the common valuation $w$, which completes the proof.
\end{proof}

\subsection{EF1 and EF[1,1] under Conflict Constraints for Common Valuations}

We now prove Theorem~\ref{thm:intro_EF1_upper}, which relates the existence of feasible EF1 allocations for common monotone additive valuations to the strong chromatic number of type $1$.

\begin{proof}[ of Theorem~\ref{thm:intro_EF1_upper}]
Let $G=(V,E)$ be a graph, let $\ell \geq \chi_s^1(G)$, and let $w:\calP(V) \to \R$ be a monotone additive valuation.
We may assume that $w$ is non-decreasing, since the non-increasing case follows by replacing $w$ with the negated valuation $-w$. Let $\succeq$ denote the weak order on $V$ induced by $w$. By Theorem~\ref{thm:SD_upper}, using $\ell \geq \chi_s^1(G)$, the graph $G$ admits a feasible SD-EF1 allocation to $\ell$ agents with respect to the common weak order $\succeq$. Since $w$ is non-decreasing, Lemma~\ref{lemma:SD->EF1} implies that this allocation is also EF1 with respect to the common valuation $w$. Thus, $G$ admits a feasible EF1 allocation to $\ell$ agents with respect to the common valuation $w$, as required.
\end{proof}

We next prove Theorem~\ref{thm:intro_EF11_upper}, which relates the existence of feasible EF[1,1] allocations for common arbitrary additive valuations to the strong chromatic number of type $2$.

\begin{proof}[ of Theorem~\ref{thm:intro_EF11_upper}]
Let $G=(V,E)$ be a graph, let $\ell \geq \chi_s^2(G)$, and let $w:\calP(V) \to \R$ be an additive valuation. We may assume that $w$ is not monotone. Indeed, if $w$ is monotone, we can apply Theorem~\ref{thm:intro_EF1_upper}, using $\ell \geq \chi_s^2(G) \geq \chi_s^1(G)$, to obtain a feasible EF1 allocation of $G$ to $\ell$ agents with respect to the common valuation $w$. This allocation is also EF[1,1].

Consider a partition of $V$ into two sets $V^+$ and $V^-$, consisting of goods and chores, respectively, under $w$, where vertices of value $0$ are placed arbitrarily. Since $w$ is not monotone, both $V^+$ and $V^-$ are non-empty. Choose a permutation of the vertices of $V^+$ that is non-increasing with respect to $w$. Let $t = \lceil |V^+|/\ell \rceil$, and consider the ordered partition $(Q^+_1, \ldots, Q^+_t)$ of $V^+$ into $t$ parts of consecutive vertices along the chosen permutation, each of size exactly $\ell$ except possibly $Q^+_t$. By construction, this ordered partition is non-increasing with respect to the restriction of $w$ to $V^+$. Similarly, choose a permutation of the vertices of $V^-$ that is non-increasing with respect to $-w$. Let $t' = \lceil |V^-|/\ell \rceil$, and consider the ordered partition $(Q^-_1, \ldots, Q^-_{t'})$ of $V^-$ into $t'$ parts of consecutive vertices along the chosen permutation, each of size exactly $\ell$ except possibly $Q^-_{t'}$. As before, this ordered partition is non-increasing with respect to the restriction of $-w$ to $V^-$.

The assumption $\ell \geq \chi_s^2(G)$ implies, by Lemma~\ref{lemma:monotone_chi_r}, that the graph $G$ is strongly $(\ell,2)$-colorable. The sets $Q^+_1, \ldots, Q^+_t, Q^-_1, \ldots, Q^-_{t'}$ form a partition of $V$ into parts of size at most $\ell$, with at most two parts of size strictly smaller than $\ell$. Therefore, $G$ admits a proper $\ell$-coloring such that each color class intersects each of $Q^+_1, \ldots, Q^+_t, Q^-_1, \ldots, Q^-_{t'}$ in at most one vertex. Let $\calA = (A_1, \ldots, A_\ell)$ be a sequence of the color classes of this coloring, ordered arbitrarily. By Lemma~\ref{lemma:EF11_P_Q}, $\calA$ is an EF[1,1] allocation of $V$ with respect to the common valuation $w$. Since the sets $A_1, \ldots, A_\ell$ are independent in $G$, this allocation is feasible, so we are done.
\end{proof}

It is natural to ask whether the dependence on the strong chromatic number of type $2$ in Theorem~\ref{thm:intro_EF11_upper} can be weakened to the strong chromatic number of type $1$, as in Theorem~\ref{thm:intro_EF1_upper}. The following simple result gives a negative answer.

\begin{proposition}\label{prop:Km,m}
For every integer $m \geq 2$, there exist a graph $G=(V,E)$ satisfying $\chi_s^1(G)=\lceil 3m/2 \rceil$ and $\chi_s^2(G)= 2m$, and an additive valuation $w: \calP(V) \to \R$, such that $G$ admits no feasible EF[1,1] allocation to fewer than $2m$ agents with respect to the common valuation $w$.
\end{proposition}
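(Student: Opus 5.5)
Take $G=K_{m,m}$ with bipartition $V_1,V_2$. Lemma~\ref{lemma:K_m,m} gives $\chi_s^1(G)=\lceil 3m/2\rceil$ and $\chi_s^2(G)=2m$. For the valuation, mirror the lower-bound partition from that lemma for type $2$: make one side goods and the other side chores. Concretely, set $w(v)=1$ for $v\in V_1$ and $w(v)=-1$ for $v\in V_2$.

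Fix $\ell<2m$ agents and a feasible allocation $(A_1,\ldots,A_\ell)$. Every vertex of $V_1$ is adjacent to every vertex of $V_2$, so each independent set lies entirely in $V_1$ or entirely in $V_2$. Hence each $A_i$ is empty, a nonempty subset of $V_1$ (value $|A_i|$), or a nonempty subset of $V_2$ (value $-|A_i|$).

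Let $a$ be the number of agents whose bundle meets $V_1$, and $b$ the number whose bundle meets $V_2$. Then $a+b\le \ell\le 2m-1$, so $a<m$ or $b<m$.

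\begin{itemize}
\item If $a<m$, some agent $i$ receives at least two vertices of $V_1$, so $w(A_i)\ge 2$. Take $i'$ with $A_{i'}$ empty or contained in $V_2$.
\begin{itemize}
\item If $b\ge 1$, choose $i'$ with $A_{i'}\subseteq V_2$ nonempty. Removing at most one item from $A_{i'}$ gives value at most $0$. Removing at most one item from $A_i$ leaves value at least $1$. So agent $i'$ violates EF[1,1] toward $i$.
\item If $b=0$, all of $V_2$ is unassigned, which is impossible since the allocation covers $V$. So $b\ge1$.
\end{itemize}
\item If $b<m$, the symmetric argument applies with chores: some agent $i$ has value at most $-2$. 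Taking $i'$ with a nonempty subset of $V_1$, agent $i$ envies $i'$ even after the removals.
\end{itemize}

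The only step that needs care is the inequality check. Removing one item from a bundle of value $\ge 2$ (all goods) leaves value $\ge 1$. Removing one item from a bundle of chores leaves value $\le 0$. Since $1>0$, no choice of removals fixes the envy, and the same holds in the mirrored chores case.
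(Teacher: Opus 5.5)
Your proof is correct and follows essentially the same route as the paper: $K_{m,m}$ with $w=1$ on $V_1$ and $w=-1$ on $V_2$, the observation that every feasible bundle lies within one side, a pigeonhole step producing a bundle of two vertices on one side, and the comparison $1>0$ (or $-1<0$) after removals. The only difference is cosmetic. You count agents meeting each side ($a+b\le \ell<2m$), whereas the paper applies pigeonhole directly to the $2m$ vertices.
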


\begin{proof}
For an integer $m \geq 2$, let $G=(V,E)$ be the complete bipartite graph $K_{m,m}$, and let $V_1$ and $V_2$ denote the parts of its bipartition. By Lemma~\ref{lemma:K_m,m}, we have $\chi_s^1(G)=\lceil 3m/2 \rceil$ and $\chi_s^2(G)= 2m$. Let $w: \calP(V) \to \R$ be the additive valuation defined by $w(v) = 1$ for all $v \in V_1$, and $w(v)=-1$ for all $v \in V_2$. We claim that $G$ admits no feasible EF[1,1] allocation to fewer than $2m$ agents with respect to the common valuation $w$. Indeed, any feasible allocation of $G$ assigns to each agent either a subset of $V_1$ or a subset of $V_2$. When the number of agents is smaller than $2m$, some agent receives at least two vertices. If the assignment of this agent is contained in $V_1$, then its value under $w$ is at least $2$, and it remains at least $1$ after removing at most one vertex. However, there exists an agent whose assignment is contained in $V_2$, and the value of such an assignment under $w$ is non-positive, even after removing at most one vertex. Hence the allocation is not EF[1,1] under the common valuation $w$. The case where an assignment with at least two vertices is contained in $V_2$ is handled similarly. This completes the proof.
\end{proof}

\subsection{Below the Strong Chromatic Numbers}

We now study the necessity of the sufficient conditions given in Theorems~\ref{thm:intro_EF1_upper} and~\ref{thm:intro_EF11_upper}. We show that they are necessary for graphs in which the relevant strong chromatic number is at most three, but are not necessary already for certain graphs with the corresponding strong chromatic number equal to four.

\subsubsection{Strong Chromatic Number at Most Three}

We claim that for every graph whose strong chromatic number of type $1$ or $2$ is at most three, there exists a common non-decreasing additive valuation for which no feasible EF1 allocation exists with fewer agents than the corresponding strong chromatic number. The cases where this number is one or two are immediate. Indeed, the former is trivial, and in the latter, the graph spans an edge, so it admits no feasible allocation to a single agent. The following result handles the remaining case.

\begin{proposition}\label{prop:chi_s^1=3}
Let $G=(V,E)$ be a graph satisfying $\chi_s^1(G)=3$ or $\chi_s^2(G)=3$ (or both). Then there exists a non-decreasing additive valuation $w: \calP(V) \to \R$, such that $G$ admits no feasible EF1 allocation to fewer than three agents with respect to the common valuation $w$.
\end{proposition}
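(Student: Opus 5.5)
The case of fewer than two agents needs only that $G$ has an edge. If $G$ were edgeless, then every partition would be properly $1$-colorable with distinct colors in each part, so $\chi_s^\infty(G)=1$ and hence, by Lemma~\ref{lemma:compare_chi_s}, $\chi_s^1(G)=\chi_s^2(G)=1$, contradicting the hypothesis. So $G$ has an edge, no allocation to a single agent is feasible, and any $w$ works. For two agents, if $G$ is not bipartite there is no feasible allocation at all, so any $w$ again works. The real case is therefore $G$ bipartite with $\ell=2$.

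For bipartite $G$, I would first translate the hypothesis. The hierarchy is non-decreasing in $r$, so $\chi_s^1(G)\le\chi_s^2(G)$. In either case, $\chi_s^1(G)=3$ or $\chi_s^2(G)=3$, Lemma~\ref{lemma:monotone_chi_r} shows that $G$ is not strongly $(2,2)$-colorable. So there is a partition of $V$ into pairs together with at most two singletons such that no proper $2$-coloring splits every pair. Equivalently, the auxiliary graph $H$ obtained from $G$ by adding an edge inside each pair is not bipartite. Take a shortest odd cycle $C$ of $H$. It alternates between paths of $G$ and pair edges. Each $G$-path forces its endpoints to a fixed color relation in every proper $2$-coloring of the corresponding component of $G$. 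Oddness of $C$ then forces every proper $2$-coloring of $G$ to put the two endpoints of at least one pair edge of $C$ in the same color class.

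The plan is to choose a non-decreasing additive $w$ for which EF1 between two agents forces every pair edge of $C$ to be split. This gives a contradiction. The natural first attempt is to put positive weight only on the endpoints of the pair edges of $C$, which I will call the key vertices, and weight $0$ elsewhere. Feasible $2$-allocations of $G$ correspond to independent choices of a color swap on each component of $G$. So the question reduces to a combinatorial balancing statement on the key vertices. Two facts should be used here. First, within each $G$-component meeting $C$, the key vertices split between the agents in a pattern determined up to a global swap. Second, minimality of $C$ constrains how key vertices from different pairs interact. I would try weights growing fast along $C$, for example $w=K^{s}$ on the $s$-th pair, processed in an order chosen so that the first pair left unsplit creates an imbalance that cannot be absorbed by removing a single item.

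The main obstacle is exactly this step. With a naive geometric weighting, EF1 is strictly weaker than SD-EF1. If an unsplit pair occurs at level $j$, the envious agent may remove one of the other agent's top-level items and thereby hide the imbalance, so Theorem~\ref{thm:SD_lower} cannot simply be reused. To get around this, I would exploit that $\ell=2$ and that $G$ is bipartite, which ties choices together across a whole component. The idea is to arrange the weights so that the unsplit pair is the heaviest pair in its component, and so that the agent holding both of its vertices also holds the heavier side of every other forced pair on $C$. Removing any one item then still leaves a surplus, and I would verify this surplus by a short case analysis along $C$. If such a weighting cannot be found in general, the fallback is to use a smaller obstruction, such as a single $G$-path whose endpoints form a pair together with one more pair, and to handle it by hand with weights $2,2,1,1$.
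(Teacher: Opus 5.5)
Your reductions for a single agent and for non-bipartite $G$ are fine, but the bipartite two-agent case, which carries all the content of the statement, is not proved. You flag the weighting step as the main obstacle yourself, and the plan you sketch for it does not work as stated.

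The valuation $w$ must be fixed before the allocation. In general, however, which pair edge of $C$ ends up unsplit depends on the allocation, namely on the independent colour swaps of the components of $G$ that $C$ meets. Parity only tells you that an odd number of them are unsplit. So you cannot arrange in advance that ``the unsplit pair is the heaviest pair in its component.'' And, as you observe, with geometric weights EF1 lets the envious agent remove a top item and hide an imbalance at a lower level. Your fallback with weights $2,2,1,1$ is neither specified nor verified.

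The missing idea is that you do not need to encode the whole strong-colouring obstruction into $w$. It suffices to extract from the hypothesis a single vertex of degree at least $2$. If $G$ had maximum degree at most $1$, every component would have at most two vertices. Lemmas~\ref{lemma:compare_chi_s} and~\ref{lemma:chi_s^inf<=C} would then give $\chi_s^1(G) \leq \chi_s^2(G) \leq \chi_s^\infty(G) \leq 2$, contradicting the assumption.

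So let $x$ be a vertex with two neighbours $y,z$, and set $w(y)=w(z)=1$ and $w(v)=0$ for every other $v$. Any feasible allocation to two agents separates $x$ from both $y$ and $z$. Hence one agent receives both $y$ and $z$, for value $2$, which is still at least $1$ after removing any single item. The other agent holds $x$ and has value $0$. Therefore no feasible EF1 allocation exists.

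In your language, $\{y,z\}$ is a pair forced into the same colour class by a $G$-path of length $2$. This is the degenerate obstruction your fallback gestures at, and it needs no second pair and no odd-cycle analysis.
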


\begin{proof}
Let $G=(V,E)$ be a graph satisfying $\chi_s^1(G)=3$ or $\chi_s^2(G)=3$. Since $G$ spans at least one edge, it clearly admits no feasible allocation to a single agent, so it is sufficient to consider the case of two agents. We first observe that the maximum degree of $G$ is at least $2$. Indeed, otherwise every connected component of $G$ has at most two vertices, hence by Lemmas~\ref{lemma:compare_chi_s} and~\ref{lemma:chi_s^inf<=C}, it follows that $\chi_s^1(G) \leq \chi_s^2(G) \leq \chi_s^\infty(G) \leq 2$, contradicting the assumption that $\chi_s^1(G)=3$ or $\chi_s^2(G)=3$. Now, let $x$ be a vertex of degree at least $2$ in $G$, let $y$ and $z$ be two of its neighbors, and let $w: \calP(V) \to \R$ be the non-decreasing additive valuation defined by $w(y)=w(z)=1$ and by $w(v)=0$ for every $v \in V \setminus \{y,z\}$. Any feasible allocation of $G$ to two agents must separate the vertex $x$ from its neighbors $y$ and $z$. In such a feasible allocation, the set that includes $x$ has value $0$, whereas the other set includes both $y$ and $z$ and has value $2$. Removing a vertex from the latter leaves its value at least $1$, so the allocation is not EF1 with respect to $w$. This shows that $G$ admits no feasible EF1 allocation to two agents with respect to the common valuation $w$, so we are done.
\end{proof}

\subsubsection{Strong Chromatic Number Four}

We turn to showing that the sufficient conditions in Theorems~\ref{thm:intro_EF1_upper} and~\ref{thm:intro_EF11_upper} for feasible EF1 and EF[1,1] allocations, respectively, are not necessary in general. We start with the EF1 fairness notion and prove Theorem~\ref{thm:intro_EF1_lower}, which asserts that there exists a graph $G$ with $\chi_s^1(G) = 4$ that admits a feasible EF1 allocation to three agents with respect to every common monotone additive valuation.

\begin{proof}[ of Theorem~\ref{thm:intro_EF1_lower}]
Let $G=(V,E)$ be the graph obtained from the cycle $C_4$ on the vertex set $\{v_1, v_2, v_3, v_4\}$ with the natural order along the cycle by adding an isolated vertex $z$. We first show that $\chi_s^1(G) = 4$. Since the maximum size of a connected component in $G$ is $4$, it follows from Lemmas~\ref{lemma:compare_chi_s} and~\ref{lemma:chi_s^inf<=C} that $\chi_s^1(G) \leq \chi_s^\infty(G) \leq 4$. On the other hand, consider the partition of $V$ into the parts $\{v_1,v_3,z\}$ and $\{v_2,v_4\}$, both of size at most three and only one of size strictly smaller than three. Observe that no proper $3$-coloring of $G$ assigns distinct colors to the vertices of each part. Indeed, a proper coloring that assigns distinct colors to the vertices of each part would force the four vertices of the cycle to receive pairwise distinct colors: this holds for the pairs $v_1,v_3$ and $v_2,v_4$ by the partition, and for all other pairs by the edges of $C_4$. This shows that $G$ is not strongly $(3,1)$-colorable, which by Lemma~\ref{lemma:monotone_chi_r} implies that $\chi_s^1(G) \geq 4$, and thus $\chi_s^1(G) = 4$.

Now, let $w:\calP(V) \to \R$ be a monotone additive valuation. We show that $G$ admits a feasible EF1 allocation to three agents with respect to the common valuation $w$. We may assume that $w$ is non-decreasing, because the non-increasing case follows by applying the result to $-w$. We may further assume, by symmetry, that $v_1$ has minimum value under $w$ among the vertices of the cycle, and that $w(v_2) \leq w(v_4)$.

Consider the sets  $A_1 = \{v_1,v_3\}$, $A_2 = \{v_2\}$, and $A_3 = \{v_4\}$. We define an allocation $\calA$ of $G$ to three agents as follows. If $w(A_1) \leq w(A_2)$, then let $\calA = (A_1 \cup \{z\},A_2,A_3)$, and otherwise, let $\calA = (A_1,A_2 \cup \{z\},A_3)$. The allocation $\calA$ is feasible, because $v_1$ and $v_3$ are not adjacent and $z$ is isolated in $G$. To show that it is EF1 with respect to $w$, let $T = \min(w(v_1)+w(v_3),w(v_2))$. Observe that each of the sets $A_1,A_2,A_3$ has value at least $T$ under $w$, and since $w$ is non-decreasing, this property is preserved after adding $z$ to one of $A_1$ and $A_2$. It is thus sufficient to show that one may remove a vertex from each allocated set to obtain a set with value at most $T$ under $w$. For $A_3$, remove $v_4$. For the allocated set that includes $z$, removing $z$ leaves a set among $A_1$ and $A_2$ of value exactly $T$. The remaining allocated set is either $A_1$ or $A_2$. If it is $A_2$, remove $v_2$, and if it is $A_1$, remove $v_3$ to get value $w(v_1)$, which is the minimum value of a vertex along the cycle and is thus at most $T$. This completes the proof.
\end{proof}

We proceed with the more nuanced case of EF[1,1] fairness. We show that there exists a graph $G$ with $\chi_s^2(G)=4$ that admits a feasible EF[1,1] allocation to three agents with respect to every common additive valuation. To this end, we consider the family of cycle graphs $C_{3m+1}$ for positive integers $m$, whose strong chromatic numbers of types $1$ and $2$ are three and four, respectively (see Lemmas~\ref{lemma:chi_s^1_cycles} and~\ref{lemma:chi_s^2_cycles}). Theorem~\ref{thm:intro_EF1_upper} implies that these graphs admit a feasible EF1 allocation to three agents with respect to every common monotone additive valuation. For $m=2$, we strengthen this conclusion by showing that $C_7$ admits a feasible EF[1,1] allocation to three agents with respect to every common additive valuation. Note that such a result does not hold for $m=1$, as follows from the proof of Proposition~\ref{prop:Km,m}.

We start with the following theorem, which handles the cycle graphs $C_{3m+1}$ where the number of goods under the common additive valuation is either $0$ or $1$ modulo $3$.

\begin{theorem}\label{thm:C_3m+1_t}
For a positive integer $m$, let $C_{3m+1} = (V,E)$, and let $w: \calP(V) \to \R$ be an additive valuation.
Suppose that $V$ can be written as a disjoint union $V = V^+ \cup V^-$, where $V^+$ and $V^-$ consist of goods and chores, respectively, under $w$, and $|V^+| \not\equiv 2 \pmod{3}$. Then $C_{3m+1}$ admits a feasible EF[1,1] allocation to three agents with respect to the common valuation $w$.
\end{theorem}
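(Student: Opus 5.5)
We plan to reduce to Lemma~\ref{lemma:EF11_P_Q}, as in the proof of Theorem~\ref{thm:intro_EF11_upper}, with one change. The cycle $C_{3m+1}$ is not strongly $(3,2)$-colorable, so we cannot use an arbitrary pair of deficient parts. Instead, we exploit the hypothesis $|V^+| \not\equiv 2 \pmod 3$ to choose where the deficient parts sit.

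We build the ordered partitions $(Q^+_1,\ldots,Q^+_t)$ of $V^+$ and $(Q^-_1,\ldots,Q^-_{t'})$ of $V^-$ as before, using consecutive blocks of size three along non-increasing orders of $w$ and $-w$, respectively. Since $|V^+|+|V^-| = 3m+1$, the residues of $|V^+|$ and $|V^-|$ modulo $3$ sum to $1$ modulo $3$. There are two cases.

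In the first case, $|V^+| \equiv 0$ and $|V^-| \equiv 1 \pmod 3$. Then the only deficient part is a singleton $Q^-_{t'}$. In the second case, $|V^+| \equiv 1$ and $|V^-| \equiv 0 \pmod 3$, and the only deficient part is a singleton $Q^+_t$. The degenerate possibility $|V^-|=0$ falls under the second case, or is simply the monotone case, handled by Theorem~\ref{thm:intro_EF1_upper} together with Lemma~\ref{lemma:chi_s^1_cycles}.

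In either case, the resulting partition of $V$ consists of $m$ parts of size exactly three and one singleton. This is exactly the type of partition handled in the proof of Lemma~\ref{lemma:chi_s^1_cycles}. Remove the singleton vertex to obtain the path $P_{3m}$, and apply Theorem~\ref{thm:chi_s_cycle_path}, which gives $\chi_s(P_{3m})=3$. This yields a proper $3$-coloring of the path that is rainbow on each triple. Then color the singleton vertex with a color avoiding its two neighbours. In this way we obtain a proper $3$-coloring of $C_{3m+1}$ whose color classes meet every part in at most one vertex. Alternatively, we can invoke directly that $C_{3m+1}$ is strongly $(3,1)$-colorable, i.e., $\chi_s^1(C_{3m+1})=3$ by Lemma~\ref{lemma:chi_s^1_cycles}, together with Lemma~\ref{lemma:monotone_chi_r}.

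Taking the color classes as $A_1,A_2,A_3$ gives an allocation satisfying~\eqref{eq:EF11_P_Q}. Lemma~\ref{lemma:EF11_P_Q} then certifies that it is EF[1,1], and it is feasible because the coloring is proper.

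The main point needing care is the bookkeeping of the zero-valued vertices, and of the edge case $m$ small where $t$ or $t'$ may be $0$. Vertices of value $0$ are assigned to $V^+$ or $V^-$ according to the decomposition given in the hypothesis, so that $|V^+|\not\equiv 2 \pmod 3$ holds. When $V^+$ or $V^-$ is empty, it simply contributes no parts, and the argument goes through unchanged. With this settled, the structure of the argument is a direct application of strong $(3,1)$-colorability, and I expect no genuine obstacle beyond this case check.
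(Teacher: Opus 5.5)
Your proposal is correct and follows essentially the same route as the paper: split $V$ into goods and chores, form consecutive blocks of size three along non-increasing orders, note that $|V^+| \not\equiv 2 \pmod{3}$ leaves exactly one singleton part, and combine the strong $(3,1)$-colorability of $C_{3m+1}$ (Lemma~\ref{lemma:chi_s^1_cycles}) with Lemma~\ref{lemma:EF11_P_Q}. The paper treats an empty $V^+$ or $V^-$ through the monotone reduction to Theorem~\ref{thm:intro_EF1_upper}, which you also mention, so nothing is missing.
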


\begin{proof}
We may assume that both $V^+$ and $V^-$ are non-empty. Otherwise, the valuation $w$ is monotone, and by Lemma~\ref{lemma:chi_s^1_cycles} we have $\chi_s^1(C_{3m+1})=3$, hence Theorem~\ref{thm:intro_EF1_upper} gives a feasible EF1 allocation of $C_{3m+1}$ to three agents with respect to the common valuation $w$. Since $w$ is monotone, this allocation is also EF[1,1].

Pick permutations of the vertices of $V^+$ and $V^-$ that are non-increasing with respect to $w$ and $-w$, respectively. Consider the ordered partitions $(Q^+_1, \ldots, Q^+_t)$ of $V^+$ and $(Q^-_1, \ldots, Q^-_{t'})$ of $V^-$ into parts of consecutive vertices in these permutations, each of size exactly three except possibly the parts $Q^+_t$ and $Q^-_{t'}$ that may have smaller sizes. Since $|V| \equiv 1 \pmod{3}$, the assumption $|V^+| \not\equiv 2 \pmod{3}$ implies that one of $|V^+|$ and $|V^-|$ is congruent to $0$ modulo $3$, while the other is congruent to $1$ modulo $3$. It thus follows that all the parts $Q^+_1, \ldots, Q^+_t, Q^-_1, \ldots, Q^-_{t'}$ have size three except one, either $Q^+_t$ or $Q^-_{t'}$, which consists of a single vertex. By Lemma~\ref{lemma:chi_s^1_cycles}, it holds that $\chi_s^1(C_{3m+1})=3$, hence there exists a proper $3$-coloring of $C_{3m+1}$ such that each color class intersects each of $Q^+_1, \ldots, Q^+_t, Q^-_1, \ldots, Q^-_{t'}$ in at most one vertex. Let $\calA = (A_1,A_2,A_3)$ be an allocation whose sets are the color classes of this coloring, ordered arbitrarily. Note that $|A_i \cap Q^+_j| \leq 1$ and $|A_i \cap Q^-_{j'}| \leq 1$ for all $i \in [3]$, $j \in [t]$, and $j' \in [t']$. This allocation of $C_{3m+1}$ is clearly feasible, and by Lemma~\ref{lemma:EF11_P_Q}, it is EF[1,1] with respect to the common valuation $w$. This completes the proof.
\end{proof}

Now, for the cycle on seven vertices, we prove the following result, confirming Theorem~\ref{thm:intro_EF11_lower}.

\begin{theorem}\label{thm:C7}
Let $C_7=(V,E)$, and let $w: \calP(V) \to \R$ be an additive valuation.
Then $C_7$ admits a feasible EF[1,1] allocation to three agents with respect to the common valuation $w$.
\end{theorem}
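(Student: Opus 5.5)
The plan is to reduce to the one case not covered by Theorem~\ref{thm:C_3m+1_t}. Zero-valued vertices may be counted as either goods or chores. If some split $V = V^+ \cup V^-$ into goods and chores has $|V^+| \not\equiv 2 \pmod 3$, Theorem~\ref{thm:C_3m+1_t} with $m=2$ applies directly. So we may assume that every such split has $|V^+| \equiv 2 \pmod 3$. Moving a zero-valued vertex from one side to the other changes $|V^+|$ by one, so this assumption forces every vertex to have nonzero value. Since $|V|=7$ and both sides are nonempty (otherwise $w$ is monotone and Lemma~\ref{lemma:chi_s^1_cycles} with Theorem~\ref{thm:intro_EF1_upper} finishes), we are left with two cases: $(|V^+|,|V^-|)=(2,5)$ and $(5,2)$. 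Replacing $w$ by $-w$ swaps goods and chores, and the EF[1,1] condition is symmetric under negation, since removing one item from each side is allowed in both directions. Hence it suffices to handle $|V^+|=5$ and $|V^-|=2$, with all values nonzero.

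In this case I will build the allocation by hand rather than through strong colorability. The natural block structure, $V^+$ split into a part of size $3$ and a part of size $2$, plus $V^-$ as a single part of size $2$, has two deficient parts, and $\chi_s^2(C_7)=4$ shows this cannot always be colored. The aim is to relax one of the requirements of Lemma~\ref{lemma:EF11_P_Q} in a way that is still fair. Let the goods be $g_1 \succeq \dots \succeq g_5$ and the chores be $c_1,c_2$ with $|w(c_1)|\ge|w(c_2)|$.

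First I will try to find a proper $3$-coloring of $C_7$ meeting the following conditions:
\begin{itemize}
\item $g_1,g_2,g_3$ receive distinct colors;
\item $g_4,g_5$ receive distinct colors;
\item $c_1,c_2$ receive distinct colors.
\end{itemize}
When such a coloring exists, Lemma~\ref{lemma:EF11_P_Q} applies directly. The cycle is $3$-colorable with color classes of sizes $3,2,2$. Since the chores are only two vertices, there is a lot of freedom, and I expect such a coloring exists unless the positions are very special. The obstruction comes from Lemma~\ref{lemma:chi_s^2_cycles}, which gives configurations where the five goods and two chores sit so that the constraints, together with the cycle edges, force a fourth color.

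In that bad configuration, the fallback is a value-sensitive argument like the one in the proof of Theorem~\ref{thm:intro_EF1_lower}. Let the agent with three items take two goods and one chore, or three goods, and compare bundles using removal from both sides. Concretely, I will enumerate the relative positions of $c_1,c_2$ on $C_7$: they can be adjacent, or at distance $2$ or $3$. For each position I will list the partitions of $C_7$ into independent sets of sizes $(3,2,2)$, of which there are few up to symmetry. I will then check that one of them works:
\begin{itemize}
\item either the chores lie in different classes and the goods are spread so that each class's value after dropping its worst chore and best good is comparable;
\item or both chores go to the class with the fewest or least valuable goods, where removing one chore from each bundle still leaves the EF[1,1] inequality satisfied.
\end{itemize}

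The main obstacle is this final case analysis. For each placement of the chores I must show that some feasible partition satisfies the EF[1,1] inequalities for all orderings of the good values. I will first try to use the freedom to choose which goods pair together so that an exchange argument, as in Theorem~\ref{thm:intro_EF1_lower}, gives a threshold $T$ bounding every bundle from below after removal on its own side and from above after removal on the other side.
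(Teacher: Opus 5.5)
Your reduction is correct and matches the paper's. Theorem~\ref{thm:C_3m+1_t} handles $|V^+|\not\equiv 2 \pmod 3$, and negation reduces to $|V^+|=5$, $|V^-|=2$. Whenever some proper $3$-coloring gives distinct colors within each of $Q^+_1=\{g_1,g_2,g_3\}$, $Q^+_2=\{g_4,g_5\}$ and $V^-$, Lemma~\ref{lemma:EF11_P_Q} finishes. The gap is that the remaining case, which is the whole substance of the theorem, is only announced (``I expect'', ``I will then check that one of them works''). No allocation is exhibited and nothing is verified for arbitrary values.

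The first missing idea is to pin down when the coloring fails. Suppose some edge $e$ of $C_7$ has both endpoints in one of $Q^+_1,Q^+_2,V^-$. Then $C_7-e\cong P_7$ is strongly $(3,2)$-colorable by Lemma~\ref{lemma:chi_s^2_cycles}. The resulting coloring is also proper on $C_7$, because the endpoints of $e$ lie in a common part and so receive distinct colors. Hence in the bad case all three parts are independent. A short check of the partitions of $C_7$ into independent sets of sizes $3,2,2$ then leaves, up to automorphism, the single configuration $Q^+_1=\{v_1,v_3,v_5\}$, $Q^+_2=\{v_4,v_6\}$, $V^-=\{v_2,v_7\}$. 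The other type is colored by the classes $\{v_1,v_4,v_6\},\{v_3,v_7\},\{v_2,v_5\}$. So your open-ended enumeration over chore positions is unnecessary: adjacent chores and chores at distance $3$ never obstruct.

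The second missing piece is an explicit, value-dependent allocation for this configuration, together with its verification. The paper sets $T=\max(w(v_4),w(v_6))$. It observes that it suffices for each bundle to be brought both to value at most $T$ and to value at least $T$ by removing at most one item. It then splits on whether $w(v_5)+\max(w(v_2),w(v_7))\ge T$, using $\{v_3,v_6\},\{v_1,v_4\},\{v_2,v_5,v_7\}$ in the first case and $\{v_3,v_5,v_7\},\{v_2,v_4,v_6\},\{v_1\}$ in the second.

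These allocations fall outside the family you restrict to: the first puts both chores in the $3$-bundle, and the second has sizes $3,3,1$. Your idea of separating the chores can nonetheless be completed here. If $\min(w(v_3),w(v_5))\ge w(v_4)+w(v_6)$, take $\{v_1,v_4,v_6\},\{v_2,v_5\},\{v_3,v_7\}$. Otherwise take $\{v_2,v_4,v_6\}$, pair $v_7$ with the higher-valued of $v_3,v_5$, and pair $v_1$ with the other. The six ordered pairs are then checked directly, using $w(v_1),w(v_3),w(v_5)\ge T$ and, in the second case, $w(v_4)+w(v_6)>\min(w(v_3),w(v_5))$. Either way, some such concrete construction and check is required before the proposal is a proof.
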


\begin{proof}
Let $w: \calP(V) \to \R$ be an additive valuation. Write $V = V^+ \cup V^-$, where $V^+$ and $V^-$ are disjoint sets of goods and chores, respectively, under $w$. If $|V^+| \not\equiv 2 \pmod{3}$, then we can apply Theorem~\ref{thm:C_3m+1_t} to obtain that $C_7$ admits a feasible EF[1,1] allocation to three agents with respect to the common valuation $w$. It thus remains to handle the case where $|V^+|$ is either $2$ or $5$. Since EF[1,1] is invariant under replacing $w$ by $-w$, we may assume that $|V^+| = 5$. It follows that there exists a non-increasing ordered partition $(Q^+_1,Q^+_2)$ of $V^+$ with respect to the valuation $w$ restricted to $V^+$, satisfying $|Q^+_1|=3$ and $|Q^+_2|=2$, and that $|V^-|=2$.

Suppose first that there exists a proper $3$-coloring of $C_7$ that assigns distinct colors to the vertices of each of the parts $Q^+_1,Q^+_2,V^-$. By Lemma~\ref{lemma:EF11_P_Q}, applied with the ordered partitions $(Q^+_1,Q^+_2)$ and $(V^-)$ of goods and chores, respectively, a sequence of the color classes of this coloring forms a feasible EF[1,1] allocation of $C_7$ to three agents with respect to the common valuation $w$. We thus may assume that such a coloring does not exist. We observe that this implies that the sets $Q^+_1,Q^+_2,V^-$ are independent in $C_7$. Indeed, otherwise there exists an edge $e$ in $C_7$ whose vertices lie in the same set. By Lemma~\ref{lemma:chi_s^2_cycles}, the path $P_7$ is strongly $(3,2)$-colorable. It follows that the path obtained from $C_7$ by removing the edge $e$ admits a proper $3$-coloring that assigns distinct colors to the vertices of each of $Q^+_1,Q^+_2,V^-$, and since the endpoints of $e$ lie in the same set, the coloring is proper in $C_7$ as well. This contradicts our assumption that no such coloring exists.

Now, it is not difficult to observe that, up to an automorphism of the cycle, there are precisely two partitions of the vertex set of $C_7$ into three independent sets of sizes $3$, $2$, and $2$. Indeed, labeling the vertices by $v_1,v_2,\ldots,v_7$ along the cycle, the independent set of size $3$ can be mapped under such an automorphism to $\{v_1,v_3,v_5\}$. The remaining four vertices can be partitioned into two independent sets of size $2$ in exactly two ways, namely, as $\{v_2,v_6\}$ and $\{v_4,v_7\}$, or as $\{v_4,v_6\}$ and $\{v_2,v_7\}$. The former partition admits a proper $3$-coloring of $C_7$ that assigns distinct colors to the vertices of each part, as follows from the coloring with the color classes $\{v_1,v_4,v_6\}$, $\{v_3,v_7\}$, and $\{v_2,v_5\}$. We therefore may assume that $Q^+_1,Q^+_2,V^-$ are the sets of the latter partition. Moreover, by reversing the cycle if needed, we may also assume that
\[Q^+_1 = \{v_1,v_3,v_5\},~~Q^+_2 = \{v_4,v_6\},~~\mbox{and}~~V^- = \{v_2,v_7\}.\]

To show the existence of a feasible EF[1,1] allocation of $C_7$ to three agents with respect to the common valuation $w$, let $T = \max(w(v_4),w(v_6))$. Note that the value of every vertex in $Q^+_1$ is at least $T$, because the ordered partition $(Q^+_1,Q^+_2)$ of $V^+$ is non-increasing. We consider two cases.
In the first, suppose that
\[w(v_5) + \max(w(v_2),w(v_7)) \geq T.\]
Consider the allocation $(A_1,A_2,A_3)$ defined by
\[A_1 = \{v_3,v_6\},~~A_2=\{v_1,v_4\},~~\mbox{and}~~A_3=\{v_2,v_5,v_7\}.\]
These sets are clearly independent in $C_7$.
To verify that the allocation is EF[1,1] with respect to the common valuation $w$, it suffices to verify that, for each $i \in [3]$, one can remove at most one vertex from $A_i$ to obtain a set of value at most $T$, and one can remove at most one vertex from $A_i$ to obtain a set of value at least $T$. Then, for any two agents $i,i' \in [3]$, we can use the former removal from $A_{i'}$ and the latter removal from $A_{i}$ to derive the EF[1,1] condition.
Now, for $A_1$, remove $v_3$ to get value $w(v_6) \leq T$, and remove $v_6$ to get value $w(v_3) \geq T$; for $A_2$, remove $v_1$ to get value $w(v_4) \leq T$, and remove $v_4$ to get value $w(v_1) \geq T$; and for $A_3$, remove $v_5$ to get value $w(v_2)+w(v_7) \leq 0 \leq T$, and remove a vertex of minimum value among $v_2$ and $v_7$ to get value $w(v_5)+\max(w(v_2),w(v_7)) \geq T$.
This yields the desired EF[1,1] property.

Otherwise, it holds that
\[w(v_5) + \max(w(v_2),w(v_7)) < T.\]
In this case, consider the allocation $(A_1,A_2,A_3)$ defined by
\[A_1 = \{v_3,v_5,v_7\},~~A_2=\{v_2,v_4,v_6\},~~\mbox{and}~~A_3=\{v_1\}.\]
Again, the three sets are clearly independent in $C_7$. As before, it is enough to verify that, for each $i \in [3]$, the set $A_i$ can be brought to value at most $T$ and to value at least $T$ by removing at most one vertex.
For $A_1$, remove $v_3$ to get value $w(v_5)+w(v_7) \leq T$, and remove $v_7$ to get value $w(v_3)+w(v_5) \geq T$; for $A_2$, remove a vertex of maximum value among $v_4$ and $v_6$ to get value $w(v_2)+\min(w(v_4),w(v_6)) \leq \min(w(v_4),w(v_6)) \leq T$, and remove the vertex $v_2$ to get value $w(v_4)+w(v_6) \geq T$; and for $A_3$, remove $v_1$ to get value $0 \leq T$, and remove nothing to get value $w(v_1) \geq T$.
This completes the proof.
\end{proof}

\section{Algorithms for Fair Allocation under Conflict Constraints}\label{sec:algorithms}

In this section, we present a general framework for producing algorithms that find feasible fair allocations of graphs to agents with common preferences. To do so, we point out that the only non-constructive step in the proofs of our existential results lies in the strong coloring problem defined below.

\begin{definition}\label{def:strong_colo_prob}
Let $\calF$ be a collection of pairs, each consisting of a graph and a positive integer, and let $r$ be a positive integer.
In the {\em $(\calF,r)$-strong coloring problem}, the input consists of a pair $(G,\ell) \in \calF$ and a partition of the vertex set of $G$ into parts of size at most $\ell$, with at most $r$ parts of size strictly smaller than $\ell$, and the goal is to find a proper $\ell$-coloring of $G$ in which each color class intersects each part in at most one vertex.
\end{definition}
\noindent
Note that the existence of an algorithm for the $(\calF,r)$-strong coloring problem entails that every instance of the problem has a solution, equivalently, that every pair $(G,\ell) \in \calF$ satisfies $\ell \geq \chi_s^r(G)$.

We show that efficient algorithms for these strong coloring problems yield efficient fair allocation algorithms. The choice $r=1$ is appropriate for the fairness notions SD-EF1 and EF1, while the choice $r=2$ is appropriate for EF[1,1].

\begin{theorem}\label{thm:algo_gen}
Let $\calF$ be a collection of pairs, each consisting of a graph and a positive integer. 
\begin{enumerate}
  \item\label{itm:algo_1} If there exists a polynomial-time algorithm, deterministic or randomized, for the $(\calF,1)$-strong coloring problem, then there exists a polynomial-time algorithm, of the same type, for each of the following problems.
        \begin{enumerate}
            \item\label{itm:algSDGen} Given a pair $(G,\ell) \in \calF$ and a weak order $\succeq$ on the vertex set of $G$, find a feasible SD-EF1 allocation of $G$ to $\ell$ agents with respect to the common weak order $\succeq$.
            \item\label{itm:algEFGen} Given a pair $(G,\ell) \in \calF$ and a monotone additive valuation $w$ on the vertex set of $G$, find a feasible EF1 allocation of $G$ to $\ell$ agents with respect to the common valuation $w$.
        \end{enumerate}
  \item\label{itm:algo_2} If there exists a polynomial-time algorithm, deterministic or randomized, for the $(\calF,2)$-strong coloring problem, then there exists a polynomial-time algorithm, of the same type, for the problem of finding, given a pair $(G,\ell) \in \calF$ and an additive valuation $w$ on the vertex set of $G$, a feasible EF[1,1] allocation of $G$ to $\ell$ agents with respect to the common valuation $w$.
\end{enumerate}
\end{theorem}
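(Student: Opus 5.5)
The plan is to turn the existential proofs of Theorems~\ref{thm:SD_upper}, \ref{thm:intro_EF1_upper}, and \ref{thm:intro_EF11_upper} into algorithms. In each of those proofs, the only non-constructive step is the appeal to strong colorability. I would replace that step by a single call to the assumed algorithm for the $(\calF,r)$-strong coloring problem. All remaining steps are sorting and bookkeeping, which take polynomial time.

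For Item~\ref{itm:algSDGen}, given $(G,\ell) \in \calF$ and a weak order $\succeq$, the algorithm does the following:
\begin{enumerate}
  \item Sort $V$ into a non-increasing permutation with respect to $\succeq$, which takes $O(|V|\log|V|)$ comparisons.
  \item Cut the permutation into consecutive blocks $Q_1,\ldots,Q_t$ with $t=\lceil |V|/\ell\rceil$, all of size $\ell$ except possibly $Q_t$. This is a valid input to the $(\calF,1)$-strong coloring problem.
  \item Run the assumed algorithm on $(G,\ell)$ with this partition. It returns a proper $\ell$-coloring in which every color class meets every block in at most one vertex.
  \item Output the color classes $(A_1,\ldots,A_\ell)$.
\end{enumerate}
Since the blocks $Q_1,\ldots,Q_{t-1}$ have size exactly $\ell$, each color class meets each of them exactly once. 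Lemma~\ref{lemma:SD_Pj} then shows the output is SD-EF1, and it is feasible because the coloring is proper.

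For Item~\ref{itm:algEFGen}, if $w$ is non-increasing I would replace it by $-w$, which changes neither monotonicity nor the EF1 guarantee (as in the proof of Theorem~\ref{thm:intro_EF1_upper}). The algorithm then runs the previous procedure on the weak order induced by $w$. Correctness follows from Lemma~\ref{lemma:EF1_Pj}, or equivalently from Lemma~\ref{lemma:SD->EF1}.

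For Item~\ref{itm:algo_2}, the algorithm proceeds as follows:
\begin{enumerate}
  \item Split $V$ into goods $V^+=\{v: w(v)\ge 0\}$ and chores $V^-=V\setminus V^+$.
  \item Sort $V^+$ non-increasingly by $w$ and $V^-$ non-increasingly by $-w$.
  \item Chunk each sorted list into consecutive blocks of size $\ell$, with possibly one smaller last block in each list.
  \item Call the $(\calF,2)$-strong coloring algorithm on the union of the two block families, and output its color classes.
\end{enumerate}
The union of the two families is a partition of $V$ into parts of size at most $\ell$ with at most two smaller parts, so it is a valid input. Lemma~\ref{lemma:EF11_P_Q} gives EF[1,1] directly, and the proof does not need to separate the monotone case. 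If $V^+$ or $V^-$ is empty, that family is simply empty, which is fine.

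I expect no real obstacle, only two points to confirm. First, every constructed partition must be a legitimate instance of the strong coloring problem, in particular with non-empty parts and the correct count of deficient parts. Second, for randomized algorithms the reduction should preserve the algorithm's type. It does, because it makes exactly one oracle call and performs only deterministic polynomial-time pre- and post-processing, so any success probability or expected running time carries over unchanged.
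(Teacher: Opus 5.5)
Your proposal is correct and follows essentially the same route as the paper: sort by $\succeq$ or $w$ (or separately by $w$ on goods and by $-w$ on chores), cut the sorted list into consecutive blocks of size $\ell$, make one call to the $(\calF,r)$-strong coloring algorithm, and certify the output via Lemmas~\ref{lemma:SD_Pj}, \ref{lemma:EF1_Pj}, and~\ref{lemma:EF11_P_Q}. The only cosmetic difference is in Item~\ref{itm:algo_2}: the paper handles monotone $w$ by invoking Item~\ref{itm:algEFGen}, since an $(\calF,2)$ algorithm also solves $(\calF,1)$ instances, whereas you allow the goods or chores side to be empty, which is harmless because the relevant lemmas hold trivially on an empty side.
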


\begin{proof}
We begin with the proof of Item~\ref{itm:algo_1}. Suppose that there exists a polynomial-time algorithm for the $(\calF,1)$-strong coloring problem. We prove Items~\ref{itm:algSDGen} and~\ref{itm:algEFGen} together, indicating the differences when needed. Consider an input that consists of a graph $G=(V,E)$ and an integer $\ell$ with $(G,\ell) \in \calF$ as well as either a weak order $\succeq$ on $V$ or a monotone additive valuation $w: \calP(V) \to \R$. In the latter case, we assume below that $w$ is non-decreasing, since if it is non-increasing, applying the algorithm with $-w$ gives the desired allocation.

We define an algorithm that, given such an input, permutes the vertices of $V$ non-increasingly with respect to either the weak order $\succeq$ or the valuation $w$, and produces an ordered partition $(Q_1, \ldots, Q_t)$ of $V$ into $t = \lceil |V|/\ell \rceil$ parts of consecutive vertices in the permutation, each of size exactly $\ell$ except possibly the part $Q_t$ that may have a smaller size. Then, the algorithm calls the assumed algorithm on the pair $(G,\ell) \in \calF$ and the partition $Q_1, \ldots, Q_t$ of $V$, which includes at most one part of size smaller than $\ell$, to obtain a proper $\ell$-coloring of $G$ in which each color class intersects each part in at most one vertex. The algorithm returns a sequence $\calA = (A_1, \ldots, A_\ell)$ of the color classes of this coloring, ordered arbitrarily.

The returned sequence $\calA$ is a feasible allocation of $G$ to $\ell$ agents, because the given algorithm returns a proper $\ell$-coloring of $G$. Moreover, since $|Q_j|= \ell$ for all $j \in [t-1]$, and since the $\ell$ color classes intersect each part in at most one vertex, we have $|A_i \cap Q_j| = 1$ for all $i \in [\ell]$ and $j \in [t-1]$, and $|A_i \cap Q_t| \leq 1$ for all $i \in [\ell]$. For Item~\ref{itm:algSDGen}, Lemma~\ref{lemma:SD_Pj} yields that $\calA$ is SD-EF1 with respect to the common weak order $\succeq$, and for Item~\ref{itm:algEFGen}, Lemma~\ref{lemma:EF1_Pj} yields that $\calA$ is EF1 with respect to the common non-decreasing valuation $w$. This ensures the correctness of the algorithm. As for the running time, since the given algorithm for the $(\calF,1)$-strong coloring problem runs in polynomial time, so does ours.

We turn to the similar proof of Item~\ref{itm:algo_2}. Suppose that there exists a polynomial-time algorithm for the $(\calF,2)$-strong coloring problem. Consider an input that consists of a graph $G=(V,E)$ and an integer $\ell$ with $(G,\ell) \in \calF$, along with an additive valuation $w: \calP(V) \to \R$. We may assume that $w$ is not monotone, as otherwise, the desired allocation can be obtained by the algorithm of Item~\ref{itm:algEFGen}, since every $(\calF,1)$-strong coloring instance is also an $(\calF,2)$-strong coloring instance.

We define an algorithm that, given such an input, partitions $V$ into two sets $V^+$ and $V^-$, consisting of goods and chores, respectively, under $w$, and permutes the vertices of $V^+$ non-increasingly with respect to $w$ and the vertices of $V^-$ non-increasingly with respect to $-w$. Then the algorithm produces ordered partitions $(Q^+_1, \ldots, Q^+_t)$ of $V^+$ and $(Q^-_1, \ldots, Q^-_{t'})$ of $V^-$ into parts of consecutive vertices in the permutations, each of size exactly $\ell$ except possibly the parts $Q^+_t$ and $Q^-_{t'}$ that may have smaller sizes. It calls the assumed algorithm on the pair $(G,\ell) \in \calF$ and the partition $Q^+_1, \ldots, Q^+_t,Q^-_1, \ldots, Q^-_{t'}$ of $V$, which includes at most two parts of size smaller than $\ell$, to obtain a proper $\ell$-coloring of $G$ in which each color class intersects each part in at most one vertex. Finally, the algorithm returns a sequence $\calA = (A_1, \ldots, A_\ell)$ of the color classes of this coloring, ordered arbitrarily.

The analysis of the algorithm is similar to that of Item~\ref{itm:algo_1}. The returned allocation $\calA$ is feasible, because its sets are the color classes of a proper $\ell$-coloring of $G$, and it is EF[1,1] with respect to the common valuation $w$ by Lemma~\ref{lemma:EF11_P_Q}. The polynomial running time follows from that of the given algorithm for the $(\calF,2)$-strong coloring problem. This completes the proof.
\end{proof}

We demonstrate the applicability of Theorem~\ref{thm:algo_gen} in two settings. The first concerns feasible fair allocations of general graphs with the number of agents depending on the maximum degree. Recall that our existential results, combined with Theorem~\ref{thm:3Delta-1}, imply that every graph with maximum degree $\Delta$ admits feasible SD-EF1, EF1, and EF[1,1] allocations with respect to every common preference of the relevant kind, whenever the number of agents is at least $3\Delta-1$. To obtain an algorithmic analogue of this result, with a somewhat larger threshold, we employ the following result of Harris~\cite{Harris23}.

\begin{theorem}[\cite{Harris23}]\label{thm:3Delta_algo}
For any fixed $\eps >0$, there exists a deterministic polynomial-time algorithm that, given a graph $G$ with maximum degree $\Delta$, a positive integer $\ell \geq (3+\eps) \Delta$, and a partition of the vertex set of $G$ into parts of size exactly $\ell$, finds a proper $\ell$-coloring of $G$ in which each color class intersects each part in exactly one vertex.
\end{theorem}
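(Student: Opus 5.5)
This statement is imported from Harris~\cite{Harris23} and is used here as a black box, so the paper itself needs no proof of it. The following is a plan for how I would reconstruct it. It assumes, as a second black box, an algorithmic independent-transversal theorem: for fixed $\eps>0$ there is a deterministic polynomial-time algorithm that, given a graph of maximum degree $d$ and a vertex partition into blocks of size at least $(2+\eps)d$, finds an independent transversal. This is the algorithmic form of Haxell's $2\Delta$ theorem, obtained via~\cite{HarrisS17,GrafH20,GrafHH22}. The plan is to reduce strong coloring to polynomially many calls to this primitive, following the combinatorial reduction behind Haxell's bound $\chi_s(G)\le 3\Delta-1$ in Theorem~\ref{thm:3Delta-1}. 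That reduction is where the constant $3$ comes from: roughly $2\Delta$ for the independent transversal plus $\Delta$ for conflicts with the existing coloring.

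\emph{Step 1: induction on colored vertices.} Maintain a proper partial $\ell$-coloring in which each color appears at most once per part. Start from the empty coloring. In each round, extend the coloring to one more uncolored vertex $v$, say in part $P$, without uncoloring anything. There are $|V|$ rounds, so polynomially many suffice.

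\emph{Step 2: one round as an independent-transversal search.} Fix a color $c$ missing from $P$. I want to recolor a set $T$ of vertices to $c$, each vertex of $T$ in a different part, so that the following three things happen.
\begin{itemize}
\item $T$ is independent in $G$.
\item $T$ contains $v$.
\item In every other part, the vertex now receiving $c$ swaps colors with the old $c$-vertex of that part, so that each part still uses each color at most once.
\end{itemize}
Concretely, build an auxiliary graph $H$ on the vertices, with the given parts. Its edges are the edges of $G$ together with extra edges encoding the conflicts that a swap creates with the old color class of $c$ and with the old colors. An independent transversal of $H$, restricted so that $P$ contributes $v$, then yields a valid new partial strong coloring with one more colored vertex.

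The degree bookkeeping must give every vertex of $H$ degree at most about $\Delta$ in the relevant sense, with parts of size $\ell$, so that $\ell\ge(2+\eps)\cdot(\text{degree of }H)$ after discounting the $\Delta$ vertices blocked by conflicts. This is where the hypothesis $\ell\ge(3+\eps)\Delta$ must be spent.

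\emph{Main obstacle.} The hard part is making this degree accounting work. A naive construction of $H$ roughly doubles the degree, which would force $\ell\ge 4\Delta$ or worse. Haxell's argument avoids this by working with a partial transversal and alternating trees, not with a full transversal of a fixed auxiliary graph. So I would first try to mimic it by pruning from each part the at most $\Delta$ vertices whose recoloring to $c$ is blocked by a neighbor colored $c$. That leaves blocks of size at least $(2+\eps)\Delta$ in a subgraph of $G$ of degree at most $\Delta$, to which the primitive applies directly. The remaining care is ensuring that the pruning is consistent with forcing $v\in T$, and that swaps never create a new monochromatic edge. Both should follow because swapped vertices take colors already used legally within their own part.
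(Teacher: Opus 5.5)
\textbf{There is a genuine gap in the reconstruction.} The paper gives no proof of this statement; it is quoted from \cite{Harris23}, so your first sentence is accurate. Your reconstruction, however, breaks at two concrete points.

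\emph{First, the swap step.} You justify that swaps never create a monochromatic edge because swapped vertices take colors already used legally in their own part. That is wrong: a swap preserves the one-color-per-part condition, but not properness. Let $x_Q$ be the old $c$-vertex of a part $Q$, and let $u_Q\in T\cap Q$ have old color $d$. After the swap $x_Q$ has color $d$, and it may have a neighbor in another part that keeps color $d$.

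The constraint must be placed on $u_Q$: its old color must not occur on $N(x_Q)$. Since colors within $Q$ are distinct, this forbids at most $\Delta$ vertices of $Q$. It also suffices. The new class $c$ is exactly $T$. The vertices $x_Q$ that change color all lie in the old class $c$, so they are pairwise non-adjacent.

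The pruning you propose instead removes from each part the vertices with a $c$-colored neighbor. This is unnecessary, since every old $c$-vertex is either in $T$ or swapped out. It is also not bounded by $\Delta$: class $c$ may have a vertex in every other part, each with up to $\Delta$ neighbors, so all $\ell$ vertices of $Q$ can have a $c$-colored neighbor and the part can be emptied. (A minor point: if $u_Q$ is uncolored, then $x_Q$ becomes uncolored. So progress must be measured by the number of colored vertices, not by never uncoloring.)

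\emph{Second, and more seriously, forcing $v\in T$.} This is not bookkeeping; it is exactly where the constant is decided. After the correct pruning, the parts other than $P$ have at least $\ell-\Delta\ge(2+\eps)\Delta$ vertices. But you need an independent transversal with the singleton class $\{v\}$, and your assumed primitive (all classes of size at least $(2+\eps)d$) does not allow that. The obvious reduction is to delete $N(v)$. It can remove up to $\Delta$ further vertices from a single class, leaving only $\ell-2\Delta$, so your plan as written needs $\ell\ge(4+\eps)\Delta$.

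The missing ingredient is a Hall-type (domination) form of the algorithmic independent-transversal theorem, of the kind developed in \cite{GrafH20,GrafHH22} and underlying \cite{Harris23}. Such an algorithm returns either an independent transversal or a set $\calB$ of classes whose union is totally dominated by at most about $(2+\eps)(|\calB|-1)$ vertices. Under that condition the singleton class is harmless. If $\calB$ consists of $\{v\}$ and $k-1$ other classes, its union has at least $1+(k-1)(2+\eps)\Delta$ vertices. That is more than $(2+\eps)(k-1)$ vertices of degree at most $\Delta$ can dominate. This domination version is also what makes Haxell's existential argument give $3\Delta$ rather than $4\Delta$; your sketch needs to invoke it explicitly.
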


By combining Theorems~\ref{thm:algo_gen} and~\ref{thm:3Delta_algo}, we deduce that whenever the number of agents meets the above threshold, a feasible fair allocation with respect to common preferences can be found efficiently, as stated in Theorem~\ref{thm:intro_algos}.

\begin{proof}[ of Theorem~\ref{thm:intro_algos}]
Fix any $\eps >0$. Let $\calF$ be the family of all pairs $(G,\ell)$, where $G$ is a graph with maximum degree $\Delta$ and $\ell$ is a positive integer satisfying $\ell \geq (3+\eps)\Delta$. We claim that there exists a deterministic polynomial-time algorithm that, given a pair $(G,\ell) \in \calF$ and a partition of the vertex set of $G$ into parts of size at most $\ell$, finds a proper $\ell$-coloring of $G$ in which each color class intersects each part in at most one vertex. Indeed, given such an input, add to $G$ sufficiently many isolated vertices and use them to complete every part of size smaller than $\ell$ to a set of size exactly $\ell$. The resulting graph has the same maximum degree, and the resulting partition consists of parts of size exactly $\ell$. Then one can call the algorithm given in Theorem~\ref{thm:3Delta_algo} to obtain a proper $\ell$-coloring of the modified graph in which each color class intersects each part in exactly one vertex. Restricting this coloring to the original vertices of $G$ gives the desired coloring. In particular, the $(\calF,1)$- and $(\calF,2)$-strong coloring problems admit deterministic polynomial-time algorithms. The proof is completed by applying Theorem~\ref{thm:algo_gen}.
\end{proof}

Our second application of Theorem~\ref{thm:algo_gen} concerns the family of path graphs, addressed in~\cite{EGIKMNSVY26}, and generalizes a result of~\cite{KumarEGNV24}. We rely on the following algorithmic result for cycles due to Fleischner and Stiebitz~\cite{FleischnerS97}.

\begin{theorem}[\cite{FleischnerS97}]\label{thm:C_algo_4}
There exists a deterministic polynomial-time algorithm that, given a cycle graph $G$, an integer $\ell \geq 4$, and pairwise disjoint sets of vertices of $G$, each of size exactly $\ell$, finds a proper $\ell$-coloring of $G$ in which each color class intersects each set in exactly one vertex.
\end{theorem}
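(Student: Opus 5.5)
The plan is to prove a slightly stronger statement: the algorithm works for every graph $G$ of maximum degree at most $2$ (a disjoint union of paths and cycles), every $\ell \geq 4$, and every family of pairwise disjoint vertex sets of size exactly $\ell$. Proving this more general form makes an induction on $\ell$ possible, since deleting vertices from a cycle leaves a graph of maximum degree at most $2$. I would first settle the base case $\ell=4$ with an explicit construction based on two $2$-colorings, and then reduce the case $\ell>4$ to $\ell=4$ by peeling off color classes, in the spirit of the proof of Lemma~\ref{lemma:monotone_chi_r}.

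\textbf{Base case $\ell = 4$.}
\begin{enumerate}
\item First delete the vertices not covered by the given sets. The resulting graph $H$ has maximum degree at most $2$. If $H$ is a single cycle, it covers all $4k$ sets entirely, so its length is even.
\item Consequently, every component of $H$ is a path or an even cycle, so its edge set splits into two matchings $N_1$ and $N_2$; this takes linear time.
\item Split each set $\{a,b,c,d\}$ arbitrarily into two pairs, and let $M$ be the resulting perfect matching on the covered vertices.
\item Since $M \cup N_1$ is a union of two matchings, its components are paths and even cycles, so it is bipartite. Compute a proper $2$-coloring $f_1$ of it.
\item Because $M$ pairs the vertices of each set, $f_1$ splits each set into two vertices of color $0$ and two of color $1$. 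Let $M'$ be the matching that pairs, inside each set, the two vertices with equal $f_1$-value.
\item Again $M' \cup N_2$ is bipartite, so compute a proper $2$-coloring $f_2$ of it.
\item Color each vertex $v$ by the pair $(f_1(v),f_2(v)) \in \{0,1\}^2$.
\end{enumerate}
Every edge of $N_1$ is separated by $f_1$ and every edge of $N_2$ by $f_2$, so the coloring of $H$ is proper. Within a set, vertices with different $f_1$-values get different colors, and the two vertices sharing an $f_1$-value are separated by $f_2$. Hence the four vertices of each set receive four distinct colors. Finally, the deleted uncovered vertices are put back one at a time and colored greedily. Each has at most two neighbors, so one of the $4$ colors is always free, and these vertices lie in no set, so no set constraint is affected.

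\textbf{Reduction for $\ell > 4$.} In each set, choose an arbitrary $4$-subset. Apply the base case to the subgraph induced by the union of these $4$-subsets, which has maximum degree at most $2$. Take one color class $C$ of the resulting coloring. It is independent in $G$, since the subgraph is induced, and it meets every chosen $4$-subset, hence every set, in exactly one vertex.

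Next, delete $C$. The remaining graph has maximum degree at most $2$ and the remaining sets have size exactly $\ell-1 \geq 4$, so we recurse with $\ell-1$ colors. At the end, give $C$ a fresh color. Any vertices outside all sets are handled at the base level as above.

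This uses at most $\ell$ rounds, each running in polynomial time. The resulting coloring is a proper $\ell$-coloring in which each color class meets each set exactly once.

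I expect the main obstacle to be the base case, specifically ensuring that $M\cup N_1$ and $M'\cup N_2$ are bipartite. This is exactly why the cycle edges must split into two matchings. That split fails for odd cycles, and I handle it by removing uncovered vertices and by the parity argument when $H$ is a single cycle.
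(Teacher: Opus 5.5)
The paper does not prove this statement itself. It imports the algorithm of Fleischner and Stiebitz~\cite{FleischnerS97}, which assumes that no input set contains a cycle edge, and then removes that assumption in the subsequent remark by subdividing such edges. You instead give a self-contained argument: a two-round matching-and-$2$-coloring construction for $\ell=4$, followed by an algorithmic version of the peeling argument from the proof of Lemma~\ref{lemma:monotone_chi_r} to pass from $\ell-1$ to $\ell$. Both parts are sound.

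The base case correctly gives four distinct colors on every set. It also handles sets containing cycle edges with no extra work. If a pair of $M$ is an edge of $N_1$, or a pair of $M'$ is an edge of $N_2$, the union of the two matchings just gains a $2$-cycle and stays bipartite, so the subdivision trick is unnecessary. In the peeling step, $C$ is independent because the subgraph is induced, it meets every set exactly once, and the sets shrink to size exactly $\ell-1$. Your route buys an elementary, fully explicit algorithm built from a sequence of bipartite $2$-colorings; the citation buys the paper brevity.

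The one thing to correct is the announced strengthening. Your argument does not prove the statement for every graph of maximum degree at most $2$. The step asserting that every component of $H$ is a path or an even cycle fails, for example, for the disjoint union of a triangle and a $5$-cycle covered by two $4$-sets. There $H$ is the whole graph, and its edge set does not split into two matchings. Your parity argument only covers the case in which $H$ is a single cycle.

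The fix is to run the induction over graphs that are either a cycle or a disjoint union of paths. This class is closed under your operations:
\begin{itemize}
\item for $\ell>4$, the chosen $4$-subsets miss a vertex of every set, so they induce a union of paths in a cycle;
\item deleting a nonempty $C$ from a cycle leaves a union of paths;
\item induced subgraphs of unions of paths are unions of paths.
\end{itemize}
For graphs in this class, $H$ is always a union of paths or a cycle of length $4k$, and the rest of your argument applies verbatim.

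Finally, on running time: $\ell\leq |V(G)|$ as soon as one set is given, and with no sets a greedy coloring suffices. So the number of rounds is polynomial even when $\ell$ is encoded in binary.
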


\begin{remark}
The above result is presented in~\cite{FleischnerS97} under the further assumption that no input set contains an edge of the cycle graph. This assumption can be avoided as follows. Subdivide every cycle edge whose endpoints lie in the same input set, apply the algorithm of~\cite{FleischnerS97} to the resulting cycle together with the same input sets, and restrict the obtained coloring to the original vertices. The restriction is proper, because the endpoints of every subdivided edge lie in the same input set and hence receive distinct colors.
\end{remark}

By combining Theorems~\ref{thm:algo_gen} and~\ref{thm:C_algo_4}, we obtain efficient algorithms for finding feasible fair allocations of path graphs to at least four agents with common preferences. Recall that the corresponding existential guarantees hold already for three agents.

\begin{theorem}\label{thm:paths_algo}
There exists a deterministic polynomial-time algorithm for each of the following problems.
\begin{enumerate}
  \item Given a path graph $G$, an integer $\ell \geq 4$, and a weak order $\succeq$ on the vertex set of $G$, find a feasible SD-EF1 allocation of $G$ to $\ell$ agents with respect to the common weak order $\succeq$.
  \item Given a path graph $G$, an integer $\ell \geq 4$, and a monotone additive valuation $w$ on the vertex set of $G$, find a feasible EF1 allocation of $G$ to $\ell$ agents with respect to the common valuation $w$.
  \item Given a path graph $G$, an integer $\ell \geq 4$, and an additive valuation $w$ on the vertex set of $G$, find a feasible EF[1,1] allocation of $G$ to $\ell$ agents with respect to the common valuation $w$.
\end{enumerate}
\end{theorem}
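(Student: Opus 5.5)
The plan is to apply Theorem~\ref{thm:algo_gen} with the family $\calF$ of all pairs $(G,\ell)$, where $G$ is a path graph and $\ell \geq 4$ is an integer. Items~\ref{itm:algSDGen} and~\ref{itm:algEFGen} of Theorem~\ref{thm:algo_gen} then give the first two problems, and Item~\ref{itm:algo_2} gives the third. It therefore suffices to give a deterministic polynomial-time algorithm for the $(\calF,2)$-strong coloring problem. This also solves the $(\calF,1)$-strong coloring problem, because every instance with at most one deficient part is also an instance with at most two.

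The algorithm proceeds as follows. The input is a path $G$ on vertex set $V$, an integer $\ell \geq 4$, and a partition of $V$ into parts of size at most $\ell$, at most two of which have size smaller than $\ell$.

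\textbf{Step 1: pad the deficient parts.} Add at most $2\ell$ new vertices. Use them to complete every part of size smaller than $\ell$ to size exactly $\ell$.

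\textbf{Step 2: embed in a cycle.} Insert the new vertices into a long path extending an endpoint of $G$, and then close this extended path into a cycle $C$ by adding one edge. The new edges can only add constraints. Hence any proper coloring of $C$ restricts to a proper coloring of $G$, since the original path is a subgraph of $C$.

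\textbf{Step 3: color the cycle.} Now the vertex set of $C$ is partitioned into pairwise disjoint sets, each of size exactly $\ell$. Run the algorithm of Theorem~\ref{thm:C_algo_4} on $C$, $\ell$, and these sets. (The Remark following it allows sets containing cycle edges.) This returns a proper $\ell$-coloring of $C$ in which each color class meets each set in exactly one vertex.

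\textbf{Step 4: restrict to $G$.} Restrict the coloring to $V$. It is a proper $\ell$-coloring of $G$, and each color class meets each original part in at most one vertex, because each original part is contained in one of the padded sets.

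The only point needing care is that $C$ must be a genuine simple cycle. This requires at least three vertices, and the added edge must not duplicate an existing edge. I would guarantee this by always inserting at least one new vertex: if no part is deficient, add a fresh dummy block of $\ell$ isolated vertices as an extra set. The path plus at least $\ell \geq 4$ extra vertices then closes into a simple cycle.

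All steps are polynomial. Composing this algorithm with Theorem~\ref{thm:algo_gen} yields the three algorithms in the statement.
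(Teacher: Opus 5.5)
Your proposal is correct and follows essentially the same route as the paper. Both reduce via Theorem~\ref{thm:algo_gen} to the $(\calF,1)$- and $(\calF,2)$-strong coloring problems for paths with $\ell \geq 4$, pad the deficient parts with new vertices attached as a path at one endpoint, close the result into a cycle, apply Theorem~\ref{thm:C_algo_4}, and restrict the coloring to the original vertices. Your extra dummy block when no part is deficient is harmless but unnecessary: in that case every part has size $\ell \geq 4$, so the path already has at least four vertices and closes into a simple cycle.
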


\begin{proof}
Let $\calF$ be the family of all pairs $(G,\ell)$, where $G$ is a path graph and $\ell$ is an integer satisfying $\ell \geq 4$. We claim that there exists a deterministic polynomial-time algorithm that, given a pair $(G,\ell) \in \calF$ and a partition of the vertex set of $G$ into parts of size at most $\ell$, finds a proper $\ell$-coloring of $G$ in which each color class intersects each part in at most one vertex. Indeed, given such an input, extend the path $G$ by attaching a path on sufficiently many new vertices to one of its endpoints, and use these new vertices to complete every part of size smaller than $\ell$ to a set of size exactly $\ell$. Let $G'$ be the cycle obtained by connecting the two endpoints of the resulting path, and note that the completed parts form a partition of the vertex set of $G'$ into parts of size exactly $\ell$. We may therefore apply the algorithm from Theorem~\ref{thm:C_algo_4} to obtain a proper $\ell$-coloring of $G'$ in which each color class intersects each completed part in exactly one vertex. Restricting this coloring to the original vertices of $G$ gives the desired coloring. In particular, the $(\calF,1)$- and $(\calF,2)$-strong coloring problems admit deterministic polynomial-time algorithms. The proof is completed by applying Theorem~\ref{thm:algo_gen}.
\end{proof}

\section{Concluding Remarks}\label{sec:remarks}

This paper studies fair allocation of indivisible items under conflict constraints for agents with common preferences, focusing on several standard fairness criteria. Our results reveal a close connection between these allocation problems and a hierarchy of strong chromatic numbers that we introduce. For stochastic-dominance envy-freeness up to one item (SD-EF1), where the agents are equipped with a common weak order, this connection is exact, as the strong chromatic number of type $1$ of the conflict graph precisely captures the threshold on the number of agents guaranteeing a feasible SD-EF1 allocation. For envy-freeness up to one item (EF1), when the agents share a monotone additive valuation, the same graph quantity yields a sufficient condition for a feasible EF1 allocation. For envy-freeness up to one item from each side (EF[1,1]), when the agents share an additive valuation that is not necessarily monotone, our sufficient condition for a feasible EF[1,1] allocation employs the strong chromatic number of type $2$. In contrast to the SD-EF1 situation, the sufficient conditions for EF1 and EF[1,1] are not necessary in general, suggesting a more delicate picture. A natural challenge is thus to obtain a refined graph-theoretic characterization in these two settings.

Another natural direction for future research would be to determine the numbers of agents that guarantee feasible fair allocations for a given conflict graph, when the agents are allowed to have heterogeneous preferences. For SD-EF1, we demonstrate that the strong chromatic number of type $1$ fails to provide an appropriate threshold already for two agents. For EF1 and EF[1,1], however, no analogous two-agent obstruction arises, because if a feasible bipartition is fair in both directions with respect to one agent's valuation, then, possibly after swapping the two parts, it is also fair for the other agent with respect to her own valuation. It would be intriguing to better understand how the diversity of the agents' preferences affects the number of agents needed to guarantee feasible fair allocations.

Finally, the link between fair allocation and strong colorability suggests considering the related notion of strong choosability in this context. For a positive integer $\ell$, a graph is called {\em strongly $\ell$-choosable} if for every partition of its vertex set into parts of size at most $\ell$, the graph obtained by adding a clique on each part is $\ell$-choosable, that is, every assignment of a list of $\ell$ colors to each vertex admits a proper coloring that assigns to each vertex a color from its list (see, e.g.,~\cite{Gutner92,AharoniBZ07}). Some results on strong colorability have choosability analogues. For instance, Fleischner and Stiebitz~\cite{FleischnerS92} established their `cycle plus triangles' theorem through its choosability counterpart, showing that the cycle $C_{3m}$ is even strongly $3$-choosable. Aharoni, Berger, and Ziv further proposed in~\cite{AharoniBZ07} a choosability variant of the `strong $2\Delta$-colorability' conjecture. Finding applications of strong choosability to fair allocation seems an interesting avenue to explore.

\bibliographystyle{abbrv}
\bibliography{fair_conflict}

\end{document}